\crefname{theorem}{Theorem}{Theorems}
\crefname{proposition}{Proposition}{Propositions}
\crefname{lemma}{Lemma}{Lemmas}
\crefname{exmp}{Example}{Examples}
\crefname{corollary}{Corollary}{Corollaries}
\crefname{claim}{Claim}{Claims}
\crefname{remark}{Remark}{Remarks}
\crefname{section}{Section}{Sections}
\crefname{definition}{Definition}{Definitions}
\crefname{example}{Example}{Examples}
\crefname{table}{Table}{Tables}
\crefname{appendix}{Appendix}{Appendices}
\crefname{equation}{Equation}{Equations}
\crefname{algorithm}{Algorithm}{Algorithms}
\crefname{subsection}{Subsection}{Subsections}
\newcommand{\cl}{{\rm cl}}
\newcommand{\conv}{{\rm conv}}
\newcommand{\Lnatural}{L$\sp{\natural}$}
\newcommand{\proj}{\pi}
\newcommand{\avgd}{\mu}
\newcommand{\median}{{\rm median}}
\newcommand{\aff}{{\rm aff}}
\renewcommand{\mid}{\,:\,}
\newcommand{\avg}[1]{\langle{#1}\rangle_{\avgd}}
\newcommand{\utvpi}{{\rm UTVPI}}
\newcommand{\be}{\subseteq}
\newcommand{\clq}[1]{{\rm cl}_{\bf{Q}}({#1})}
\newcommand{\douti}{\Leftrightarrow}
\newcommand{\nr}{\Rightarrow}
\newcommand{\bb}[1]{{\mathbb{#1}}}
\newcommand{\maj}{{\rm maj}}
\newcommand{\dom}{{\rm dom}}
\def\iddots{\mathinner{\mkern1mu\raise\p@
		\hbox{.}\mkern2mu\raise4\p@\hbox{.}\mkern2mu
		\raise7\p@\vbox{\kern7\p@\hbox{.}}\mkern1mu}}
\newtheorem{theorem}{Theorem}[section]
\newtheorem{lemma}[theorem]{Lemma}
\newtheorem{proposition}[theorem]{Proposition}
\newtheorem{corollary}[theorem]{Corollary}
\newtheorem{definition}[theorem]{Definition}
\newtheorem{example}[theorem]{Example}
\newtheorem{remark}[theorem]{Remark}
\title{Characterizing the integer points in 2-decomposable polyhedra by closedness under operations}
\author{Kei Kimura, Kazuhisa Makino, Shota Yamada, and Ryo Yoshizumi}%alphabetical order
\begin{document}
\maketitle
\begin{comment}
ToDo
\begin{itemize}
 \item max-closed
 \item closure-operatorness of  $join\ cl_p(\pi_{ij}(S))$
 \item ...
\end{itemize}
\end{comment}

\begin{abstract}
Characterizing the solution sets in a problem by closedness under operations is recognized as one of the key aspects of algorithm development, especially in constraint satisfaction.
An example from the Boolean satisfiability problem is that the solution set of a Horn conjunctive normal form (CNF) is closed under the minimum operation, and this property implies that minimizing a nonnegative linear function over a Horn CNF can be done in polynomial time.
In this paper, we focus on the set of integer points (vectors) in a polyhedron, and study the relation between these sets and closedness under operations from the viewpoint of 2-decomposability.
By adding further conditions to the 2-decomposable polyhedra, we show that important classes of sets of integer vectors in polyhedra are characterized by 2-decomposability and closedness under certain operations, and in some classes, by closedness under operations alone.
The most prominent result we show is that the set of integer vectors in a unit-two-variable-per-inequality polyhedron can be characterized by closedness under the median and directed discrete midpoint operations, each of these operations was independently considered in constraint satisfaction and discrete convex analysis.
\end{abstract}

\section{Introduction}
\label{sec:introduction}

In this paper, we study the set of integer points (vectors) in a polyhedron, where 
a polyhedron is defined as a set of real vectors satisfying a system of linear inequalities.
Polyhedra have been intensively studied in mathematical optimization and computer science, where they arise as constraints in (integer) linear programming problems~\cite{Sch86,Van20}.
Since integer feasibility for a given polyhedron, i.e., determining whether there exists an integer vector in the polyhedron, is NP-hard, a special attention has been paid to polyhedra described by a system of linear inequalities of special form.
A typical example is polyhedra described by a difference constraint (DC) system (which we call DC polyhedra), where each inequality is of the form $x_i-x_j \ge c$ or $\pm x_i \ge c$ for some integer $c$.
Integer feasibility for DC polyhedra is known to be solvable in polynomial time by, e.g., the Bellman-Ford algorithm for the shortest path problem~\cite{Bel58,For56}.
Also, the set of integer vectors in a DC polyhedron coincides with a so called \Lnatural-convex set in discrete convex analysis and can be characterized by closedness under certain operations (see, e.g., \cite{Murota03} and \cref{thm:characterization-by-closure-property} in \cref{sec:pohyhedra-closure}).
For more examples, see Related Work below.
In this paper, we push the investigation of such a characterization further.

We focus on the set of integer vectors in a polyhedron, and study the relation between these sets and closedness under operations from the viewpoint of 2-decomposability.
2-decomposability is a concept proposed for the constraint satisfaction problem (CSP) studied in artificial intelligence and computer science, which is the computational problem of determining whether it is possible to assign values to variables while satisfying all given constraints.
Intuitively, we say that the set $S$ of integer vectors is 2-decomposable if, whenever each two-dimensional projection of a vector is in the corresponding projection of $S$, it is in $S$.
By definition, a 2-decomposable set of integer vectors can be written by constraints on two variables (although the converse does not hold in general).
Therefore, problems with 2-decomposability have attracted attention as an easy-to-handle class in the CSP.

2-decomposability has attracted attention for other reasons as well.
In particular, it has become clear that 2-decomposability is associated with closedness under operations, an important tool in analyzing the computational complexity of the CSP.
Intuitively, a set is closed under an operation when any vector obtained by applying the operation to multiple vectors in the set is also in the set.
Jeavons, Cohen, and Cooper showed that it is equivalent for all sets generated by applying certain relational operations from set $S$ to be 2-decomposable and for $S$ to be closed under a majority operation~\cite{JCC98} (see also \cref{thm:closed-near-unanimity<=>decompo} in \cref{sec:preliminaries}).
Here, a majority operation is a ternary operation that, whenever two arguments have the same value, then that value must be returned by the operation; in all other cases any value may be returned.

While the set of integer vectors in a polyhedron with 2-decomposability does not correspond exactly to the set of integer vectors closed under a majority operation, we show that by adding further conditions important classes of sets of integer vectors in polyhedra are characterized by 2-decomposability and closedness under certain operations, and in some classes, by closedness under certain operations alone.
The most prominent result we show is that the set of integer vectors in a unit-two-variable-per-inequality (UTVPI) polyhedron can be characterized by closedness under the median and directed discrete midpoint operations, each of these operations was independently considered in the CSP and discrete convex analysis.
Here, a polyhedron is called UTVPI if each inequality is of the form $\pm x_i \pm x_j \ge c$ or $\pm x_i \ge c$ for some integer $c$.
We have also clarified the relationship with the classes treated in discrete convex analysis such as integrally convex sets.

We further characterize 2-decomposability itself.
As noted above, in the known result in order to capture 2-decomposability by closedness under operations, it is necessary to assume 2-decomposability with respect to the infinite set generated by $S$. 
We observe that, in fact, closedness under operations cannot directly characterize 2-decomposability.
We then show that 2-decomposability can be characterized by introducing \emph{partial} operations and suitable closedness under such operations.

\subsection*{Related work}
In the CSP, the relationship between representation of the solution set and its closedness under operations is well studied, especially when the domain is finite.
In the finite domain Boolean case, the problem is a generalization of Boolean satisfiability problem (SAT), and the following results are known.
Let $n$ be a positive integer.
\begin{itemize}
\item A subset of $\{0,1\}^n$ is the solution set of a CNF where each clause has at least one negated 
 (resp., unnegated) variable if and only if it is closed under the zero (resp., one) operation.
\item A subset of $\{0,1\}^n$ is the solution set of a Horn (resp., dual-Horn) CNF if and only if it is closed under the minimum (resp., maximum) operation~\cite{McK43,Hor51}.
\item A subset of $\{0,1\}^n$ is the solution set of a 2-CNF if and only if it is closed under the majority operation~\cite{Sch78}.
\item A subset of $\{0,1\}^n$ is the solution set of a system of linear equations over GF(2) if and only if it is closed under the affine operation~\cite{Sch78}.
\end{itemize}
Here, a \emph{conjunctive normal form} (CNF) is conjunction of one or more clauses, a \emph{clause} is a disjunction of literals, and a \emph{literal} is an unnegated or negated variable.
A solution of a CNF is a truth assignment to the variables such that every clause has at least one true literal.
As is often done in the area of SAT, we identify true value with one and false value with zero.
The \emph{zero} (resp., \emph{one}) operation is a unary operation that always returns zero (resp., one).
A CNF is called \emph{Horn} (resp., \emph{dual-Horn}) if each clause of it has at most one unnegated (resp., negated) variable.
A CNF is called a \emph{2-CNF} if each clause of it has at most two variables.
The affine operation $\aff_2:\{0,1\}^3 \to \{0,1\}$ is define as $\aff_2(x,y,z) = x-y+z \pmod 2$.

In the finite domain non-Boolean case, the following results are known.
Let $D$ be a finite set.
\begin{itemize}
\item A subset of $D^2$ is 0/1/all if and only if it is closed under the dual discriminator operation~\cite{JCG95}.
\item When $D$ is a totally ordered set, a subset of $D^2$ is the solution set of a conjunctions of $(x_1 \ge d_1)\vee (x_2 \ge d_2)$, $(x_1 \ge d_1)\vee (x_2 \le d_2)$, $(x_1 \le d_1)\vee (x_2 \ge d_2)$, $(x_1 \le d_1)\vee (x_2 \le d_2)$, and $(x_i \le d_1)\vee (x_i \ge d_2)$ ($i=1,2$) if and only if it is closed under the median operation~\cite{JCC98,CJJ00}.
\item When $D$ is a totally ordered set, a subset of $D^n$ is the solution set of a conjunctions of the form $(x_1 < d_1)\vee (x_2 < d_2) \vee \dots \vee (x_r < d_r) \vee (x_i > d_i)$ if and only if it is closed under the minimum operation~\cite{JeC95}.
\item When $D=\{0,1,\dots, p-1\}$ for some prime $p$, a subset of $D^n$ is the solution set of a system of linear equations over GF($p$) if and only if it is closed under the affine operation~\cite{JCG95}.
\end{itemize}
Here, a subset $S$ of $D^2$ is called \emph{0/1/all} if, for each $d \in D$, $|\{ d'\in D \mid (d,d') \in S \}|$ equals either $0,1$, or $|\proj_2(S)|$, and symmetrically, for each $d \in D$, $|\{ d'\in D \mid (d',d) \in S \}|$ equals either $0,1$, or $|\proj_1(S)|$, where $\proj_i$ is a projection on the $i$th coordinate.
The dual discriminator operation $\theta$ is a ternary operation $\theta: D^3 \to D$ such that $\theta(d_1,d_2,d_3) =d_1$ if $d_1=d_2$ and $\theta(d_1,d_2,d_3) =d_3$ otherwise.
The median operation will be defined in \cref{sec:preliminaries}.
The affine operation $\aff_{p}:D^3 \to D$ is define as $\aff_p(x,y,z) = x-y+z \pmod p$.

In the infinite domain case, not much research has been done. 
It is worth noting that in the infinite case, no discussion about representation is possible without specifying how the constraints are given.
The following results are known.

\begin{itemize}
\item A semilinear set of $\bb{Q}^n$ is the solution set of semilinear Horn formula if and only if it is closed under the maximum operation~\cite{BoM18a}.
\item A semilinear set of $\bb{Q}^n$ is the solution set of semilinear tropically convex formula if and only if it is closed under the 
maximum operation and $+q$ operation for all $q \in \bb{Q}$~\cite{BoM18a}.
\item A semilinear set of $\bb{Q}^n$ is the solution set of a conjunction of bends if and only if it is closed under the median operation~\cite{BoM18b}.
\item A subset of $\bb{Z}^n$ is the solution set of a DC system if and only if it is closed under $g(x,y) = \lceil \frac{x+y}{2} \rceil$ and $h(x,y) = \lfloor \frac{x+y}{2} \rfloor$~\cite{Murota03}.
\item A subset of $\bb{Z}^n$ is the solution set of a system of linear inequalities of the form $x_i - x_j \ge c$ if and only if it is closed under the maximum, minimum, +1, and -1 operations~\cite{Murota03}.
\end{itemize}
Here, a subset of $\bb{Q}^n$ is \emph{semilinear} if it is a finite union of finite intersections of (open or closed) linear half spaces.
A \emph{semilinear Horn formula} is a finite conjunction of \emph{semilinear Horn clauses}, this is, finite disjunctions of the form $\bigvee_{i=1}^m(a^{(i)})^T x \succ_i c_i$ where $a^{(1)}, \dots, a^{(m)} \in \bb{Q}^n$ and there exists a $k \le n$ such that $a^{(i)}_j \ge 0$ for all $i$ and $j \neq k$, $\succ_i \in \{ \ge, > \}$, and $c_1, \dots c_m \in \bb{Q}$.
A \emph{semilinear tropically convex formula} is a finite conjunction of \emph{basic semilinear tropically convex formula}, this is, finite disjunctions of the form $\bigvee_{i=1}^m(a^{(i)})^T x \succ_i c_i$ where $a^{(1)}, \dots, a^{(m)} \in \bb{Q}^n$ and there exists a $k \le n$ such that $a^{(i)}_j \ge 0$ for all $i$ and $j \neq k$, $\succ_i \in \{ \ge, > \}$, $c_1, \dots c_m \in \bb{Q}$, and $\sum_{j}a^{(i)}_j = 0$ for all $i$.
For each $q \in \bb{Q}$, the $+q$ operation takes $x \in \bb{Q}$ as input and outputs $x+q$.
A \emph{bend} is a formula of the form $(x \circ_1 d_1) \vee (a_1x + a_2y \circ c) \vee (y \circ_2 d_2)$ where $\circ \in \{ \le, < \}$, $\circ_1, \circ_2 \in \{ \le, <, \ge, > \}$, $a_1,a_2 \in \bb{Q} \setminus \{0\}$, and $c,d_1,d_2 \in \bb{Q} \cup \{-\infty, +\infty \}$ are such that $\circ_i \in \{ \le, < \}$ if and only if $a_i > 0$ for $i =1,2$.

In summary, previous research has mainly been concerned with the representation of sets closed under a majority, the minimum (maximum), and the affine operations.

\subsection*{Outline}
The rest of the paper is organized as follows. 
\cref{sec:preliminaries} introduces the various definitions used in the paper.
In \cref{sec:2-decomp-closure}, we analyze the relationship between 2-decomposability and closure properties.
In \cref{sec:pohyhedra-closure}, we investigate the relationship between the sets of integer vectors in 2-decomposable polyhedra and closedness under operations.
In \cref{sec:2-decomp-characterization}, we characterize 2-decomposability by a certain closedness under partial operations.
Finally, we conclude the paper in \cref{sec:conclusion}.

\section{Preliminaries}
\label{sec:preliminaries}

%Let $\mathbb{Z}$, $\mathbb{Q}$, and $\mathbb{R}$ denote the sets of integers, rationals, and reals, respectively.
Let $\mathbb{Z}$ and $\mathbb{R}$ denote the sets of integers and reals, respectively.
%Let $\bb{Z}$ denotes the sets of integers.
We assume that $n$ is an integer greater than one throughout the paper.
We also assume that $k$ is a positive integer, unless otherwise specified, throughout the paper.

We will analyze the relationship between polyhedral representations of a set of integer vectors and closedness under operations in terms of 2-decomposability.

First, we define 2-decomposability of a set $S \be \bb{Z}^n$.
For $\{ i_1,\dots, i_\ell \} \be \{1, \dots, n\}$ and a vector $x = (x_1,\dots, x_n) \in \bb{Z}^n$, the projection $\proj_{i_1,\dots, i_\ell}(x)$ is defined to be the $\ell$-dimensional vector $(x_{i_1},\dots, x_{i_\ell})$.
Moreover, for a set $S \be \bb{Z}^n$, 
let $\proj_{i_1,\dots, i_\ell}(S)=\{\proj_{i_1,\dots, i_\ell}(x) \mid x \in S \}$.
For $S_{i,j}\be \bb{Z}^2$ ($1\le i<j \le n$), the join $\Join_{1\le i<j\le n} S_{i,j} \be \bb{Z}^n$ is defined as 
$\Join_{1\le i<j\le n} S_{i,j}= \{ x\in \bb{Z}^n \mid \pi_{i,j}(x) \in S_{i,j} (1\le i<j\le n) \}$.

\begin{definition}[2-decomposability~\cite{JCC98}]
%An $r$-ary relation $R$ over domain $D$ is said to be \emph{$k$-decomposable} if it contains all $r$-tuples $t$ such that $\proj_I(t) \in \proj_I(R)$ for all lists of indices $I$ from the set $\{1,\dots, r\}$ with $|I|\le k$.
%Let $D$ be a set.
A set $S \subseteq \bb{Z}^n$ is said to be \emph{$2$-decomposable} if it contains all vectors $x$ such that $\proj_{i,j}(x) \in \proj_{i,j}(S)$ for all $\{i,j\} \be \{1,\dots, n\}$, or equivalently, $S=\Join_{1\le i<j\le n} \proj_{i,j}(S)$.
\end{definition}

%Although we had defined 2-decomposability for any set $D$, we mostly pay particular attension where $D=\mathbb{Z}$ except \cref{sec:2-decomp-characterization}.

For any ($k$-ary) operation $f:\mathbb{Z}^k \to \mathbb{Z}$ and any $x^{(1)},\dots , x^{(k)} \in \mathbb{Z}^n$, where $x^{(i)} = (x_1^{(i)},\dots , x_n^{(i)})$, define $f(x^{(1)},\dots , x^{(k)})$ to be $(f(x_1^{(1)},\dots, x_1^{(k)}),\dots , f(x_n^{(1)},\dots, x_n^{(k)}))\in \bb{Z}^n$.
Namely, $f(x^{(1)},\dots , x^{(k)})$ is obtained by a componentwise application of $f$ to $x^{(1)},\dots , x^{(k)}$.
We will characterize a subset $S \subseteq \mathbb{Z}^n$ in terms of closedness under operations.

\begin{definition}[Closedness under operation]
Let $S$ be a subset of $\mathbb{Z}^n$ and 
$f:\mathbb{Z}^k \to \mathbb{Z}$ be an operation.
We say $S$ is \emph{closed under} $f$, or \emph{$f$-closed} for short,  
%(or $f$ \emph{preserves} $S$, $f$ is a \emph{polymorphism of} $S$, or $S$ is \emph{invariant under} $f$) 
if for all $x^{(1)},\dots , x^{(k)} \in S$, 
\begin{align*}
f(x^{(1)},\dots , x^{(k)}) \in S.
\end{align*}
\end{definition}

We particularly focus on the following operations.
The following one is introduced in~\cite{TaT21}.

\begin{definition}
The \emph{directed discrete midpoint} operation 
$\avgd: \mathbb{Z}^2 \rightarrow \mathbb{Z}$ is defined as follows.
For each $x,y \in \bb{Z}$, 

\begin{align*}
\avgd(x,y) = 
\begin{cases}
\lceil \frac{x+y}{2} \rceil & \text{if $x \ge y$}\\
\lfloor \frac{x+y}{2} \rfloor & \text{if $x < y$}.
\end{cases}
\end{align*}
Namely, if the average $\frac{x+y}{2}$ of $x$ and $y$ is an integer, then $\avgd$ outputs the integer, and otherwise, it outputs one of the two integers obtained by rounding $\frac{x+y}{2}$ that is ``close'' to $x$.
\end{definition}

\if0
\begin{proof}
Now, $\avg{S}=\bigcap T$ where $T$ is any set which includes $S$ and is closed by $\avgd$, 
$\avg{S}$ also includes $S$. 
And for any $x,y\in \avg{S}$, 
since, $x,y\in T$ and $T$ is closed under $\avgd$, 
$\avgd(x,y)\in T$. 
Thus, $\avgd(x,y)\in \avg{S}$. 
In conclusion, $\avg{S}$ is closed under $\avgd$.
\end{proof}
\fi

\begin{definition}
A ternary operation $f: \mathbb{Z}^3 \rightarrow \mathbb{Z}$ is called a \emph{majority} operation, if for all $x,y \in \mathbb{Z}$, 
\begin{align*}
f(x,x,y) = f(x,y,x) = f(y,x,x) = x.
\end{align*}
%Namely, it outputs the majority value of its three arguments.
\end{definition}

\begin{definition}
The median operation $\median: \mathbb{Z}^3 \rightarrow \mathbb{Z}$ is defined as follows.
Let $x,y,z \in \mathbb{Z}$.
Choose $u,v,w \in \mathbb{Z}$ such that 
$\{ u,v,w \} = \{ x,y,z \}$ and $u \le v \le w$.
Then $\median$ is defined by 
\begin{align*}
\median(x,y,z) = v.
\end{align*}
Namely, $\median$ outputs the median value of its three arguments.
\end{definition}

By definition, the median operation is a majority operation.

\begin{definition}
For a set $S\subseteq \mathbb{Z}^n$ and an operation $f:\mathbb{Z}^k \to \mathbb{Z}$, 
we define the 
%directed discrete midpoint closure $\avg{S}$ of $S$ 
%$f$-closure $\func{S}$ of $S$ 
\emph{$f$-closure} $\cl_f(S)$ of $S$ 
as an intersection of all sets that include $S$ and are $f$-closed.
%$\avgd$. 
\end{definition}

We remark that $\cl_f(S)$ itself includes $S$ and is closed under $f$. Hence, $\cl_f(S)$ is the (inclusion-wise) smallest set which includes $S$ and is closed under $f$. 
Moreover, $\cl_f$ is indeed a closure operator, namely, it is extensive (i.e., $S \be \cl_f(S)$), monotone (i.e., $S_1 \be S_2$ implies $\cl_f(S_1) \be \cl_f(S_2)$), and idempotent (i.e., $\cl_f(\cl_f(S)) = \cl_f(S)$).

\begin{definition}
%For a set $S\subseteq \mathbb{Z}^n$, we define a \emph{topological closure} $\overline{S}$ of $S$ as an intersection of all topologically closed sets that include $S$.
For a set $S\subseteq \mathbb{Z}^n$, we define a \emph{convex closure} $\cl_{\conv}(S)$ of $S$ as an intersection of all convex sets that include $S$.
For a set $S\subseteq \mathbb{Z}^n$, we define a \emph{closed convex closure} $\cl_{\overline{\conv}}(S)$ of $S$ as an intersection of all (topologically) closed convex sets that include $S$.
\end{definition}

%It is well known that $\overline{\cdot}$ and $\cl_{\overline{\conv}}$ are closure operators.
It is well known that $\cl_{\conv}$ and $\cl_{\overline{\conv}}$ are closure operators.

The following theorem shown in~\cite{JCC98} connects 2-decomposability with closedness under operations\footnote{\cref{thm:closed-near-unanimity<=>decompo} is originally shown for a set $S \be D^n$ for a finite set $D$ in \cite{JCC98}. However, in Subsection 4.4 in the same paper, it is mentioned that the result also holds for an infinite set $D$.}.
Several operations over the subsets of $\bb{Z}^n$ in the theorem are not defined here, since we only use the direction that (1) implies (2) in this paper.
\begin{theorem}%[Generalization of Theorem 3.5 in~\cite{JCC98} to infinite domains]
\label{thm:closed-near-unanimity<=>decompo}
For a set $S \be \bb{Z}^n$, the following conditions are equivalent:
\begin{itemize}
\item[(1)] $S$ is closed under a majority operation.
\item[(2)] Every $R$ in $S^+$ is 2-decomposable. In particular, $S$ is 2-decomposable.
\end{itemize}
Here, $S^+$ is the family of all subsets that can be obtained from $S$ using some sequence of 
(i) Cartesian product, (ii) equality selection, and (iii) projection operation.
\end{theorem}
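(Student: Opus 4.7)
The plan is to focus on the direction $(1)\Rightarrow(2)$, since the excerpt explicitly states that the converse is not used elsewhere. The argument splits into two stages. Stage one is a routine verification that closure under a componentwise operation $f$ is inherited by Cartesian product, equality selection, and projection: each of these relational constructions acts coordinate-wise on tuples, so if $a,b,c$ lie in the constructed relation then $f(a,b,c)$ does as well, by reducing to the closure property of the ingredient relation. Hence, if $S$ is closed under the majority operation $f$, then every $R\in S^+$ is also closed under the componentwise action of $f$ on the appropriate ambient space.

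Stage two is the Baker--Pixley-style heart of the proof: any set $R\subseteq\bb{Z}^m$ closed under a majority operation $f$ is 2-decomposable. I would fix $x\in\bb{Z}^m$ with $\proj_{i,j}(x)\in \proj_{i,j}(R)$ for all $i<j$, and prove by induction on $k=2,\dots,m$ the stronger claim that for every $I\subseteq\{1,\dots,m\}$ with $|I|=k$ there exists $y\in R$ with $y_i=x_i$ for all $i\in I$. The case $k=2$ is precisely the 2-projection hypothesis on $x$, and the case $k=m$, $I=\{1,\dots,m\}$, yields $x\in R$, which is the desired conclusion. For the inductive step from $k$ to $k+1$ with $I=\{i_1,\dots,i_{k+1}\}$, I would apply the inductive hypothesis to pick $a,b,c\in R$ agreeing with $x$ on $I\setminus\{i_{k+1}\}$, $I\setminus\{i_k\}$, and $I\setminus\{i_{k-1}\}$, respectively, and set $d=f(a,b,c)$. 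Then $d\in R$ by closure, and at each coordinate $i_j\in I$ at least two of $a_{i_j},b_{i_j},c_{i_j}$ equal $x_{i_j}$, so the defining property of a majority operation forces $d_{i_j}=x_{i_j}$; hence $d$ witnesses the claim for $I$.

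The only delicate point is the orchestration of the three witnesses $a,b,c$: the three ``critical'' coordinates $i_{k-1}, i_k, i_{k+1}$ must each be covered by exactly the pair of witnesses that were chosen not to omit it, while the remaining coordinates $i_1,\dots,i_{k-2}$ are unanimously set to the correct value. This is what makes a ternary majority operation sufficient — binary operations would not suffice to absorb the three ``new'' coordinates at once. I would omit the converse direction $(2)\Rightarrow(1)$, which the paper notes is not needed; that direction requires extracting a majority polymorphism of $S$ from the 2-decomposability of a particular relation constructed in $S^+$, and uses machinery orthogonal to the applications in this paper.
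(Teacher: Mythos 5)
The paper does not actually prove this theorem: it is imported verbatim from Jeavons, Cohen, and Cooper \cite{JCC98} (with a footnote noting that the finite-domain proof extends to infinite domains), and the text explicitly says that only the direction $(1)\Rightarrow(2)$ is used. So there is no in-paper argument to compare against line by line. Your Stage one (polymorphisms are preserved under Cartesian product, equality selection, and projection, hence pass from $S$ to every $R\in S^+$) and Stage two (the Baker--Pixley induction) are the standard route and both are correct: in the inductive step the three witnesses $a,b,c$ omit the three distinct indices $i_{k-1},i_k,i_{k+1}$, so at every coordinate of $I$ at least two of the three arguments equal the target value, and the majority identity (together with $f(x,x,x)=x$, which follows from $f(x,x,y)=x$ with $y=x$) forces $d=f(a,b,c)$ to agree with $x$ on all of $I$. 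The induction bottoms out at $|I|=2$, which is exactly the hypothesis $\proj_{i,j}(x)\in\proj_{i,j}(R)$, and terminates at $|I|=m$ giving $x\in R$.

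The one caveat is that the statement as written is a biconditional, and you prove only $(1)\Rightarrow(2)$. Your justification for omitting $(2)\Rightarrow(1)$ --- that the paper does not use it --- is accurate, and that direction genuinely requires different machinery (extracting a majority polymorphism from $2$-decomposability of a suitable relation in $S^+$, which is why the hypothesis in $(2)$ quantifies over all of $S^+$ rather than over $S$ alone). But as a proof of the theorem as stated your proposal is incomplete; if you want to match the paper's actual practice, the honest move is to do what the paper does and cite \cite{JCC98} for the equivalence while proving, as you have, the implication that is actually invoked.
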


\if0 
\begin{theorem}%[Generalization of Theorem 3.5 in~\cite{JCC98} to infinite domains]
\label{thm:closed-near-unanimity<=>decompo}
For any set of relations $\Gamma$ over a (possibly infinite) set $D$, and any $k \geq 3$, 
the following conditions are equivalent:
\begin{itemize}
\item[(1)] Every $S$ in $\Gamma$ is closed under a near unanimity operation of arity $k$.
\item[(2)] Every $S$ in $\Gamma^+$ is ($k-1$)-decomposable.
\end{itemize}
Here, 
$\Gamma^+$ is the set of all relations which can be obtained from $\Gamma$ using some sequence of 
(i) Cartesian product, (ii) equality election, and (iii) projection operation.
\end{theorem}
\fi 

Now, we provide some definitions for polyhedra.

\begin{definition}[Polyhedron]
A set $P \subseteq \mathbb{R}^n$ is called a \emph{polyhedron} if 
there exist $A \in \mathbb{R}^{m \times n}$ and $b \in \mathbb{R}^m$ such that 
$P=\{x\in \mathbb{R}^n \mid Ax \ge b \}$ holds.
\end{definition}

It is known that any polyhedron is a closed convex set.

\begin{definition}%[Special matrices]
Let $A\in \mathbb{R}^{m \times n}$ be a matrix.
\begin{itemize}
\item $A$ is said to be \emph{quadratic} if each row of A contains at most two nonzero elements.
\item $A$ is said to be \emph{linear} if each row of A contains at most one nonzero element.
\item $A$ is said to be \emph{unit} if each element of $A$ is either $0,-1$, or $ +1$.
\item $A$ is said to be \emph{difference} if each row of A contains at most one $+1$ and at most one $-1$.
\end{itemize}
\end{definition}

\begin{definition}%[Special Polyhedra]
Let $P$ be a subset of $\mathbb{R}^n$.
\begin{itemize}
\item $P$ is called a \emph{TVPI (two-variable-per-inequality) polyhedron} if 
there exist quadratic matrix $A \in \mathbb{R}^{m \times n}$ and vector $b \in \mathbb{R}^m$ such that 
$P=\{x\in \mathbb{R}^n \mid Ax \ge b \}$ holds.
\item $P$ is called a \emph{UTVPI (unit-two-variable-per-inequality) polyhedron} if 
there exist unit quadratic matrix $A \in \{0,-1,+1\}^{m \times n}$ and vector $b \in \mathbb{R}^m$ such that 
$P=\{x\in \mathbb{R}^n \mid Ax \ge b \}$ holds.
\item $P$ is called a \emph{DC polyhedron} if 
there exist difference matrix $A \in \{0,-1,+1\}^{m \times n}$ and vector $b \in \mathbb{R}^m$ such that 
$P=\{x\in \mathbb{R}^n \mid Ax \ge b \}$ holds.
\item $P$ is called a \emph{SVPI (single-variable-per-inequality) polyhedron} if 
there exist linear matrix $A \in \{0,-1,+1\}^{m \times n}$ and vector $b \in \mathbb{R}^m$ such that 
$P=\{x\in \mathbb{R}^n \mid Ax \ge b \}$ holds.
\end{itemize}
\end{definition}

Note that a SVPI polyhedron is a DC polyhedron, a DC polyhedron is a UTVPI polyhedron, and a UTVPI polyhedron is a TVPI polyhedron.

\begin{definition}[Representability]
A set $S \subseteq \mathbb{Z}^n$ is said to be \emph{representable by a TVPI (resp, UTVPI, DC, SVPI) system} if $S = P \cap \mathbb{Z}^n$ for some TVPI (resp, UTVPI, DC, SVPI) polyhedron.
In this case, we say $P \cap \mathbb{Z}^n$ is a \emph{TVPI (resp, UTVPI, DC, SVPI) representation} of $S$.
\end{definition}

\begin{definition}
For a set $S\subseteq \bb{Z}^n$, 
we define a \emph{UTVPI closure} 
%$\utvpi{S}$ 
$\cl_{\utvpi}(S)$ 
as an intersection of all subsets of $\bb{Z}$ that include $S$ and are representable by UTVPI systems. 
\end{definition}

We remark that $\cl_{\utvpi}(S)$  itself includes $S$ and is representable by a UTVPI system. 
Hence, $\cl_{\utvpi}(S)$  is the (inclusion-wise) smallest set which includes $S$ and is representable by a UTVPI system. 
Indeed, $\cl_{\utvpi}$ is a closure operator.

\if0
\begin{proof}
Now, $\cl_{\utvpi}(S)=\bigcap T$ where $T$ is any set which includes $S$ and is representable by UTVPI system, 
$\avg{S}$ also includes $S$. 
And for a $c=(c_1,\dots,c_n)\in \{1,0,-1\}^n$, 
if there is the following minimum integer number 
\[\min\{c_1s_1+\cdots +c_ns_n\in \mathbb{Z}~|~ (s_1,\dots,s_n)\in S\},\]
we define $m_c$ as this integer number.
Then, $\cl_{\utvpi}(S)$ has a UTVPI representation
\[c_1x_1+\cdots c_nx_n\ge m_c\]
where $c=(c_1,\dots,c_n)$ moves in all elements of $\{1,0,-1\}^n$ such that $m_c$ exists.
Hence, $\cl_{\utvpi}(S)$ is representable by a UTVPI system. 
\end{proof}
\fi

\begin{proposition}[{\cite[Theorem 3.2]{BoM18b}}]
\label{prop:TVPI=>median-closed}
%A TVPI polyhedron is closed under the median operation.
Let $S$ be a subset of $\mathbb{Z}^n$.
If $S$ is representable by a TVPI system, then $S$ is closed under the median operation.
\end{proposition}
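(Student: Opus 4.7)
The plan is to verify closedness directly, by working row-by-row through a TVPI-defining system. Write $S = P \cap \bb{Z}^n$ with $P = \{w : Aw \ge b\}$ and $A$ quadratic, and take $x,y,z \in S$ with $m = \median(x,y,z)$. Since the median of integers is an integer, $m \in \bb{Z}^n$, so it suffices to show that $m$ satisfies each row of $Aw \ge b$. A TVPI row has the form $au_i + bu_j \ge c$ with at most two nonzero coefficients, so verifying it on $m$ amounts to the following two-dimensional claim: if $p^{(1)}, p^{(2)}, p^{(3)} \in \bb{R}^2$ all satisfy $au + bv \ge c$, then so does their componentwise median.

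For the two-dimensional claim I would first reduce to $a, b \ge 0$. If $a < 0$, substitute $u \mapsto -u$; the coefficient of $u$ flips sign, and the componentwise median commutes with this substitution because $\median(-s,-t,-r) = -\median(s,t,r)$. The same trick handles $b < 0$, so WLOG $a, b \ge 0$. Now relabel so that $p^{(1)}_1 \le p^{(2)}_1 \le p^{(3)}_1$; the first coordinate of the componentwise median then equals $p^{(2)}_1$. There are three cases according to which point attains the median of the second coordinates. If $p^{(2)}$ does, the componentwise median equals $p^{(2)}$ and there is nothing to prove. If $p^{(1)}$ does, the median point $(p^{(2)}_1, p^{(1)}_2)$ componentwise dominates $p^{(1)}$ since $p^{(2)}_1 \ge p^{(1)}_1$. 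If $p^{(3)}$ does, then since $p^{(3)}_2$ lies between the other two second coordinates, either $p^{(1)}_2 \le p^{(3)}_2$, giving componentwise dominance of the median point $(p^{(2)}_1, p^{(3)}_2)$ over $p^{(1)}$, or $p^{(2)}_2 \le p^{(3)}_2$, giving componentwise dominance over $p^{(2)}$. In every case the componentwise median componentwise dominates some $p^{(i)}$, and with $a, b \ge 0$ the half-plane $au + bv \ge c$ is closed under componentwise increase, so the median satisfies the inequality.

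Applying this to every TVPI row shows $m \in P$, and hence $m \in S$. The main obstacle is precisely the 2D componentwise-dominance argument: it is genuinely two-dimensional, since in three dimensions the componentwise median of three vectors need not dominate any of them (e.g.\ the standard basis vectors in $\bb{R}^3$ have componentwise median $(0,0,0)$), and indeed the proposition fails without the two-variable restriction, witnessed by $(1,0,0), (0,1,0), (0,0,1)$ in $\{u+v+w \ge 1\}$ whose componentwise median violates the inequality. Everything else in the proof is bookkeeping around this single geometric fact.
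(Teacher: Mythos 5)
Your proof is correct. Note that the paper itself gives no proof of this proposition: it is imported wholesale as a citation to Bodirsky and Mamino \cite[Theorem~3.2]{BoM18b}, so there is no in-paper argument to compare against. What you supply is the standard elementary verification, and every step checks out: the reduction to nonnegative coefficients via $\median(-s,-t,-r)=-\median(s,t,r)$ is sound; after sorting the first coordinates the case analysis on which point attains the median of the second coordinates is exhaustive (ties are harmless), and in each case the componentwise median coordinatewise dominates one of the three given points, so a half-plane $au+bv\ge c$ with $a,b\ge 0$ absorbs it. Your counterexample $(1,0,0),(0,1,0),(0,0,1)$ against $u+v+w\ge 1$ correctly identifies why the two-variable restriction is essential. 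One small bonus worth pointing out: because your argument is row-by-row and never uses finiteness of the system, it also proves \cref{rem:infiniteTVPI=>median-closed} (the extension to infinitely many two-variable inequalities), which the paper asserts without proof and then relies on in \cref{thm:2-decompo-closure-cl(conv)} and \cref{prop:hierarchy-in-2-dim}.
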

\begin{remark}
\label{rem:infiniteTVPI=>median-closed}
\cref{prop:TVPI=>median-closed} can be easily generalized to the case where $S$ is representable by an infinite set of two-variable inequalities, i.e., $S = \{ x\in \mathbb{Z}^n \mid a_{i1}x_{j_i} + a_{i2}x_{k_i} \ge b_i (\forall i \in I) \}$ for an infinite set $I$, where $a_{i1},a_{i2},b_i \in \mathbb{R}$ for each $i \in I$.
\end{remark}

We now introduce notions from discrete convex analysis.

\begin{definition}[Integer neighborhood]
For vector $x \in \mathbb{R}^n$, 
its \emph{integer neighborhood} $N(x) \subseteq \mathbb{Z}^n$ is defined as 
$N(x)=\{ z \in \mathbb{Z}^n \mid |z_j -x_j| < 1\ (j=1,\dots, n) \}$.
\end{definition}

\begin{definition}[Midpoint-neighbor-closedness]
A set $S \subseteq \mathbb{Z}^n$ is said to be \emph{midpoint-neighbor-closed} 
if for each $x,y \in S$, the integer neighborhood of $\frac{x+y}{2}$ is contained in $S$, 
i.e. $N(\frac{x+y}{2}) \subseteq S$.
\end{definition}

\begin{definition}[Integral convexity]
A set $S \subseteq \mathbb{Z}^n$ is said to be \emph{integrally convex} 
if the convex closure of $S$ coincides with the union of 
the convex closures of $S \cap N(x)$ over $x \in \mathbb{R}^n$,
i.e., if, for any $x \in \mathbb{R}^n$, $x \in \cl_{\conv}(S)$ implies $x \in \cl_{\conv}(S \cap N(x))$.
\end{definition}

\begin{theorem}[Theorem 2.1 in~\cite{MT23}]
\label{thm:IntConvSet-characterization}
A set $S \be \bb{Z}^n$ is integrally convex if and only if 
\begin{align}
\frac{x+y}{2} \in \cl_{\conv}\left(S\cap N(\frac{x+y}{2})\right)
\end{align}
for every $x,y \in S$ with $||x-y||_{\infty} \ge 2$. 
\end{theorem}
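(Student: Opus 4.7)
The necessity $(\Rightarrow)$ is immediate: for any $x, y \in S$ the point $(x+y)/2$ lies in $\cl_{\conv}(S)$ as a convex combination of two points of $S$, so applying the definition of integral convexity at this point gives $(x+y)/2 \in \cl_{\conv}(S \cap N((x+y)/2))$, which is precisely the required condition (and it holds vacuously/trivially also for pairs with $\|x-y\|_\infty < 2$).

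For the sufficiency $(\Leftarrow)$, I would first observe that whenever $\|x-y\|_\infty \le 1$ both $x$ and $y$ already belong to $N((x+y)/2)$, so the midpoint condition holds trivially for such pairs; hence the hypothesis effectively supplies the midpoint condition for \emph{all} pairs $x, y \in S$. It then suffices to show that every $p \in \cl_{\conv}(S)$ satisfies $p \in \cl_{\conv}(S \cap N(p))$. By Carath\'eodory's theorem, write $p = \sum_{i=1}^m \lambda_i x_i$ with $x_i \in S$, $\lambda_i > 0$, $\sum_i \lambda_i = 1$, and $m \le n+1$. Among all such finite convex representations of $p$, pick one that is extremal with respect to a suitable spread functional about $p$---for example the number of indices with $x_i \notin N(p)$, broken by the lexicographically sorted multiset $(\|x_i - p\|_\infty)_i$ in decreasing order.

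If every $x_i$ already lies in $N(p)$ we are done, since then $p \in \cl_{\conv}(\{x_1,\dots,x_m\}) \subseteq \cl_{\conv}(S \cap N(p))$. Otherwise, the plan is to derive a contradiction by producing a strictly better representation. Let $x_i$ attain the maximum $M = \|x_i - p\|_\infty \ge 1$ at a coordinate $k$, and assume without loss of generality $(x_i)_k - p_k = M$. Because $\sum_\ell \lambda_\ell ((x_\ell)_k - p_k) = 0$ there exists some $x_j$ with $(x_j)_k - p_k < 0$; choose such $j$ with $(x_j)_k$ as small as possible. The midpoint $m := (x_i + x_j)/2$ then satisfies $|m_k - p_k| < M$, and by the extended midpoint hypothesis $m = \sum_\ell \mu_\ell y_\ell$ for some $y_\ell \in S \cap N(m)$ with $\mu_\ell > 0$. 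Substituting $2\alpha m = \alpha(x_i + x_j)$ with $\alpha := \min(\lambda_i, \lambda_j)$ into the combination of $p$ transfers weight $2\alpha$ from $\{x_i, x_j\}$ onto the $y_\ell$'s and produces a new convex representation of $p$.

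The main obstacle is making the decrease in the spread functional quantitative. Since $y_\ell \in N(m)$ we have only $\|y_\ell - m\|_\infty \le 1/2$ (as each coordinate of $m$ is a half-integer or integer), and so a priori $\|y_\ell - p\|_\infty$ can be as large as $\|m - p\|_\infty + 1/2$, which need not be smaller than $M$. Hence the pair $(x_i, x_j)$ must be chosen with enough ``opposition'' along coordinate $k$ for $m$ to fall strictly inside the ball of radius $M - 1/2$ around $p$, and one must verify that the resulting $y_\ell$'s either already lie in $N(p)$ or at worst give a strict decrease of the chosen spread functional. I expect the cleanest finalization to use induction on the number of support indices outside $N(p)$, combined with a careful analysis of borderline configurations (in particular the case $M = 1$ and the case where the maximum is attained simultaneously in several coordinates), possibly by strengthening the reduction step with a second application of the midpoint hypothesis before committing to the substitution.
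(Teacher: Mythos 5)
The paper does not prove this statement at all: it is imported verbatim as Theorem 2.1 of~\cite{MT23} and used as a black box, so there is no internal proof to compare your attempt against. Judging the attempt on its own terms: the necessity direction and the observation that the midpoint condition extends automatically to pairs with $\|x-y\|_\infty\le 1$ are both correct. The sufficiency direction, however, contains a genuine gap that you yourself flag but do not close, and it is precisely the hard part of the theorem.

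Concretely, your descent step replaces weight $2\alpha$ on the pair $\{x_i,x_j\}$ by weight $2\alpha$ distributed over points $y_\ell\in S\cap N(m)$ with $m=(x_i+x_j)/2$. You correctly compute $\|y_\ell-m\|_\infty\le 1/2$, but $\|m-p\|_\infty$ can equal $M$ exactly (e.g.\ when $x_i$ and $x_j$ both sit at sup-distance $M$ from $p$ along some coordinate other than the working coordinate $k$, with the same sign), so the new support points can satisfy $\|y_\ell-p\|_\infty = M+\tfrac12 > M$. Moreover only the lighter of $x_i,x_j$ leaves the support, while up to $n+1$ new points enter, so neither the count of support points outside $N(p)$ nor the sorted multiset of distances is guaranteed to decrease; no well-founded measure is exhibited, and the process as described can cycle or diverge. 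The suggested repairs (``a second application of the midpoint hypothesis,'' ``careful analysis of borderline configurations'') are exactly where the actual content of the theorem lives --- the published proofs of this local characterization do not proceed by reshuffling weights in a single Carath\'eodory representation, but by a structured induction (essentially on $\|x-y\|_\infty$ over pairs of points of $S$, via statements of the form $\cl_{\conv}(S)\cap W=\cl_{\conv}(S\cap W)$ for small integral boxes $W$). As it stands the proposal is a plausible plan with the decisive step missing, not a proof.
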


\begin{proposition}[Proposition 2.1 in~\cite{MMTT19}]
\label{prop:two-dim-IntConv=UTVPI}
A set $S \subseteq \mathbb{Z}^2$ is an integrally convex set if and only if $S$ is representable by a UTVPI system.
\end{proposition}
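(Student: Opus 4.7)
The plan is to prove both directions via \cref{thm:IntConvSet-characterization}, which reduces integral convexity to the midpoint condition $\frac{x+y}{2}\in \cl_{\conv}(S\cap N(\frac{x+y}{2}))$ for all $x,y\in S$ with $\|x-y\|_\infty\ge 2$. For the ``if'' direction, write $S=P\cap\bb{Z}^2$ for a UTVPI polyhedron $P$ whose right-hand sides I may take to be integers (rounding up preserves $P\cap\bb{Z}^2$). Fix $x,y\in S$ with $\|x-y\|_\infty\ge 2$ and set $m=\frac{x+y}{2}\in P$. I split on the parity of $x+y$: if both coordinates are even then $m\in \bb{Z}^2\cap P=S$ and $N(m)=\{m\}$; if exactly one is odd, $N(m)$ consists of two adjacent integer points, and for each UTVPI inequality $\sigma_1\xi_1+\sigma_2\xi_2\ge c$ the midpoint bound together with integer rounding of the left-hand side places both neighbors into $P$. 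The main case is both coordinates odd, where $m=(a+\tfrac12,b+\tfrac12)$ and $N(m)$ consists of four corners of a unit square. Since $x+y=(2a+1,2b+1)$ is componentwise odd, exactly one of $x,y$ has first coordinate $\le a$ and exactly one has second coordinate $\le b$, so $x,y$ straddle one of the two diagonals of $N(m)$. In the case $x\le(a,b)$ and $(a+1,b+1)\le y$ componentwise, I would show $\{(a,b),(a+1,b+1)\}\subseteq P$ by branching on the UTVPI normal $(\sigma_1,\sigma_2)$: for $\sigma_1=\sigma_2$, adding the inequalities at $x$ and $y$ and using the parity identity gives the required bound; for $\sigma_1=-\sigma_2$ the midpoint value already matches both corners; for single-variable normals, integer rounding of the midpoint bound covers all four. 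The opposite-diagonal case is symmetric and yields $\{(a+1,b),(a,b+1)\}\subseteq P$ whose midpoint is $m$.

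For the ``only if'' direction, set $P=\cl_{\conv}(S)$. First, $P\cap\bb{Z}^2=S$: for any integer $z\in P$, integral convexity applied at $z$ (noting $N(z)=\{z\}$) forces $z\in S$. The main step is that every edge of $P$ has direction in $\{\pm(1,0),\pm(0,1),\pm(1,1),\pm(1,-1)\}$. Suppose some edge has primitive lattice direction $(p,q)$ with $\gcd(p,q)=1$ and $(|p|,|q|)\notin\{(1,0),(0,1),(1,1)\}$, and take two consecutive lattice points $u,v$ on it (so $\|u-v\|_\infty\ge 2$). If exactly one of $p,q$ is even, the two points of $N(\frac{u+v}{2})$ evaluate to $\pm\frac{p}{2}$ (or $\pm\frac{q}{2}$) under the edge's defining linear form and hence straddle the edge, so $P$ contains only one of them and it is distinct from $\frac{u+v}{2}$. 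If both $p,q$ are odd (so $\max(|p|,|q|)\ge 3$), the four points of $N(\frac{u+v}{2})$ have linear-form values $\pm\frac{p+q}{2},\pm\frac{p-q}{2}$, and the two lying on the interior side of the edge share a row or a column whose convex hull is a unit segment disjoint from $\frac{u+v}{2}$. Either subcase contradicts integral convexity. Hence $P$ has at most eight edge directions and therefore is a polyhedron defined by UTVPI inequalities; rounding right-hand sides up to integers yields a UTVPI representation of $S$.

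The main obstacle is the ``both odd'' subcase of the forward direction: the midpoint's being in $P$ alone does not place two corners of the unit square in $P$, so one must use $x$ and $y$ themselves and branch on which diagonal of $N(m)$ they straddle. The matching symmetry between the two diagonals of $N(m)$ and the two sign patterns of UTVPI diagonal normals is exactly what makes the 2D argument go through, and this symmetry is specific to dimension two.
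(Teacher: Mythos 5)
The paper offers no proof of this proposition at all: it is imported verbatim from \cite{MMTT19}, so what you have written is necessarily a different route, namely a self-contained proof. Your overall strategy is the right one in both directions: the midpoint criterion of \cref{thm:IntConvSet-characterization} combined with a parity case analysis on $x+y$ for the ``if'' direction, and hole-freeness of $\cl_{\conv}(S)$ plus exclusion of non-UTVPI primitive edge directions for the ``only if'' direction. The case analyses I checked in detail (the two-point neighborhood when exactly one coordinate of $x+y$ is odd, the observation that $x$ and $y$ straddle a diagonal of $N(m)$ when both coordinates are odd, and the two parity subcases for a forbidden edge direction $(p,q)$, where the inadmissible neighbors are cut off by the edge's supporting line) are all correct.

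Two steps need repair. First, in the both-odd subcase with $x\le(a,b)$ and $(a+1,b+1)\le y$, for the normal $\sigma=(1,1)$ adding the inequalities at $x$ and $y$ yields only $a+b+1\ge c$, i.e.\ $a+b\ge c-1$, which does not place the corner $(a,b)$ in $P$; no parity identity rescues this, since $x_1+x_2=y_1+y_2=c$ together with $a+b=c-1$ is consistent with parity alone. The correct one-line argument is domination: $\sigma\ge 0$ and $x\le(a,b)$ give $a+b\ge x_1+x_2\ge c$, and symmetrically via $y\ge(a+1,b+1)$ for $\sigma=(-1,-1)$; you state the domination hypotheses but then reach for the wrong tool. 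Second, the ``only if'' direction tacitly treats $\cl_{\conv}(S)$ as a closed polyhedron whose one-dimensional faces contain two consecutive lattice points. For infinite $S$ this is exactly what must be proved: you need that every bounded one-dimensional face of the hull of a discrete lattice set is a segment with endpoints in $S$, that unbounded edges and line-faces with primitive rational direction also contain consecutive lattice points of $S$, and that once all edge directions are confined to the eight UTVPI directions, the monotone rotation of edge normals along the boundary of a planar convex set forces finitely many edges and hence polyhedrality (ruling out, e.g., irrational recession directions). Without these the final inference ``at most eight edge directions, therefore a polyhedron defined by UTVPI inequalities'' is incomplete, although for finite $S$ your argument is already airtight.
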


\section{2-decomposability and closure properties}
\label{sec:2-decomp-closure}

In this section, we investigate the relationships between 2-decomposability and closure properties, which will be used in \cref{sec:pohyhedra-closure}.

\begin{lemma}
\label{lemma:proj-closed}
Let $S$ be a subset of $\mathbb{Z}^n$ and $f:\mathbb{Z}^k \to \mathbb{Z}$ be an operation.

\begin{itemize}
\item[(i)] 
If $S$ is $f$-closed, then
for each $1\le i<j \le n$, $\pi_{i,j}(S)$ is $f$-closed.

\item [(ii)]
If for each $1\le i<j \le n$, $S_{i,j}\be \bb{Z}^2$ is $f$-closed, then $\Join_{1\le i<j\le n} S_{i,j}$ is $f$-closed.
\end{itemize}

It follows that, if $S$ is $f$-closed, then 
$\Join_{1\le i<j\le n}\pi_{i,j}(S) $ is also $f$-closed.
\end{lemma}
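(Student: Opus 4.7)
The proof is essentially a routine bookkeeping exercise that reduces to one key observation: projection commutes with the componentwise application of any operation $f$. That is, for any $x^{(1)},\ldots,x^{(k)}\in\mathbb{Z}^n$ and any $\{i,j\}\subseteq\{1,\ldots,n\}$,
\[
\pi_{i,j}\bigl(f(x^{(1)},\ldots,x^{(k)})\bigr) = f\bigl(\pi_{i,j}(x^{(1)}),\ldots,\pi_{i,j}(x^{(k)})\bigr),
\]
which is immediate from the componentwise definition of $f$ on vectors. I would first record this identity explicitly, since both parts of the lemma flow from it.

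For part (i), take any $y^{(1)},\ldots,y^{(k)}\in\pi_{i,j}(S)$. By definition of projection, I can pick preimages $x^{(\ell)}\in S$ with $\pi_{i,j}(x^{(\ell)})=y^{(\ell)}$ for $\ell=1,\ldots,k$. Since $S$ is $f$-closed, $f(x^{(1)},\ldots,x^{(k)})\in S$, and the commutation identity gives $f(y^{(1)},\ldots,y^{(k)})=\pi_{i,j}(f(x^{(1)},\ldots,x^{(k)}))\in\pi_{i,j}(S)$.

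For part (ii), take $x^{(1)},\ldots,x^{(k)}\in\Join_{1\le i<j\le n}S_{i,j}$, so $\pi_{i,j}(x^{(\ell)})\in S_{i,j}$ for every $\ell$ and every pair $i<j$. Fix any such pair. By the commutation identity, $\pi_{i,j}(f(x^{(1)},\ldots,x^{(k)}))=f(\pi_{i,j}(x^{(1)}),\ldots,\pi_{i,j}(x^{(k)}))$, and this lies in $S_{i,j}$ because $S_{i,j}$ is $f$-closed. Since this holds for all $i<j$, we get $f(x^{(1)},\ldots,x^{(k)})\in\Join_{1\le i<j\le n}S_{i,j}$.

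The concluding assertion is then immediate: if $S$ is $f$-closed, part (i) makes each $\pi_{i,j}(S)$ $f$-closed, and part (ii) applied to $S_{i,j}:=\pi_{i,j}(S)$ makes $\Join_{1\le i<j\le n}\pi_{i,j}(S)$ $f$-closed. There is no serious obstacle here; the only point worth being careful about is to phrase the componentwise/projection interchange once and use it in both directions, rather than repeating componentwise index manipulations.
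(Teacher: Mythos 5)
Your proposal is correct and follows essentially the same argument as the paper: part (i) by lifting projected tuples to preimages in $S$ and part (ii) by using the fact that $\pi_{i,j}(f(x^{(1)},\dots,x^{(k)}))=f(\pi_{i,j}(x^{(1)}),\dots,\pi_{i,j}(x^{(k)}))$. Isolating that commutation identity up front is a mild stylistic improvement but not a different proof.
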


\begin{proof}
\begin{itemize}
\item[$(i)$]
For any $(a^{(1)},b^{(1)}),\dots,(a^{(k)},b^{(k)})\in \pi_{i,j}(S)$, there exists $x^{(1)},\dots,x^{(k)}\in S$ such that $\proj_{i,j}(x^{(\ell)})=(a^{(\ell)},b^{(\ell)})$ for $1\le \ell\le k$.
Since $S$ is $f$-closed, 
$f(x^{(1)},\dots,x^{(k)})\in S$. 
Thus, $f((a^{(1)},b^{(1)}),\dots,(a^{(k)},b^{(k)}))=\proj_{i,j}(f(x^{(1)},\dots,x^{(k)}))\in \pi_{i,j}(S)$.

\item[$(ii)$] 
Take any $x^{(1)},\dots,x^{(k)}\in \Join_{1\le i< j\le n}S_{i,j}$. 
Then, for any $1\le \ell\le k$ and $1\le i<j\le n$, 
$\pi_{i,j}(x^{(\ell)})\in S_{i,j}$.
Since each $S_{i,j}$ is $f$-closed, 
$f(\pi_{i,j}(x^{(1)}),\dots,\pi_{i,j}(x^{(k)}))\in S_{i,j}$.
Since $\pi_{i,j}(f(x^{(1)},\dots,x^{(k)}))=f(\pi_{i,j}(x^{(1)}),\dots,\pi_{i,j}(x^{(k)}))$ for each $i$ and $j$, 
we have $f(x^{(1)},\dots,x^{(k)})\in \Join_{1\le i<j\le n}S_{i,j}$.
\end{itemize}   
\end{proof}

\begin{lemma}
\label{lemma:2-decompo-closure-operator}
Let $f:\mathbb{Z}^k \to \mathbb{Z}$ be an operation.
Then the operator that takes $S \be \mathbb{Z}^n$ as input  and outputs $\Join_{1\le i<j\le n}\cl_{f} (\pi_{i,j}(S)) (\be \mathbb{Z}^n)$ is a closure operator.
\end{lemma}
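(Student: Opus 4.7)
The plan is to verify the three defining properties of a closure operator: extensivity, monotonicity, and idempotence. Write $T(S) := \Join_{1\le i<j\le n}\cl_f(\pi_{i,j}(S))$ for brevity. I will use only the fact, stated earlier in the paper, that $\cl_f$ itself is a closure operator (extensive, monotone, idempotent), together with the basic observation that $\pi_{i,j}$ is monotone and that membership in a join is checked coordinate-pairwise.

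For extensivity, I would take any $x \in S$ and note that $\pi_{i,j}(x) \in \pi_{i,j}(S) \subseteq \cl_f(\pi_{i,j}(S))$ for every $i<j$, by extensivity of $\cl_f$; hence $x \in T(S)$ by the definition of the join. For monotonicity, suppose $S_1 \subseteq S_2$. Then $\pi_{i,j}(S_1) \subseteq \pi_{i,j}(S_2)$ for each $i<j$, and monotonicity of $\cl_f$ gives $\cl_f(\pi_{i,j}(S_1)) \subseteq \cl_f(\pi_{i,j}(S_2))$; taking joins on both sides preserves this inclusion componentwise, so $T(S_1) \subseteq T(S_2)$.

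The main work, and the step most likely to cause bookkeeping trouble, is idempotence: I need $T(T(S)) = T(S)$. By extensivity and monotonicity already shown, $T(S) \subseteq T(T(S))$, so it suffices to prove $T(T(S)) \subseteq T(S)$, and for this it is enough to show $\cl_f(\pi_{i,j}(T(S))) \subseteq \cl_f(\pi_{i,j}(S))$ for every pair $i<j$. The key observation is that the definition of the join forces $\pi_{i,j}(T(S)) \subseteq \cl_f(\pi_{i,j}(S))$: any $x \in T(S)$ satisfies $\pi_{i,j}(x) \in \cl_f(\pi_{i,j}(S))$ directly from the definition of $T$. Applying $\cl_f$ to both sides and using its monotonicity and idempotence gives
\begin{align*}
\cl_f(\pi_{i,j}(T(S))) \subseteq \cl_f(\cl_f(\pi_{i,j}(S))) = \cl_f(\pi_{i,j}(S)),
\end{align*}
which is exactly what is needed. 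Taking joins then yields $T(T(S)) \subseteq T(S)$, completing the proof.

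The only subtle point is keeping clearly in mind that $\cl_f$ acts on subsets of $\bb{Z}^2$ while $T$ acts on subsets of $\bb{Z}^n$, and that the passage between them is mediated by the componentwise-defined projection and join; once that is set up, each of the three properties reduces to a one-line application of the corresponding property of $\cl_f$ together with the trivial monotonicity of $\pi_{i,j}$ and of $\Join$.
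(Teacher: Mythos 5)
Your proposal is correct and follows essentially the same route as the paper: extensivity and monotonicity are handled identically, and for idempotence both arguments reduce to showing $\pi_{i,j}(T(S)) \subseteq \cl_f(\pi_{i,j}(S))$ (immediate from the definition of the join) and then invoking monotonicity and idempotence of $\cl_f$ — the paper phrases this last step as ``minimality of $\cl_f$,'' which is equivalent. No gaps.
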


\begin{proof}
To show this, 
we check the following three conditions.
\begin{itemize}
\item[$(i)$]
Firstly, we show $S\be \Join_{1\le i<j\le n}\cl_{f} (\pi_{i,j}(S))$.
For any $x\in S$ and  for any $1\le i<j\le n$, 
it is clear that $\pi_{i,j}(x)\in \pi_{i,j}(S)\be \cl_{f} (\pi_{i,j}(S))$. 
Thus, $x\in \Join_{1\le i<j\le n}\cl_{f} (\pi_{i,j}(S))$.

\item[$(ii)$]
Secondly, we show that if $S_1\be S_2$ then 
$\Join_{1\le i<j\le n}\cl_{f} (\pi_{i,j}(S_1))\be \Join_{1\le i<j\le n}\cl_{f} (\pi_{i,j}(S_2))$.
Since $\Join_{1\le i<j\le n}, \cl_{f},$ and $\pi_{i,j}$ 
are monotonically increasing in terms of set inclusion, 
the composition 
$\Join_{1\le i<j\le n}\cl_{f} (\pi_{i,j}(\cdot))$ is also monotonically increasing.

\item[$(iii)$]
Finally, we show 
$\Join_{1\le i<j\le n}\cl_{f} (\pi_{i,j}(\Join_{1\le i<j\le n}\cl_{f} (\pi_{i,j}(S))))=\Join_{1\le i<j\le n}\cl_{f} (\pi_{i,j}(S))$.
By (ii), 
$\Join_{1\le i<j\le n}\cl_{f} (\pi_{i,j}(\Join_{1\le i<j\le n}\cl_{f} (\pi_{i,j}(S))))\supseteq\Join_{1\le i<j\le n}\cl_{f} (\pi_{i,j}(S))$. 
Thus, we show the inverse inclusion, 
that is 
\[\Join_{1\le i<j\le n}\cl_{f} (\pi_{i,j}(\Join_{1\le i<j\le n}\cl_{f} (\pi_{i,j}(S))))\be\Join_{1\le i<j\le n}\cl_{f} (\pi_{i,j}(S)).\]
It suffices to show that for each $1\le i'<j'\le n$ 
\[\cl_{f} (\pi_{i',j'}(\Join_{1\le i<j\le n}\cl_{f} (\pi_{i,j}(S))))\be\cl_{f} (\pi_{i',j'}(S)).\]
By minimality of $\cl_f$, 
it suffices to show that
\[ \pi_{i',j'}(\Join_{1\le i<j\le n}\cl_{f} (\pi_{i,j}(S)))\be\cl_{f} (\pi_{i',j'}(S)).\]
Now, $\pi_{i',j'}(\Join_{1\le i<j\le n}\cl_{f} (\pi_{i,j}(S)))\be\pi_{i',j'}(\pi_{i',j'}^{-1}(\cl_f(\pi_{i',j'}(S)))) = \cl_{f} (\pi_{i',j'}(S))$.   
\end{itemize}
\end{proof}

\begin{theorem}[2-decomposability and general closure property]
\label{thm:2-decompo-general-closure}
Let $S$ be a subset of $\mathbb{Z}^n$.
Let $\bf{P}$ be a property for subsets of $\mathbb{Z}^k$ which has closure operator ${\rm cl}_{\bf P}$ (for each $k=1,2,\dots$).
Consider the following properties.
\begin{itemize}
\item[(i)] $S$ is closed under some majority operation and closed in terms of $\bf{P}$, i.e., $S={\rm cl}_{\bf P}(S)$.
\item[(ii)] $S$ is closed under some majority operation and for each $1\le i<j \le n$, $\pi_{i,j}(S)$ is closed in terms of $\bf{P}$, i.e., $\pi_{i,j}(S)={\rm cl}_{\bf P}(\pi_{i,j}(S))$.
\item[(iii)] $S$ is 2-decomposable, i.e., $S = \Join_{1\le i<j\le n} \pi_{i,j}(S)$, and for each $1\le i<j \le n$, $\pi_{i,j}(S)$ is closed in terms of $\bf{P}$, i.e., $\pi_{i,j}(S)={\rm cl}_{\bf P}(\pi_{i,j}(S))$.
\item[(iv)] $S = \Join_{1\le i<j\le n}{\rm cl}_{\bf P} (\pi_{i,j}(S))$.
\end{itemize}
Then we have a chain of implications: 
(ii) $\nr$ (iii) $\nr$ (iv).
Moreover, (i) is incomparable to any other properties in general.
\end{theorem}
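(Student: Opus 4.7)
The plan splits into verifying the implication chain $(ii)\nr(iii)\nr(iv)$ and treating the incomparability of $(i)$ separately.

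For the chain, $(ii)\nr(iii)$ is substantive only in establishing $2$-decomposability, since the projection-closedness clauses in $(ii)$ and $(iii)$ coincide verbatim. Because $S\in S^+$ and $S$ is closed under a majority operation, \cref{thm:closed-near-unanimity<=>decompo} yields directly that $S=\Join_{1\le i<j\le n}\pi_{i,j}(S)$. Next, $(iii)\nr(iv)$ is a one-step substitution: under the assumption $\pi_{i,j}(S)=\cl_{\bf P}(\pi_{i,j}(S))$ for each pair $i<j$, $2$-decomposability rewrites as $S=\Join_{i<j}\cl_{\bf P}(\pi_{i,j}(S))$.

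For the incomparability of $(i)$, I would exploit the fact that the hypothesis on $\bf P$ permits the closure operator on $\bb{Z}^k$ to be chosen independently for each $k$. To show $(i)\not\nr(iv)$ (and hence $\not\nr(ii),(iii)$), take $\bf P$ whose closure is the identity on $\bb{Z}^n$ but sends every nonempty subset of $\bb{Z}^2$ to the whole $\bb{Z}^2$ (easily checked to be a closure operator). Then any singleton $S=\{v\}\be\bb{Z}^n$ satisfies $(i)$ trivially (singletons are closed under every majority operation, and $\cl_{\bf P}(S)=S$ on $\bb{Z}^n$), while $\Join_{i<j}\cl_{\bf P}(\pi_{i,j}(S))=\Join_{i<j}\bb{Z}^2=\bb{Z}^n\ne S$, so $(iv)$ fails. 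Conversely, to show $(iv)\not\nr(i)$, take $\bf P$ to be the median closure on $\bb{Z}^n$ and the identity closure on $\bb{Z}^2$, and let $S=\{(0,0,0),(1,2,3),(2,1,3)\}\be\bb{Z}^3$. A direct case check on each $(a,b,c)\in\bb{Z}^3$ shows $S=\Join_{i<j}\pi_{i,j}(S)$ (any candidate outside $S$ is excluded by at least one of $\pi_{1,2},\pi_{1,3},\pi_{2,3}$), so $(iv)$ holds. On the other hand, $\median((0,0,0),(1,2,3),(2,1,3))=(1,1,3)\notin S$, so $S$ is not median-closed and $(i)$ fails.

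The only substantive ingredient in the chain is the appeal to \cref{thm:closed-near-unanimity<=>decompo}; the remainder is formal. The main obstacle in the incomparability part is producing $\bf P$ for which the ambient-dimension and pair-dimension behaviours genuinely decouple, and this is resolved by exploiting the per-arity flexibility of $\bf P$ in the two constructions above.
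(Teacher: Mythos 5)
Your proof of the implication chain is the same as the paper's: (ii) $\nr$ (iii) is exactly the appeal to \cref{thm:closed-near-unanimity<=>decompo} (using that $S \in S^+$), and (iii) $\nr$ (iv) is the same one-line substitution. Where you genuinely diverge is the final claim: the paper's proof stops after the chain and offers no argument at all for the incomparability of (i), whereas you supply explicit witnesses. Your constructions are essentially correct: a singleton is indeed closed under every majority operation, and with $\cl_{\bf P}$ the identity on $\bb{Z}^n$ but sending every nonempty subset of $\bb{Z}^2$ to $\bb{Z}^2$ (a legitimate closure system) one gets $\Join_{i<j}\cl_{\bf P}(\pi_{i,j}(\{v\}))=\bb{Z}^n\neq\{v\}$, so (i) $\not\nr$ (iv), hence (i) implies none of (ii)--(iv); and your $S=\{(0,0,0),(1,2,3),(2,1,3)\}$ is 2-decomposable but not $\median$-closed, so with $\cl_{\bf P}=\cl_{\median}$ on $\bb{Z}^n$ and the identity on $\bb{Z}^2$ one gets (iv) $\not\nr$ (i). Three caveats. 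First, your reading that the closure operators at different arities may be chosen independently is what the theorem's hypotheses literally permit, but it is worth flagging, since the paper's intended instances (e.g.\ $\cl_{\overline{\conv}}(\cdot)\cap\bb{Z}^k$) are uniform across dimensions; note that the paper's own remark with $S=\{(0,0,0),(2,2,3)\}$ already gives (i) $\not\nr$ (ii) for such a uniform property, though not (i) $\not\nr$ (iv). Second, your first example requires $n\ge 3$: for $n=2$ the join is trivial and (i) $\nr$ (iv) actually holds, so the decoupling is impossible there. Third, "incomparable to any other property" also requires (ii) $\not\nr$ (i) and (iii) $\not\nr$ (i); your second example does satisfy (iii), but to cover (ii) you would still need to check that your $S$ is closed under \emph{some} majority operation (it is---one can define $f$ on the permutations of $(0,1,2)$ so that all six componentwise applications land in $S$---but this verification is missing from your writeup).
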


\begin{proof}
(ii) $\nr$ (iii) follows by \cref{thm:closed-near-unanimity<=>decompo}.
For (iii) $\nr$ (iv), 
since $\pi_{i,j}(S)={\rm cl}_{\bf P}(\pi_{i,j}(S))$ for each $1\le i<j \le n$, 
we have $S = \Join_{1\le i<j\le n} \pi_{i,j}(S) = \Join_{1\le i<j\le n}{\rm cl}_{\bf P} (\pi_{i,j}(S))$.
\end{proof}

\begin{theorem}[2-decomposability and closure property induced by operations]
\label{thm:2-decompo-closure-g}
Let $S$ be a subset of $\mathbb{Z}^n$ and 
$g:\mathbb{Z}^k \to \mathbb{Z}$ be an operation.
Consider the following properties.
\begin{itemize}
\item[(i)] $S$ is closed under some majority operation and $g$.
\item[(ii)] $S$ is closed under some majority operation, and for each $1\le i<j \le n$, $\pi_{i,j}(S)$ is $g$-closed.
\item[(iii)] $S = \Join_{1\le i<j\le n} \pi_{i,j}(S)$, and for each $1\le i<j \le n$, $\pi_{i,j}(S)$ is $g$-closed.
\item[(iv)] $S = \Join_{1\le i<j\le n}\cl_{g} (\pi_{i,j}(S))$.
\end{itemize}
Then we have a chain of implications: 
(i) $\douti$ (ii) $\nr$ (iii) $\douti$ (iv).
\end{theorem}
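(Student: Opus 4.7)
The plan is to establish the five implications (i) $\Rightarrow$ (ii), (ii) $\Rightarrow$ (i), (ii) $\Rightarrow$ (iii), (iii) $\Rightarrow$ (iv), and (iv) $\Rightarrow$ (iii) by chaining together \cref{lemma:proj-closed}, \cref{thm:closed-near-unanimity<=>decompo}, and the idempotency of $\cl_g$. Each step is short; the only step with real content is (iv) $\Rightarrow$ (iii), where I first have to promote the join-representation of $S$ to $g$-closedness of $S$ before propagating it back to the projections.

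For (i) $\Rightarrow$ (ii), $g$-closedness of $S$ descends to each $\pi_{i,j}(S)$ by \cref{lemma:proj-closed}(i), and majority-closedness is carried over from (i) verbatim. For the converse (ii) $\Rightarrow$ (i), the majority-closedness of $S$ implies 2-decomposability by the (1) $\nr$ (2) direction of \cref{thm:closed-near-unanimity<=>decompo}, so $S=\Join_{1\le i<j\le n}\pi_{i,j}(S)$; since each $\pi_{i,j}(S)$ is $g$-closed by hypothesis, \cref{lemma:proj-closed}(ii) yields that the join, and hence $S$ itself, is $g$-closed. The implication (ii) $\Rightarrow$ (iii) is then immediate: majority-closedness again gives $S=\Join_{1\le i<j\le n}\pi_{i,j}(S)$ via \cref{thm:closed-near-unanimity<=>decompo}, and the $g$-closedness of each $\pi_{i,j}(S)$ is carried over.

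For (iii) $\Rightarrow$ (iv), the $g$-closedness of each $\pi_{i,j}(S)$ gives $\cl_g(\pi_{i,j}(S))=\pi_{i,j}(S)$, so $\Join_{1\le i<j\le n}\cl_g(\pi_{i,j}(S))=\Join_{1\le i<j\le n}\pi_{i,j}(S)=S$. For the converse (iv) $\Rightarrow$ (iii), I would argue in two steps. First, since $\cl_g$ is a closure operator, every $\cl_g(\pi_{i,j}(S))$ is $g$-closed, so \cref{lemma:proj-closed}(ii) implies that $S=\Join_{1\le i<j\le n}\cl_g(\pi_{i,j}(S))$ is $g$-closed. Second, applying \cref{lemma:proj-closed}(i) to this newly established $g$-closedness of $S$ shows that each $\pi_{i,j}(S)$ is $g$-closed, hence $\cl_g(\pi_{i,j}(S))=\pi_{i,j}(S)$; substituting into (iv) yields $S=\Join_{1\le i<j\le n}\pi_{i,j}(S)$, which is precisely 2-decomposability.

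The whole argument is essentially a bookkeeping exercise over the prior lemmas; there is no nontrivial combinatorial or geometric step. The main thing to get right is the circular flavour of (iv) $\Rightarrow$ (iii), where one must first exploit that a join of $g$-closed relations is $g$-closed in order to conclude $g$-closedness of $S$, and only afterwards use \cref{lemma:proj-closed}(i) to conclude the analogous property for its two-dimensional projections and thereby collapse each $\cl_g$ back to the identity.
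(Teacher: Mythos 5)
Your proposal is correct and follows essentially the same route as the paper: the same appeals to Lemma~\ref{lemma:proj-closed} and Theorem~\ref{thm:closed-near-unanimity<=>decompo} for (i)~$\Leftrightarrow$~(ii)~$\Rightarrow$~(iii)~$\Rightarrow$~(iv), and for (iv)~$\Rightarrow$~(iii) the same two-step move of first deducing $g$-closedness of $S$ from the join of $g$-closed closures and then pushing it down to the projections. The only cosmetic difference is that the paper extracts $S=\Join_{1\le i<j\le n}\pi_{i,j}(S)$ up front from the sandwich $S\subseteq\Join\pi_{i,j}(S)\subseteq\Join\cl_g(\pi_{i,j}(S))$, whereas you obtain it at the end by collapsing each $\cl_g$ to the identity; both are valid.
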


\begin{proof}
(i) $\nr$ (ii) directly follows by \cref{lemma:proj-closed} (i). 
For (ii) $\nr$ (i), since $S$ is closed under some majority operation, it is 2-decomposable, i.e., $S = \Join_{1\le i<j\le n}(\pi_{i,j}(S))$.
Then $S$ is $g$-closed from \cref{lemma:proj-closed} (ii).
(ii) $\nr$ (iii) and (iii) $\nr$ (iv) follow from \cref{thm:2-decompo-general-closure}.

Now, we show (iv) $\nr$ (iii).
In general, we have $S \subseteq \Join_{1\le i<j\le n} \pi_{i,j}(S)$ and $\Join_{1\le i<j\le n} \pi_{i,j}(S) \subseteq \Join_{1\le i<j\le n}{\rm cl}_{g} (\pi_{i,j}(S))$.
Hence, $S = \Join_{1\le i<j\le n}{\rm cl}_{g} (\pi_{i,j}(S))$ implies $S = \Join_{1\le i<j\le n} \pi_{i,j}(S)$.
From \cref{lemma:proj-closed} (ii), $S$ is $g$-closed.
Then from \cref{lemma:proj-closed} (i), for each $1\le i<j \le n$, $\pi_{i,j}(S)$ is $g$-closed.
This completes the proof.
\end{proof}

\begin{theorem}[2-decomposability and closure property induced by special operations]
\label{thm:2-decompo-closure-special-g}
Let $S$ be a subset of $\mathbb{Z}^n$ and 
$g:\mathbb{Z}^k \to \mathbb{Z}$ be an operation.
Assume that there exists a majority operation $h$ such that for any $R \subseteq \mathbb{Z}^2$ if $R$ is $g$-closed then it is $h$-closed.
Then the following are equivalent.
\begin{itemize}
\item[(i)] $S$ is closed under some majority operation  and $g$.
\item[(ii)] $S$ is closed under some majority operation, and for each $1\le i<j \le n$, $\pi_{i,j}(S)$ is $g$-closed.
\item[(iii)] $S = \Join_{1\le i<j\le n} \pi_{i,j}(S)$, and for each $1\le i<j \le n$, $\pi_{i,j}(S)$ is $g$-closed.
\item[(iv)] $S = \Join_{1\le i<j\le n}\cl_{g} (\pi_{i,j}(S))$.
\end{itemize}
\end{theorem}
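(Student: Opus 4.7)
The plan is to leverage \cref{thm:2-decompo-closure-g} to reduce this theorem to establishing a single new implication. That theorem already yields the chain (i) $\douti$ (ii) $\nr$ (iii) $\douti$ (iv) without any assumption on $g$. Hence to obtain the full equivalence, it suffices to close the chain with (iii) $\nr$ (ii), and this is exactly where the additional hypothesis on $g$ and $h$ should come in.

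To prove (iii) $\nr$ (ii), assume $S = \Join_{1\le i<j\le n} \pi_{i,j}(S)$ and that each $\pi_{i,j}(S)$ is $g$-closed. By the hypothesis on $g$ and $h$, applied to each two-dimensional relation $R = \pi_{i,j}(S) \be \bb{Z}^2$, every $\pi_{i,j}(S)$ is in fact $h$-closed. Applying \cref{lemma:proj-closed} (ii) to the $h$-closed family $\{\pi_{i,j}(S)\}_{1\le i<j\le n}$, the join $\Join_{1\le i<j\le n}\pi_{i,j}(S)$ is $h$-closed. By the 2-decomposability assumption in (iii), this join equals $S$, so $S$ is closed under the majority operation $h$. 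Combined with the $g$-closedness of each $\pi_{i,j}(S)$ that is already part of (iii), this gives (ii).

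Together with the implications (i) $\douti$ (ii), (ii) $\nr$ (iii), and (iii) $\douti$ (iv) from \cref{thm:2-decompo-closure-g}, the newly-established (iii) $\nr$ (ii) closes the cycle and yields the equivalence of all four conditions. There is no essential obstacle: the entire novelty of this theorem over \cref{thm:2-decompo-closure-g} lies in the hypothesis that binary $g$-closedness entails closedness under a fixed majority, and the argument is just to propagate that majority-closedness from the two-dimensional projections back to $S$ via the join, exactly as \cref{lemma:proj-closed} (ii) is designed to do.
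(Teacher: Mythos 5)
Your proof is correct and takes essentially the same approach as the paper: the paper closes the cycle by proving (iv) $\Rightarrow$ (i), applying the hypothesis on $g$ and $h$ to the $g$-closed sets $\cl_{g}(\pi_{i,j}(S))$ and then invoking \cref{lemma:proj-closed} (ii), while you close it by proving (iii) $\Rightarrow$ (ii) with the identical mechanism applied to the $g$-closed sets $\pi_{i,j}(S)$. The choice of which arc to add is immaterial, and both arguments rest on the same two ingredients.
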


\begin{proof}
By \cref{thm:2-decompo-closure-g}, it suffices to show that (iv) $\nr$ (i).
Assume $S = \Join_{1\le i<j\le n}\cl_{g} (\pi_{i,j}(S))$.
For each $1\le i<j \le n$, $\cl_{g} (\pi_{i,j}(S))$ is closed under majority operation $h$ by assumption.
Hence, $S$ is $h$-closed and $g$-closed by \cref{lemma:proj-closed}, implying that (i) holds.
\end{proof}

\begin{remark}
Contrast to \cref{thm:2-decompo-closure-special-g}, in \cref{thm:2-decompo-closure-g} (iii) does not imply (ii) for general operation $g$.
For example, when $g$ is a projection, (iii) means $S$ is 2-decomposable and (ii) means $S$ is closed under some majority operation, which are not equivalent 
(see \cref{sec:2-decomp-characterization}).
\end{remark}

\begin{theorem}[2-decomposability and closure property induced by $\cl_{\overline{\conv}}(\cdot) \cap \mathbb{Z}^n$]
\label{thm:2-decompo-closure-cl(conv)}
Let $S$ be a subset of $\mathbb{Z}^n$.
Consider the following properties.
\begin{itemize}
\item[(i)] $S$ is closed under some majority operation and closed in terms of $\cl_{\overline{\conv}}(\cdot) \cap \mathbb{Z}^n$, i.e., $S = \cl_{\overline{\conv}}(S) \cap \mathbb{Z}^n$.
\item[(ii)] $S$ is closed under some majority operation, and for each $1\le i<j \le n$, $\pi_{i,j}(S)$ is closed in terms of $\cl_{\overline{\conv}}(\cdot) \cap \mathbb{Z}^2$, i.e., $\pi_{i,j}(S) = \cl_{\overline{\conv}}(\pi_{i,j}(S)) \cap \mathbb{Z}^2$.
\item[(iii)] $S = \Join_{1\le i<j\le n} \pi_{i,j}(S)$, and for each $1\le i<j \le n$, $\pi_{i,j}(S)$ is closed in terms of $\cl_{\overline{\conv}}(\cdot) \cap \mathbb{Z}^2$, i.e., $\pi_{i,j}(S) = \cl_{\overline{\conv}}(\pi_{i,j}(S)) \cap \mathbb{Z}^2$.
\item[(iv)] $S = \Join_{1\le i<j\le n}(\cl_{\overline{\conv}}(\pi_{i,j}(S))\cap \mathbb{Z}^2)$.
\end{itemize}
Then we have a chain of implications: 
(ii) $\douti$ (iii) $\nr$ (iv) $\nr$ (i).
\end{theorem}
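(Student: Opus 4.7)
I would prove the chain (ii)$\douti$(iii)$\nr$(iv)$\nr$(i) in sequence. The projection condition $\pi_{i,j}(S)=\cl_{\overline{\conv}}(\pi_{i,j}(S))\cap\bb{Z}^2$ is shared by (ii) and (iii), so the only work for (ii)$\douti$(iii) lies in the first clauses. For (ii)$\nr$(iii), an application of \cref{thm:closed-near-unanimity<=>decompo} immediately gives 2-decomposability. For (iii)$\nr$(ii), I would use that each $\pi_{i,j}(S)$ is the set of integer points of a closed convex subset of $\bb{R}^2$; by the supporting-hyperplane description, every such set is an intersection of closed half-planes and is therefore representable as a (possibly infinite) conjunction of two-variable inequalities. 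By \cref{rem:infiniteTVPI=>median-closed}, $\pi_{i,j}(S)$ is then median-closed, and \cref{lemma:proj-closed}~(ii), combined with $S=\Join_{1\le i<j\le n}\pi_{i,j}(S)$, propagates median-closedness to $S$. Since $\median$ is a majority operation, (ii) follows.

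The implication (iii)$\nr$(iv) is a direct substitution of the projection identity into the join expression for $S$:
\[S=\Join_{1\le i<j\le n}\pi_{i,j}(S)=\Join_{1\le i<j\le n}\bigl(\cl_{\overline{\conv}}(\pi_{i,j}(S))\cap\bb{Z}^2\bigr).\]

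The principal step is (iv)$\nr$(i). Majority-closedness of $S$ is obtained exactly as in (iii)$\nr$(ii): each factor $\cl_{\overline{\conv}}(\pi_{i,j}(S))\cap\bb{Z}^2$ is the integer part of a planar closed convex set, hence median-closed by \cref{rem:infiniteTVPI=>median-closed}, and \cref{lemma:proj-closed}~(ii) transfers this to $S$. For the identity $S=\cl_{\overline{\conv}}(S)\cap\bb{Z}^n$, I would construct an explicit closed convex witness in $\bb{R}^n$ by setting
\[Q:=\bigcap_{1\le i<j\le n}\pi_{i,j}^{-1}\bigl(\cl_{\overline{\conv}}(\pi_{i,j}(S))\bigr),\]
which is closed and convex as a finite intersection of preimages of closed convex sets under the continuous linear maps $\pi_{i,j}\colon\bb{R}^n\to\bb{R}^2$. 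Assumption (iv) reads precisely as $S=Q\cap\bb{Z}^n$, so $S\be Q$, and closedness and convexity of $Q$ force $\cl_{\overline{\conv}}(S)\be Q$; intersecting with $\bb{Z}^n$ yields $\cl_{\overline{\conv}}(S)\cap\bb{Z}^n\be Q\cap\bb{Z}^n=S$, while extensivity of $\cl_{\overline{\conv}}$ gives the reverse inclusion. The only non-trivial external input needed is the supporting-hyperplane identification of $\cl_{\overline{\conv}}(\pi_{i,j}(S))$ with an infinite system of two-variable inequalities so that \cref{rem:infiniteTVPI=>median-closed} applies; this is standard convex analysis and is the main conceptual step of the argument, with everything else being routine join/projection bookkeeping.
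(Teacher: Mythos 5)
Your proposal is correct and follows essentially the same route as the paper: (ii)$\Rightarrow$(iii) via \cref{thm:closed-near-unanimity<=>decompo}, (iii)$\Rightarrow$(ii) and the median-closedness in (iv)$\Rightarrow$(i) via the half-plane description of planar closed convex sets together with \cref{rem:infiniteTVPI=>median-closed} and \cref{lemma:proj-closed}~(ii), and (iii)$\Rightarrow$(iv) by substitution. Your explicit closed convex witness $Q=\bigcap_{i<j}\pi_{i,j}^{-1}(\cl_{\overline{\conv}}(\pi_{i,j}(S)))$ merely spells out the step the paper compresses into ``$S$ is the set of integer vectors in an intersection of closed convex sets.''
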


\begin{proof}
By \cref{thm:2-decompo-general-closure}, it suffices to show that (iii) $\nr$ (ii) and (iv) $\nr$ (i).

For (iii) $\nr$ (ii), 
assume that $S = \Join_{1\le i<j\le n} \pi_{i,j}(S)$, and for each $1\le i<j \le n$, $\pi_{i,j}(S) = \cl_{\overline{\conv}}(\pi_{i,j}(S)) \cap \mathbb{Z}^2$.
Then, each $\pi_{i,j}(S)$ is represented as a TVPI system with possibly infinitely many inequalities, and thus it is closed under the median operation (see \cref{rem:infiniteTVPI=>median-closed}).
It follows from \cref{lemma:proj-closed} (ii) that $S$ is closed the median operation.
Hence (ii) holds.

We then show that (iv) $\nr$ (i). 
(iv) means that $S$ is represented as a TVPI system with possibly infinitely many inequalities.
Hence, $S$ is closed under the median operation (see \cref{rem:infiniteTVPI=>median-closed}). Moreover, since $S$ is the set of integer vectors in an intersection of closed convex sets, we have $S = \cl_{\overline{\conv}}(S) \cap \mathbb{Z}^n$.
\end{proof}

\begin{proposition}[Hierarchy of closedness in 2-dimension]
\label{prop:hierarchy-in-2-dim}
For a subset $S \subseteq \mathbb{Z}^2$, 
consider the following properties.
\begin{itemize}
\item[(i)] $S$ is midpoint-neighbor-closed.
\item[(ii)] $S$ is closed under $g(x,y) = \lceil \frac{x+y}{2} \rceil$ and $h(x,y) = \lfloor \frac{x+y}{2} \rfloor$
\item[(iii)] $S$ is $\avgd$-closed.
\item[(iii)'] $S$ is integrally convex.
\item[(iv)] $S = \cl_{\overline{\conv}}(S)\cap \mathbb{Z}^2$.
\item[(v)] $S$ is closed under the median operation.
\item[(vi)] $S$ is closed under some majority operation.
\end{itemize}
Then we have a chain of implications: 
(i) $\nr$ (ii) $\nr$ (iii) $\douti$ (iii)' $\nr$ (iv) $\nr$ (v) $\nr$ (vi).
\end{proposition}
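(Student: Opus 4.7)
The plan is to prove the chain one implication at a time, establishing the equivalence (iii) $\Leftrightarrow$ (iii)' via two different routes: the forward direction by a midpoint argument combined with \cref{thm:IntConvSet-characterization}, and the reverse direction by passing through the UTVPI characterization \cref{prop:two-dim-IntConv=UTVPI}.

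For (i) $\Rightarrow$ (ii), the observation is that for any real $w$ one has $|\lceil w \rceil - w| < 1$ and $|\lfloor w \rfloor - w| < 1$, so both $g(x,y)$ and $h(x,y)$ land componentwise in $N(\frac{x+y}{2})$, which is in $S$ by midpoint-neighbor-closedness. For (ii) $\Rightarrow$ (iii), I would chain (ii) $\Rightarrow$ (iii)' $\Rightarrow$ (iii). The identity $\lceil w \rceil + \lfloor w \rfloor = 2w$ for half-integer or integer $w$ gives $g(x,y) + h(x,y) = x+y$, so $(x+y)/2$ is the midpoint of two points of $S \cap N(\frac{x+y}{2})$, and \cref{thm:IntConvSet-characterization} yields (iii)'. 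The same identity gives $\avgd(x,y) + \avgd(y,x) = x+y$, so exactly the same midpoint argument proves (iii) $\Rightarrow$ (iii)'. For the converse (iii)' $\Rightarrow$ (iii), I invoke \cref{prop:two-dim-IntConv=UTVPI} to write $S = P \cap \mathbb{Z}^2$ for a UTVPI polyhedron $P$, and verify directly that every inequality of the form $\pm x_i \ge c$ or $\pm x_1 \pm x_2 \ge c$ with integer $c$ is preserved under $\avgd$ by a parity/rounding case analysis on the signs of $x_j - y_j$ and the parities of $x_j + y_j$.

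For (iii)' $\Rightarrow$ (iv), the UTVPI polyhedron $P$ above is closed convex, so $\cl_{\overline{\conv}}(S) \be P$, giving $\cl_{\overline{\conv}}(S) \cap \mathbb{Z}^2 \be P \cap \mathbb{Z}^2 = S$; the reverse inclusion is trivial. For (iv) $\Rightarrow$ (v), I would represent $\cl_{\overline{\conv}}(S)$ as the intersection of its (possibly infinitely many) closed supporting half-spaces in $\mathbb{R}^2$, each defined by a linear inequality in at most two variables, and apply \cref{rem:infiniteTVPI=>median-closed} to conclude $S$ is median-closed. Finally, (v) $\Rightarrow$ (vi) is immediate because the median operation is a majority operation by definition.

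The principal obstacle is the parity/rounding analysis in the (iii)' $\Rightarrow$ (iii) step. Specifically, for an inequality like $x_1 - x_2 \ge c$ under the case $x_1 < y_1$ and $x_2 \ge y_2$, a direct lower bound on $\lfloor (x_1+y_1)/2 \rfloor - \lceil (x_2+y_2)/2 \rceil$ only yields $c - 1$; recovering the missing integer unit requires noticing that the strict inequality $x_1 < y_1$ forces $y_1 - y_2 \ge c + 1$, which, combined with the parity of $(x_1+y_1) - (x_2+y_2)$ and integrality of $c$, restores the full bound $\ge c$. Several symmetric subcases of this kind need to be treated uniformly.
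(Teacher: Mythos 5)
Your proof is correct, and its skeleton --- funneling everything through the two-dimensional equivalence of integral convexity and UTVPI representability (\cref{prop:two-dim-IntConv=UTVPI}), the midpoint criterion of \cref{thm:IntConvSet-characterization}, and \cref{rem:infiniteTVPI=>median-closed} for the median step --- is the same as the paper's. You differ in two places. First, the paper disposes of (ii) $\nr$ (iii) by citing \cite[Corollary 2]{TaT21}, whereas you derive it internally as (ii) $\nr$ (iii)' $\nr$ (iii), using $g(x,y)+h(x,y)=x+y$ with both values in $N(\frac{x+y}{2})$ to verify the hypothesis of \cref{thm:IntConvSet-characterization}; this makes the chain self-contained (your midpoint argument for (iii) $\nr$ (iii)' via $\avgd(x,y)+\avgd(y,x)=x+y$ is exactly the paper's \cref{lem:avgd=>int-conv}). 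Second, for (iii)' $\nr$ (iii) the paper simply invokes \cref{prop:UTVPI-median-avg-closed} (UTVPI-representable sets are $\avgd$-closed, proved in \cref{subsec:remaining-proof}), while you re-derive that fact; in the one delicate subcase (both coordinate sums odd and the two roundings going opposite ways) you recover the lost unit by noting that the strict inequalities force one point's slack to be at least $c+1$, so the sum of slacks is at least $2c+1$, hence $2c+2$ by parity --- whereas the paper's Case 1.4 instead bounds $\avgd(p_j,q_j)$ and $\avgd(p_k,q_k)$ directly by the coordinates of $p$ and compares with the inequality at $p$. Both resolutions are valid; yours is arithmetic, the paper's is order-theoretic. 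The remaining steps ((i) $\nr$ (ii), (iii)' $\nr$ (iv) via the closed convex polyhedron from \cref{prop:two-dim-IntConv=UTVPI} rather than the paper's \cref{lem:two-dim}, (iv) $\nr$ (v), (v) $\nr$ (vi)) match the paper essentially verbatim.
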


\begin{proof}
(i) $\nr$ (ii) is clear.
(ii) $\nr$ (iii) is shown in shown in \cite[Corollary 2]{TaT21}. 
(iii) $\nr$ (iii)' is also shown in \cite{TaT21} (see also \cref{lem:avgd=>int-conv}).

For (iii)' $\nr$ (iii), from \cref{prop:two-dim-IntConv=UTVPI}, 
if $S \subseteq \mathbb{Z}^2$ is integrally convex, then it can be represented by a UTVPI system. Then, $S$ is $\avgd$-closed from \cref{prop:UTVPI-median-avg-closed} in \cref{sec:pohyhedra-closure}.

For (iii) $\nr$ (iv), 
since $S$ is $\mu$-closed, it can be represented by a UTVPI system by \cref{lem:two-dim}.
Since a polyhedron is convex and topologically closed, (iv) holds.
(iv) $\nr$ (v) follows from \cref{rem:infiniteTVPI=>median-closed}. 
(v) $\nr$ (vi) is clear since the median operation is a majority operation.    
\end{proof}

\section{Polyhedral representation and closedness under operations}
\label{sec:pohyhedra-closure}

We here analyze the relationship between the set of integer vectors in 2-decomposable polyhedra and closedness under operations using the results in \cref{sec:2-decomp-closure}.

\subsection{Main results}
\label{subsec:main-results}

Using 
\cref{thm:2-decompo-closure-special-g,thm:2-decompo-closure-cl(conv),prop:hierarchy-in-2-dim}, 
we obtain the following.
The proof of (iii) is based on the results in \cref{subsec:remaining-proof}, which constitutes the most technical part in this paper.

\begin{theorem}\label{thm:characterization-by-closure-property}
Let $S$ be a subset of $\mathbb{Z}^n$.
\begin{itemize}
\item[(i)] $S$ is representable by a SVPI system if and only if 
$S$ is midpoint-neighbor-closed.
%, and from \cref{thm:2-decompo-closure-special-g}, these are equivalent to, say,  $S = \Join_{1\le i<j\le n}\cl_{\rm mpn} (\pi_{i,j}(S))$ for midpoint-neighbor-closure $\cl_{\rm mpn}$.
\item[(ii)] $S$ is representable by a DC system if and only if 
$S$ is closed under $g(x,y) = \lceil \frac{x+y}{2} \rceil$ and $h(x,y) = \lfloor \frac{x+y}{2} \rfloor$, and from \cref{thm:2-decompo-closure-special-g}, these are equivalent to, say,  
$S = \Join_{1\le i<j\le n}\cl_{g,h} (\pi_{i,j}(S))$.
\item[(iii)] $S$ is representable by a UTVPI system if and only if $S$ is closed under some majority operation and $\avgd$, and from \cref{thm:2-decompo-closure-special-g}, these are equivalent to, say, 
$S = \Join_{1\le i<j\le n}\cl_{\avgd} (\pi_{i,j}(S))$.
\item[(iv)] $S$ is representable by a TVPI system with possibly infinitely many inequalities if and only if 
$S = \Join_{1\le i<j\le n}\cl_{\overline{\conv}}(\pi_{i,j}(S))\cap \mathbb{Z}^n$.
\item[(v)] $S$ is closed under the median operation if and only if 
$S = \Join_{1\le i<j\le n}\cl_{\median} (\pi_{i,j}(S))$.
\item[(vi)] $S$ is closed under some majority operation if and only if 
$S = \Join_{1\le i<j\le n}\cl_{g} (\pi_{i,j}(S))$ for some majority operation $g:\mathbb{Z}^3 \to \mathbb{Z}$.
\end{itemize}
\end{theorem}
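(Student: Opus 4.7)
The overall plan is to handle each of the six equivalences by combining the 2-decomposability machinery of \cref{thm:2-decompo-closure-special-g,thm:2-decompo-closure-cl(conv)} with the two-dimensional hierarchy of \cref{prop:hierarchy-in-2-dim} and the 2D characterization \cref{prop:two-dim-IntConv=UTVPI}. Once each projection $\pi_{i,j}(S)$ is shown to be representable in 2D by the target class (SVPI, DC, UTVPI, or TVPI), the corresponding $n$-dimensional representation follows immediately by taking the join, since every defining inequality of a 2D polyhedron involves at most two variables and therefore lifts unchanged to $\mathbb{R}^n$.

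Parts (iv)--(vi) require almost no additional work. For (vi), apply \cref{thm:2-decompo-closure-special-g} with $g = h$ equal to the witnessing majority operation; the equivalence (i)$\Leftrightarrow$(iv) of that theorem is exactly the statement. For (v), use the same theorem with $g = h = \median$, whose dominating-majority hypothesis is automatic since $\median$ is itself a majority operation. For (iv), \cref{thm:2-decompo-closure-cl(conv)} yields that condition (iv) is equivalent to $S$ being the integer points of an intersection of 2D closed convex sets, each described by (possibly infinitely many) two-variable linear inequalities, which is precisely an (infinite) TVPI representation.

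For (ii), the forward direction invokes the classical \Lnatural-convexity characterization of \cite{Murota03}, giving DC-representability iff $g,h$-closedness. For the join-of-closures reformulation, note that in 2D the hierarchy of \cref{prop:hierarchy-in-2-dim} yields (ii)$\Rightarrow$(iii)$\Leftrightarrow$(iii)', and integral convexity in 2D is UTVPI-representability by \cref{prop:two-dim-IntConv=UTVPI}, which in turn implies median-closedness by \cref{prop:TVPI=>median-closed}; thus the extension of \cref{thm:2-decompo-closure-special-g} to simultaneous closedness under $g$ and $h$ applies with dominating majority $\median$. For (i), observe that midpoint-neighbor-closedness implies $g,h$-closedness (since $g(x,y), h(x,y) \in N(\tfrac{x+y}{2})$), so (ii) already supplies a DC representation; the extra strength of having the entire neighborhood $N(\tfrac{x+y}{2}) \subseteq S$ is then used, by iterated bisection, to show that each 2D projection $\pi_{i,j}(S)$ is a Cartesian product of integer intervals, hence SVPI.

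Part (iii) is the most technical and is proved in \cref{subsec:remaining-proof}. Its forward direction combines \cref{prop:TVPI=>median-closed} (UTVPI is TVPI, hence median-closed) with \cref{prop:UTVPI-median-avg-closed} (UTVPI implies $\avgd$-closed). The backward direction applies \cref{thm:2-decompo-closure-special-g} with $g = \avgd$ and dominating majority $h = \median$; the hypothesis that $\avgd$-closed sets in 2D are median-closed follows from the chain (iii)$\Leftrightarrow$(iii)'$\Rightarrow$(iv)$\Rightarrow$(v) in \cref{prop:hierarchy-in-2-dim} together with \cref{prop:two-dim-IntConv=UTVPI}. This expresses $S$ as $\Join_{1 \le i < j \le n} \cl_{\avgd}(\pi_{i,j}(S))$, where each $\cl_{\avgd}(\pi_{i,j}(S))$ is 2D UTVPI-representable via integral convexity. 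The main obstacle, which is the reason (iii) is deferred, is establishing the 2D facts \cref{prop:UTVPI-median-avg-closed} and the equivalence of $\avgd$-closedness with integral convexity in 2D; these constitute the technical heart of \cref{subsec:remaining-proof}.
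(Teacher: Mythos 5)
Your plan is correct and, for parts (ii)--(vi), matches the paper's proof essentially step for step: (v) and (vi) are direct instantiations of \cref{thm:2-decompo-closure-special-g}, (ii) cites the known \Lnatural-convexity result, and (iii) reduces to the two-dimensional facts (\cref{prop:UTVPI-median-avg-closed} and \cref{lem:two-dim}) proved in \cref{subsec:remaining-proof}, exactly as the paper does. Two points diverge. First, for the if-direction of (i) the paper also routes through (ii) to obtain a DC representation and 2-decomposability, but then finishes differently: it fixes a DC representation of each $\pi_{i,j}(S)$ and argues that every difference inequality $x_i-x_j\ge c$ can be discarded, since if the line $x_i-x_j=c$ met $\pi_{i,j}(S)$ in two points $p,q$ then $(\lfloor\frac{p_i+q_i}{2}\rfloor,\lceil\frac{p_j+q_j}{2}\rceil)\in N(\frac{p+q}{2})\subseteq\pi_{i,j}(S)$ would violate that very inequality. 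Your alternative --- showing by iterated bisection that each two-dimensional projection is a full product of integer intervals --- is equally valid and arguably more transparent, but it is precisely the step you leave as a sketch, and you do not state the (easy) only-if direction that a box contains $N(\frac{x+y}{2})$ for any two of its points. Second, for (iv) you appeal to \cref{thm:2-decompo-closure-cl(conv)}, which only provides a one-way chain of implications rather than the claimed equivalence; the paper instead proves (iv) directly, the sole nontrivial ingredient being that $\cl_{\overline{\conv}}(\pi_{i,j}(S))$ is the intersection of all closed half-planes containing $\proj_{i,j}(S)$ and hence is cut out by (possibly infinitely many) two-variable inequalities --- which is the content of your one-line justification, so the idea is present even though the citation is not quite what does the work.
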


\begin{proof}
(i): 
We first show the only-if part.
Assume that $S$ is representable by a SVPI system.
Take any $x,y \in S$.
Since $S$ is representable by a SVPI system, interval $[x \wedge y, x \vee y ]_{\mathbb{Z}}$ is contained in $S$, i.e., $[x \wedge y, x \vee y ]_{\mathbb{Z}} \subseteq S$.
Since for each $z \in N(\frac{x+y}{2})$, $z \in [x \wedge y, x \vee y ]_{\mathbb{Z}}$ holds, 
$S$ is midpoint-neighbor-closed.

We then show the if part.
Assume that $S$ is midpoint-neighbor-closed. Then $S$ (and each $\pi_{i,j}(S)$) is also  closed under $g(x,y) = \lceil \frac{x+y}{2} \rceil$ and $h(x,y) = \lfloor \frac{x+y}{2} \rfloor$.
Thus, by (ii) shown below, $S$ (and each $\pi_{i,j}(S)$) is representable by a DC system.
It follows that $S$ is 2-decomposable, i.e., $S = \Join_{1\le i<j\le n}\pi_{i,j}(S)$, by \cref{prop:TVPI=>median-closed,thm:closed-near-unanimity<=>decompo}.
Then it suffices to show that each $\pi_{i,j}(S)$ is representable by a SVPI system.
Fix a DC representation of $\pi_{i,j}(S)$ and denote it by $Ax \ge b$.
Let $x_i - x_j \ge c$ be a difference inequality in $Ax \ge b$.
We show that $x_i - x_j = c$ intersects $\pi_{i,j}(S)$ at less than two points, which implies that removing $x_i - x_j \ge c$ from $Ax \ge b$ and appropriately adding $x_i\ge c_1$ and $-x_j \ge c_2$ to it result in a representation of $\pi_{i,j}(S)$ with less difference inequalities than $Ax \ge b$.
Assume otherwise that $x_i - x_j = c$ intersects $\pi_{i,j}(S)$ at two (distinct) points, say, $p$ and $q$.
We can assume that $(p_i.p_j) = (q_i+1,q_j+1)$.
Since $\pi_{i,j}(S)$ is midpoint-neighbor-closed, each $z \in N(\frac{p+q}{2})$ is in $\pi_{i,j}(S)$.
Then $\lfloor \frac{p_i+q_i}{2} \rfloor - \lceil \frac{p_j+q_j}{2} \rceil < \frac{p_i+q_i}{2} - \frac{p_j+q_j}{2} = c$.
Thus $z := (\lfloor \frac{p_i+q_i}{2} \rfloor, \lceil \frac{p_j+q_j}{2} \rceil)$ violates inequality $x_i - x_j \ge c$.
However, $z$ is in $N(\frac{p+q}{2})$, a contradiction.
Similarly, we can remove $x_j - x_i \ge c$ from $Ax \ge b$.
Hence, $\pi_{i,j}(S)$ is representable by a SVPI system.

(ii) is a well-known result in discrete convex analysis; see, e.g., Equation (5.20) in~\cite{Murota03}.

(iii): 
We first show the only-if part.
If $S$ is representable by a UTVPI system, 
by Theorem \ref{thm:main} (or Proposition \ref{prop:UTVPI-median-avg-closed}), 
$S$ is closed under $\avgd$ and $\median$.
Moreover, the $\median$ operation is a majority operation. 

We then show the if part.
Assume that $S$ is closed under some majority operation and $\avgd$.
Then $S$ is 2-decomposable, i.e., $S= \Join_{1\le i<j\le n}\pi_{i,j}(S)$ by \cref{thm:closed-near-unanimity<=>decompo}.
Since each $\pi_{i,j}(S)$ is closed under $\avgd$ by \cref{lemma:proj-closed}, each $\pi_{i,j}(S)$ can be represented by a UTVPI system from \cref{lem:two-dim}.
Hence, $S$ is representable by a UTVPI system.

(iv): 
We first show the only-if part.
Assume that $S$ is representable by a TVPI system with possibly infinitely many inequalities. 
This implies that $S = \{ x \in \mathbb{Z}^n \mid a_{k1}^{(i,j)}x_i + a_{k2}^{(i,j)}x_j \ge b_k^{(i,j)} (1 \le i < j \le n, k \in I_{i,j}) \}$ for some $a_{k1}^{(i,j)},a_{k2}^{(i,j)},b_k^{(i,j)} \in \mathbb{R}$ and a (possibly infinite) set $I_{i,j}$ of indices.
For each $1 \le i < j \le n$, let $S_{ij} = \{ x \in \mathbb{Z}^2 \mid a_{k1}^{(i,j)}x_i + a_{k2}^{(i,j)}x_j \ge b_k^{(i,j)} (k \in I_{i,j}) \}$.
Then, for $x \in \mathbb{Z}^n$, 
$x \in S$ if and only if $(x_i,x_j) \in S_{ij}$ for $1 \le i < j \le n$.
This implies that $S = \Join_{1\le i<j\le n}S_{ij}$.
Now, $S_{ij} = \cl_{\overline{\conv}}(S_{ij})\cap \mathbb{Z}^2$.
Hence, $S = \Join_{1\le i<j\le n}\cl_{\overline{\conv}}(\pi_{i,j}(S))\cap \mathbb{Z}^n$ holds.

We then show the if part.
Note that $S = \Join_{1\le i<j\le n}\cl_{\overline{\conv}}(\pi_{i,j}(S))\cap \mathbb{Z}^n$ is equivalent to the following condition:
\begin{itemize}
\item for $x \in \mathbb{Z}^n$, 
$x \in S$ if and only if $(x_i,x_j) \in \cl_{\overline{\conv}}(\pi_{i,j}(S))\cap \mathbb{Z}^2$ for $1 \le i < j \le n$.
\end{itemize}
Each $\cl_{\overline{\conv}}(\pi_{i,j}(S))\cap \mathbb{Z}^2$ can be represented as $\cl_{\overline{\conv}}(\pi_{i,j}(S))\cap \mathbb{Z}^2 = \{ x \in \mathbb{Z}^2 \mid a_{k1}^{(i,j)}x_i + a_{k2}^{(i,j)}x_j \ge b_k^{(i,j)} (k \in I_{i,j}) \}$ for some $a_{k1}^{(i,j)},a_{k2}^{(i,j)},b_k^{(i,j)} \in \mathbb{R}$ and a (possibly infinite) set $I_{i,j}$ of indices, since $\cl_{\overline{\conv}}(\pi_{i,j}(S))$ is the intersection of all the closed half-spaces containing $\proj_{i,j}(S)$ (see \cite[Corollary 11.5.1]{Roc70}).
Hence, for $x \in \mathbb{Z}^n$, 
$x \in S$ if and only if $a_{k1}^{(i,j)}x_i + a_{k2}^{(i,j)}x_j \ge b_k^{(i,j)} (1 \le i < j \le n, k \in I_{i,j})$.
This implies that $S = \{ x \in \mathbb{Z}^n \mid a_{k1}^{(i,j)}x_i + a_{k2}^{(i,j)}x_j \ge b_k^{(i,j)} (1 \le i < j \le n, k \in I_{i,j}) \}$.
Hence, $S$ is representable by a TVPI system with possibly infinitely many inequalities. 

(v) follows from \cref{thm:2-decompo-closure-special-g,prop:hierarchy-in-2-dim}.

(vi) follows from \cref{thm:2-decompo-closure-special-g}.
\end{proof}

Combining \cref{thm:characterization-by-closure-property} and the following proposition, we can characterize some closure operators.

\begin{proposition}
\label{prop:equiv-of-closure}
Let $\bf{Q}$ be a property for a subset of ${\mathbb{Z}}^n$ which has closure operator ${\rm cl}_{\bf Q}$, and 
$\bf{P}$ be a property for a subset of $\mathbb{Z}^2$ which has closure operator ${\rm cl}_{\bf P}$.
Then the following are equivalent. 
\begin{itemize}
\item[(i)] 
For any $S\subseteq \mathbb{Z}^n$, 
${\bf{Q}}(S)\iff S
=\Join_{1\le i<j\le n}{\rm cl}_{\bf P}(\pi_{i,j}(S))$
\item[(ii)]
For any $S\subseteq \mathbb{Z}^n$, 
${\rm cl}_{\bf{Q}}(S)=\Join_{1\le i<j\le n}{\rm cl}_{\bf P}(\pi_{i,j}(S))$.
\end{itemize}
\end{proposition}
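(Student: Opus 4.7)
The plan is to observe that by \cref{lemma:2-decompo-closure-operator}, the operator $T \colon S \mapsto \Join_{1\le i<j\le n}\mathrm{cl}_{\mathbf{P}}(\pi_{i,j}(S))$ is itself a closure operator on subsets of $\mathbb{Z}^n$, and then exploit the general principle that a closure operator is uniquely determined by its family of fixed points (the ``closed'' sets). Property $\mathbf{Q}$ being a ``property with closure operator $\mathrm{cl}_{\mathbf{Q}}$'' is, by the convention established earlier in the paper, simply the statement $\mathbf{Q}(S) \Leftrightarrow S = \mathrm{cl}_{\mathbf{Q}}(S)$, and similarly $T(S) = S$ characterizes the fixed points of $T$.

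For the easy implication (ii) $\Rightarrow$ (i), I would just unfold definitions: assuming $\mathrm{cl}_{\mathbf{Q}}(S) = T(S)$ for every $S$, we obtain $\mathbf{Q}(S) \Leftrightarrow S = \mathrm{cl}_{\mathbf{Q}}(S) \Leftrightarrow S = T(S)$, which is precisely (i).

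For the substantive implication (i) $\Rightarrow$ (ii), I would fix $S \subseteq \mathbb{Z}^n$ and prove the two inclusions separately. For $\mathrm{cl}_{\mathbf{Q}}(S) \subseteq T(S)$: idempotence of $T$ gives $T(T(S)) = T(S)$, so by the ``$\Leftarrow$'' direction of (i) applied to the set $T(S)$, the set $T(S)$ satisfies property $\mathbf{Q}$; combined with extensivity $S \subseteq T(S)$, this exhibits $T(S)$ as a $\mathbf{Q}$-closed superset of $S$, and minimality of $\mathrm{cl}_{\mathbf{Q}}(S)$ among such sets yields the inclusion. For the reverse inclusion $T(S) \subseteq \mathrm{cl}_{\mathbf{Q}}(S)$: the set $\mathrm{cl}_{\mathbf{Q}}(S)$ is by definition $\mathbf{Q}$-closed, so the ``$\Rightarrow$'' direction of (i) applied to $\mathrm{cl}_{\mathbf{Q}}(S)$ gives $\mathrm{cl}_{\mathbf{Q}}(S) = T(\mathrm{cl}_{\mathbf{Q}}(S))$; monotonicity of $T$ together with $S \subseteq \mathrm{cl}_{\mathbf{Q}}(S)$ then gives $T(S) \subseteq T(\mathrm{cl}_{\mathbf{Q}}(S)) = \mathrm{cl}_{\mathbf{Q}}(S)$.

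There is no real combinatorial obstacle here; the content of the proposition is essentially abstract nonsense about closure operators, and once \cref{lemma:2-decompo-closure-operator} is invoked to guarantee that $T$ is a closure operator, the remaining work is bookkeeping. The main thing to be careful about is using both directions of the biconditional in (i) — the ``$\Leftarrow$'' direction to conclude that $T(S)$ has property $\mathbf{Q}$, and the ``$\Rightarrow$'' direction to conclude that $\mathrm{cl}_{\mathbf{Q}}(S)$ is fixed by $T$ — and invoking extensivity, monotonicity, and idempotence of $T$ each at the appropriate step.
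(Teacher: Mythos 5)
Your proposal is correct and follows essentially the same route as the paper: the paper's proof of (i) $\Rightarrow$ (ii) simply invokes the fact that two closure operators with the same family of closed sets coincide (after citing \cref{lemma:2-decompo-closure-operator} to see that $\Join_{1\le i<j\le n}\cl_{\bf P}(\pi_{i,j}(\cdot))$ is a closure operator), and your two-inclusion argument is just an explicit proof of that standard fact; your treatment of (ii) $\Rightarrow$ (i) is identical to the paper's.
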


\begin{proof}
We first show (i) $\nr$ (ii).
By (a generalization of) Lemma \ref{lemma:2-decompo-closure-operator}, 
$\cl_{\bf Q}$ and $\Join_{1\le i<j\le n}\cl_{\bf P}(\pi_{i,j}(\cdot))$ are both closure operators. 
Then, (i) means that the families of the closed set defined by these closure operators are the same. 
Thus, these closure operators are the same and (ii) holds.
We then show (ii) $\nr$ (i).
Assume that (ii) is true. 
Then, %${\bf Q}(S)\douti S=\clq{S}\on{(ii)}{\douti}S=\Join_{1\le i<j\le n}\cl_{\bf P}(\pi_{i,j}(S))$.
${\bf Q}(S)\douti S=\clq{S}{\douti}S=\Join_{1\le i<j\le n}\cl_{\bf P}(\pi_{i,j}(S))$.
\end{proof}

From \cref{thm:characterization-by-closure-property,prop:equiv-of-closure}, we see, for example, that $\cl_{\median,\avgd}$ is the same as $\Join_{1\le i<j\le n}\cl_{\avgd}(\proj_{i,j}(\cdot))$ (see also \cref{thm:main} below).
 
We mention the following.
\begin{proposition}
\label{prop:integrally-convex-2-decompo<=>UTVPI}
Let $S$ be a subset of $\mathbb{Z}^n$.
$S$ is representable by a UTVPI system if and only if it is integrally convex and 2-decomposable.
\end{proposition}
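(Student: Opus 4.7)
My plan is to argue each direction of the equivalence separately, reducing the nontrivial content to the two-dimensional situation.

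For the \emph{if} direction, suppose $S$ is integrally convex and 2-decomposable. The key preliminary observation I would establish is that integral convexity is inherited by every coordinate projection: given $(a,b)\in\cl_{\conv}(\pi_{i,j}(S))$, use linearity of projection to write $(a,b)=\pi_{i,j}(x)$ for some $x\in\cl_{\conv}(S)$, invoke integral convexity of $S$ to express $x$ as a convex combination of points of $S\cap N(x)$, and then project this combination. Since $\pi_{i,j}(N(x))\subseteq N((a,b))$, the projected combination witnesses $(a,b)\in\cl_{\conv}(\pi_{i,j}(S)\cap N((a,b)))$. Hence each $\pi_{i,j}(S)$ is an integrally convex subset of $\mathbb{Z}^2$, and by \cref{prop:two-dim-IntConv=UTVPI} admits a UTVPI representation in $\mathbb{R}^2$. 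Using the 2-decomposability $S=\Join_{1\le i<j\le n}\pi_{i,j}(S)$, the cylindrical extensions of these 2D UTVPI systems to $\mathbb{R}^n$ intersect in a UTVPI polyhedron whose set of integer points is precisely $S$.

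For the \emph{only if} direction, assume $S$ is UTVPI-representable. 2-decomposability is immediate from the UTVPI structure: given $z\in\mathbb{Z}^n$ with $\pi_{i,j}(z)\in\pi_{i,j}(S)$ for every pair $(i,j)$, each UTVPI inequality involves at most two variables, so by picking, for each inequality, a witness in $S$ agreeing with $z$ on the two nonzero coordinates of that inequality, we conclude $z$ satisfies that inequality, hence $z\in S$. For integral convexity I would first apply \cref{thm:characterization-by-closure-property}(iii) to obtain that $S$ is closed under $\avgd$, and then invoke the theorem of Tamura and Tsurumi~\cite{TaT21} that $\avgd$-closedness implies integral convexity in $\mathbb{Z}^n$ for general $n$.

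The main obstacle is the integral convexity assertion in the only-if direction. The two-dimensional equivalence between $\avgd$-closedness and integral convexity recorded in \cref{prop:hierarchy-in-2-dim} does not directly lift to higher dimensions, so this step genuinely relies on the nontrivial analysis of~\cite{TaT21}. A self-contained alternative would be to argue via \cref{thm:IntConvSet-characterization}: for $x,y\in S$ with $\|x-y\|_{\infty}\ge 2$, one would construct enough elements of $S\cap N(\frac{x+y}{2})$ whose convex hull contains $\frac{x+y}{2}$ by iterating $\avgd$ coordinatewise and exploiting the 2-decomposable structure together with the 2D integral convexity of each $\pi_{i,j}(S)$.
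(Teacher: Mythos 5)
Your proof is correct and follows essentially the same route as the paper: both directions reduce to the two-dimensional equivalence of \cref{prop:two-dim-IntConv=UTVPI}, the fact that coordinate projections preserve integral convexity, and the implication from $\avgd$-closedness to integral convexity. The only differences are that you supply direct arguments for two steps the paper handles by citation or by an earlier theorem: you prove projection-invariance of integral convexity by hand (correctly, using linearity of $\pi_{i,j}$ and $\pi_{i,j}(N(x))\subseteq N(\pi_{i,j}(x))$) where the paper cites \cite{MoM19}, and you derive 2-decomposability of a UTVPI-representable set directly from the two-variable structure of the inequalities where the paper routes through closedness under a majority operation and \cref{thm:closed-near-unanimity<=>decompo}. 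Both substitutions are sound and arguably more self-contained. Finally, the ``main obstacle'' you flag at the end is not actually an obstacle in this paper: \cref{lem:avgd=>int-conv} already gives a short proof, valid in every dimension, that $\avgd$-closedness implies integral convexity (via \cref{thm:IntConvSet-characterization}, using $\avgd(x,y)+\avgd(y,x)=x+y$), so no further construction from 2-decomposability is needed there.
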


\begin{proof}
We first show the only-if part. 
From \cref{thm:characterization-by-closure-property}, $S$ is closed under some majority operation and $\avgd$.
Hence, $S$ is 2-decomposable and integrally convex by \cref{thm:closed-near-unanimity<=>decompo,lem:avgd=>int-conv}, respectively.

We then show the if part.
Since $S$ is 2-decomposable, $S = \Join_{1\le i<j\le n} \pi_{i,j}(S)$.
Moreover, it is known that the projection of an integrally convex set is integrally convex \cite[Theorem 3.1]{MoM19}.
Hence, each $\pi_{i,j}(S)$ is integrally convex.
Then each $\pi_{i,j}(S)$ can be represented by a UTVPI system (see \cref{prop:two-dim-IntConv=UTVPI}).
Hence, $S$ is representable by a UTVPI system.
\end{proof}

\begin{remark}
\label{rmk:two-dim-01-characterization}
When $S$ is a subset of $\bb{Z}^2$, the following are equivalent: (i) $S$ is representable by a UTVPI system, (ii) $S$ is $\avgd$-closed, and (iii) $S$ is integrally convex.
The equivalence of (i) and (iii) follows from \cref{prop:two-dim-IntConv=UTVPI}, and 
that of (ii) and (iii) follows from \cref{prop:hierarchy-in-2-dim}.

When $S$ is a subset of $\{0,1\}^n$, the following are equivalent: 
(i) $S$ is representable by a UTVPI system, (ii) $S$ is representable by a TVPI system, (iii) $S$ is $\median$-closed, and (iv) $S$ is closed under some majority operation.
To see this, it suffices to show that (iv) implies (i).
Assume that $S$ is closed under some majority operation.
Notice that there exists the unique majority operation over $\{0,1\}$, and 
$S$ is representable by a 2-CNF from \cite{Sch78}.
Since 2-CNF can be formulated as a UTVPI system, (i) holds.

We also remark that any subset $S$ of $\{0,1\}^n$ is $\avgd$-closed and integrally convex.
To see this, it suffices to show that any subset $S$ of $\{0,1\}^n$ is $\avgd$-closed.
Take any $S \be \{0,1\}^n$.
One can verify that for any $x,y \in \{0,1\}^n$, $\avgd(x,y)=x$.
Therefore, $S$ is $\avgd$-closed.
\end{remark}

\begin{remark}
We remark that property $S = \cl_{\overline{\conv}}(S)\cap \mathbb{Z}^n$
is not preserved by projection.
Namely, an analog of Lemma \ref{lemma:proj-closed} does not hold.
Consider the following example. 

Let $S\be \bb{Z}^3$ be a set defined as 
\[S=\{(0,0,0),(2,2,3)\}. \]
Then, in the figure below, 
$S$ is the black points and $\cl_{\overline{\conv}}(S)\be \bb{R}^3$ is the black line segment.  
\begin{center}
\begin{tikzpicture}
%if we want write  (a,b,c) on pdf, then write (a+0.6b,c+0.4b) on Tex.
\draw [->,thick] (-0.2,0) -- (2.5,0) node [right]{$x_1$};
\draw [->,thick] (0,-0.2) -- (0,3.5) node [above]{$x_3$};
\draw [->,thick] (0,0) -- (1.8,1.2) node [right]{$x_2$};

\fill[black](0,0)circle(0.1);    %(0,0,0)
\fill[black](3.2,3.8)circle(0.1);   %(2,2,3) 
\draw [black, thick] (0,0) -- (3.2,3.8);

%x_3=0
\draw [dashed] (1.2,0.8)--(3.2,0.8);   %(0,2,0)-(2,2,0)
\draw [dashed] (0.6,0.4)--(2.6,0.4);   %(0,1,0)-(2,1,0)
\draw [dashed] (2,0)--(3.2,0.8);       %(2,0,0)-(2,2,0)
\draw [dashed] (1,0)--(2.2,0.8);       %(1,0,0)-(1,2,0)

%x_3=1
\draw [dashed] (1.2,1.8)--(3.2,1.8);   %(0,2,1)-(2,2,1)
\draw [dashed] (0.6,1.4)--(2.6,1.4);   %(0,1,1)-(2,1,1)
\draw [dashed] (0,1)--(2,1);           %(0,0,1)-(2,0,1)
\draw [dashed] (2,1)--(3.2,1.8);       %(2,0,1)-(2,2,1)
\draw [dashed] (1,1)--(2.2,1.8);       %(1,0,1)-(1,2,1)
\draw [dashed] (0,1)--(1.2,1.8);       %(0,0,1)-(0,2,1)

%x_3=2
\draw [dashed] (1.2,2.8)--(3.2,2.8);   %(0,2,2)-(2,2,2)
\draw [dashed] (0.6,2.4)--(2.6,2.4);   %(0,1,2)-(2,1,2)
\draw [dashed] (0,2)--(2,2);           %(0,0,2)-(2,0,2)
\draw [dashed] (2,2)--(3.2,2.8);       %(2,0,2)-(2,2,2)
\draw [dashed] (1,2)--(2.2,2.8);       %(1,0,2)-(1,2,2)
\draw [dashed] (0,2)--(1.2,2.8);       %(0,0,2)-(0,2,2)

%x_3=3
\draw [dashed] (1.2,3.8)--(3.2,3.8);   %(0,2,3)-(2,2,3)
\draw [dashed] (0.6,3.4)--(2.6,3.4);   %(0,1,3)-(2,1,3)
\draw [dashed] (0,3)--(2,3);           %(0,0,3)-(2,0,3)
\draw [dashed] (2,3)--(3.2,3.8);       %(2,0,3)-(2,2,3)
\draw [dashed] (1,3)--(2.2,3.8);       %(1,0,3)-(1,2,3)
\draw [dashed] (0,3)--(1.2,3.8);       %(0,0,3)-(0,2,3)

%x_1=0
\draw [dashed] (0.6,0.4)--(0.6,3.4);    %(0,1,0)-(0,1,3)
\draw [dashed] (1.2,0.8)--(1.2,3.8);    %(0,2,0)-(0,2,3)

%x_1=1
\draw [dashed] (1,0)--(1,3);           %(1,0,0)-(1,0,3)
\draw [dashed] (1.6,0.4)--(1.6,3.4);   %(1,1,0)-(1,1,3)
\draw [dashed] (2.2,0.8)--(2.2,3.8);   %(1,2,0)-(1,2,3)

%x_1=2
\draw [dashed] (2,0)--(2,3);           %(2,0,0)-(2,0,3)
\draw [dashed] (2.6,0.4)--(2.6,3.4);   %(2,1,0)-(2,1,3)
\draw [dashed] (3.2,0.8)--(3.2,3.8);   %(2,2,0)-(2,2,3)

%\fill[red](1.6,1.4)circle(0.1);  %(1,1,1)
\draw (0,0)node[below left]{$O$};
\draw (1,0)node[below]{$1$}; %x_1の1
\draw (2,0)node[below]{$2$}; %x_1の2
\draw (0.5,0.5)node[left]{$1$}; %x_2の1
\draw (1,1)node[left]{$2$}; %x_2の2
\draw (0,2)node[left]{$2$}; %x_3の2
\draw (0,1)node[left]{$1$}; %x_3の2
\draw (0,3)node[left]{$3$}; %x_3の3
\end{tikzpicture}
\end{center}
Thus, we have $S=\cl_{\overline{\conv}}(S)\cap \bb{Z}^3$.
However, the projection $\pi_{1,2}(S)$ does not satisfy this condition. 
In fact, since the projection $\pi_{1,2}(S)=\{(0,0), (2,2)\}\be \bb{Z}^2$, the convex closure 
$\cl_{\overline{\conv}}(\pi_{1,2}(S))=\{(x_1,x_2)\in \bb{R}^2~|~ 0\le x_1=x_2\le 2\}$. 
Thus, 
$\cl_{\overline{\conv}}(\pi_{1,2}(S))\cap \bb{Z}^2=\{(0,0),(1,1),(2,2)\}$. 
Therefore, 
we have $\pi_{1,2}(S) \neq \cl_{\overline{\conv}}(\pi_{1,2}(S))\cap \bb{Z}^2$.
\end{remark}

\begin{remark}
We also remark that, if a set $S \subseteq \mathbb{Z}^n$ is representable by a TVPI system, then it is $\median$-closed by \cref{prop:TVPI=>median-closed}.
However, even if $S$ is $\median$-closed and representable by a linear-inequality system (i.e., $S$ is the integer vectors in a polyhedron), 
it is not representable by a TVPI system in general as \cref{ex:median=>notTVPI} in \cref{subsec:median-closed-but-not-TVPI} shows.
\end{remark}

\subsection{Remaining proofs}
\label{subsec:remaining-proof}

The following lemma is shown in~\cite{TaT21} for a more general function variant, but a proof is given here for readability.

\begin{lemma}
\label{lem:avgd=>int-conv}
Let $S$ be a subset of $\mathbf{Z}^n$.
If $S$ is $\avgd$-closed, then it is integrally convex.
\end{lemma}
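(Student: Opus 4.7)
The plan is to invoke the characterization of integral convexity given in \cref{thm:IntConvSet-characterization}: it suffices to show that for every $x, y \in S$ with $\|x-y\|_\infty \ge 2$, the midpoint $\frac{x+y}{2}$ lies in $\cl_{\conv}\bigl(S \cap N(\frac{x+y}{2})\bigr)$. My strategy is to exhibit two points of $S \cap N(\frac{x+y}{2})$ whose average equals $\frac{x+y}{2}$, which immediately yields the desired containment.

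The natural candidates are $p := \avgd(x,y)$ and $q := \avgd(y,x)$, where $\avgd$ is applied coordinatewise. First I would observe that $p, q \in S$ by $\avgd$-closedness of $S$. Inspecting the definition of $\avgd$ coordinate by coordinate, each $p_i$ (resp.\ $q_i$) is either $\lfloor \frac{x_i+y_i}{2}\rfloor$ or $\lceil \frac{x_i+y_i}{2}\rceil$ (both coinciding with $x_i$ when $x_i = y_i$). In every case $|p_i - \frac{x_i+y_i}{2}| \le \tfrac12 < 1$, and likewise for $q_i$, so $p, q \in N(\frac{x+y}{2})$.

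The core step is a short three-case analysis showing $p_i + q_i = x_i + y_i$ for every coordinate $i$. If $x_i > y_i$, the first branch of the definition of $\avgd$ yields $p_i = \lceil \frac{x_i+y_i}{2}\rceil$, while applying $\avgd$ to $(y_i, x_i)$ (with $y_i < x_i$) triggers the second branch and gives $q_i = \lfloor \frac{x_i+y_i}{2}\rfloor$; these sum to $x_i+y_i$. The case $x_i < y_i$ is symmetric, and the case $x_i = y_i$ is immediate. Combining these coordinatewise identities gives $\frac{p+q}{2} = \frac{x+y}{2}$, realizing the midpoint as a convex combination (with weights $\tfrac12, \tfrac12$) of the two points $p, q \in S \cap N(\frac{x+y}{2})$, which is exactly the condition required by \cref{thm:IntConvSet-characterization}.

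I do not foresee any real obstacle. The entire substance is the coordinatewise identity $p_i + q_i = x_i + y_i$, a brief case check; notably the hypothesis $\|x-y\|_\infty \ge 2$ is never used, so the argument in fact works for arbitrary $x, y \in S$.
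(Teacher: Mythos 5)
Your proposal is correct and follows essentially the same route as the paper's proof: both invoke \cref{thm:IntConvSet-characterization} and realize $\frac{x+y}{2}$ as the average of $\avgd(x,y)$ and $\avgd(y,x)$, which lie in $S \cap N(\frac{x+y}{2})$ by $\avgd$-closedness and the identity $\avgd(x,y)+\avgd(y,x)=x+y$. Your coordinatewise case analysis simply spells out that identity, which the paper states without proof.
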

\begin{proof}
From \cref{thm:IntConvSet-characterization}, 
it suffices to show that $\frac{x+y}{2}\in \cl_{\conv}(S\cap N(\frac{x+y}{2}))$ for any $x,y\in S$.
Since $S$ is closed under $\avgd$, 
we have $\avgd(x,y), \avgd(y,x)\in S$. 
It is clear that  
$\avgd(x,y), \avgd(y,x)\in N(\frac{x+y}{2})$. 
Now, since $\avgd(x,y)+\avgd(y,x)=x+y$, 
we have 
$\frac{x+y}{2}=\frac{\avgd(x,y)+\avgd(y,x)}{2}\in \cl_{\conv}(\{\avgd(x,y), \avgd(y,x)\})\subseteq \cl_{\conv}(S\cap N(\frac{x+y}{2}))$. 
Hence, $S$ is integrally convex. 
\end{proof}

We have proved that any $S\subseteq \mathbb{Z}^n$ closed under $\avgd$ 
is integrally convex. 
The converse, however, does not hold in general as the following example shows.

\begin{example}
Let $S=\{ (-1,1,1),(0,0,1),(0,1,0),(1,0,0) \} \subseteq \{-1,0,1\}^3$.
The points in $S$ lies on the plane $x_1+x_2+x_3=1$ in three-dimension.
One can verify that 
\begin{align*}
\cl_{\conv}(S) = \cl_{\conv}(\{ (-1,1,1),(0,0,1),(0,1,0) \})\cup \cl_{\conv}(\{ (0,0,1),(0,1,0),(1,0,0) \})
\end{align*}
and $S$ is integrally convex.
On the other hand, as $\avgd((-1,1,1),(1,0,0)) = (0,1,1) \notin S$, 
S is not closed under $\avgd$.
Hence, integral convexity does not imply $\avgd$-closedness in general.
\end{example}

We now show the key result to show \cref{thm:characterization-by-closure-property} (iii).

\begin{theorem}
\label{thm:main}
A set $S \subseteq \mathbf{Z}^n$ is representable by a UTVPI system if and only if 
it is closed under $\avgd$ and $\median$.
\end{theorem}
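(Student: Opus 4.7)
The plan is to prove both implications by reducing to the two-dimensional case via 2-decomposability and then invoking the known characterization of integrally convex sets in $\mathbb{Z}^2$.

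For the ``only if'' direction, assume $S$ is representable by a UTVPI system. Since every UTVPI system is a TVPI system, \cref{prop:TVPI=>median-closed} immediately gives that $S$ is closed under $\median$. For $\avgd$-closedness, it suffices to verify that a single UTVPI inequality $\pm x_i \pm x_j \ge c$ is preserved by coordinate-wise application of $\avgd$. Here the key algebraic fact is $\avgd(x,y)+\avgd(y,x)=x+y$ and the corresponding inequalities $\avgd(x,y)\ge \lfloor (x+y)/2 \rfloor$ and $\avgd(x,y)\le \lceil (x+y)/2 \rceil$, which, together with a short case analysis on the four possible sign patterns, imply that $\avgd$ preserves each UTVPI inequality. (This is the content of \cref{prop:UTVPI-median-avg-closed}, which is referenced in the main-results subsection.)

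For the ``if'' direction, assume $S$ is closed under both $\avgd$ and $\median$. Since $\median$ is a majority operation, \cref{thm:closed-near-unanimity<=>decompo} yields that $S$ is 2-decomposable, that is, $S=\Join_{1\le i<j\le n} \pi_{i,j}(S)$. By \cref{lemma:proj-closed}(i), each projection $\pi_{i,j}(S)\subseteq\mathbb{Z}^2$ is still closed under $\avgd$. Applying \cref{lem:avgd=>int-conv} in dimension two, each $\pi_{i,j}(S)$ is integrally convex, and then \cref{prop:two-dim-IntConv=UTVPI} gives that each $\pi_{i,j}(S)$ is representable by a UTVPI system. Taking the union over $1\le i<j\le n$ of these finite UTVPI representations and using 2-decomposability produces a UTVPI system whose integer solution set is exactly $S$.

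The main obstacle in this plan is the ``only if'' direction: the case analysis for showing that $\avgd$ preserves every UTVPI inequality $\pm x_i\pm x_j\ge c$. Although each case is elementary once one uses the identity $\avgd(x,y)+\avgd(y,x)=x+y$ together with the floor/ceiling bounds, one needs to check all combinations of signs and both orderings $x\ge y$ and $x<y$; this is exactly the point at which the asymmetry of $\avgd$ (vs.\ a symmetric rounded-midpoint) is used. The ``if'' direction, by contrast, is essentially free once one has 2-decomposability from $\median$ plus the two-dimensional equivalence between $\avgd$-closedness, integral convexity, and UTVPI representability recorded in \cref{prop:hierarchy-in-2-dim} and \cref{prop:two-dim-IntConv=UTVPI}.
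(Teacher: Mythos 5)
Your proposal is correct and follows essentially the same route as the paper: the only-if direction is exactly \cref{prop:UTVPI-median-avg-closed} (median via \cref{prop:TVPI=>median-closed}, plus the sign/parity case analysis for $\avgd$), and the if direction is \cref{prop:median-avg-closed->UTVPI}, which likewise uses $\median$-closedness to get 2-decomposability and then handles each two-dimensional projection via $\avgd$-closedness $\Rightarrow$ integral convexity $\Rightarrow$ UTVPI representability (the content of \cref{lem:two-dim}). The only cosmetic difference is that you inline the chain through \cref{lem:avgd=>int-conv} and \cref{prop:two-dim-IntConv=UTVPI} rather than citing \cref{lem:two-dim} directly.
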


\cref{thm:main} is divided into \cref{prop:UTVPI-median-avg-closed,prop:median-avg-closed->UTVPI}.

\begin{proposition}
\label{prop:UTVPI-median-avg-closed}
Let $S$ be a subset of $\mathbf{Z}^n$.
If $S$ is representable by a UTVPI system, then it is closed under $\avgd$ and $\median$.
\end{proposition}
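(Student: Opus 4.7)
The plan is to treat the two closure properties separately. For $\median$-closedness, nothing new is needed: a UTVPI system is, in particular, a TVPI system, so \cref{prop:TVPI=>median-closed} applies directly.

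For $\avgd$-closedness, fix $x,y\in S$, let $z:=\avgd(x,y)$, and verify that $z$ satisfies every defining inequality of a UTVPI representation $\{x\in\mathbb{R}^n\mid Ax\ge b\}$ of $S$. (After intersecting with $\mathbb{Z}^n$ one may round each right-hand side up to an integer without changing the integer solution set, so we may assume $b\in\mathbb{Z}^m$.) I would proceed by case analysis on the form of a row of $A$: (a) single-variable bounds $\pm x_i\ge c$, which are immediate since $z_i$ always lies between $x_i$ and $y_i$; (b) sum-type $x_i+x_j\ge c$, handled by splitting on the four combinations of $x_i\gtrless y_i$ and $x_j\gtrless y_j$; and (c) difference-type $x_i-x_j\ge c$, handled by an analogous split. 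The remaining sign patterns $-x_i-x_j\ge c$ and $-x_i+x_j\ge c$ reduce to (b) and (c) respectively via the identity $\avgd(-a,-b)=-\avgd(a,b)$, which is straightforward to check from the definition.

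The main obstacle is the ``adversarial'' subcases of (b) and (c) in which both rounding choices conspire against the inequality: for (b), when $x_i<y_i$ and $x_j<y_j$ so that both $z_i,z_j$ are floors; for (c), when $x_i<y_i$ and $x_j\ge y_j$ so that $z_i$ is a floor while $z_j$ is a ceiling. In each of these subcases a naive estimate only gives $z_i+z_j\ge c-1$ (respectively $z_i-z_j\ge c-1$). The key observation that recovers the tight bound is a parity-plus-tightness dichotomy: the full deficit of $1$ is attained only when both $(x_i+y_i)$ and $(x_j+y_j)$ are odd, which forces the relevant combination $(x_i+x_j)+(y_i+y_j)$ in case (b), respectively $(x_i-x_j)+(y_i-y_j)$ in case (c), to be even and hence to lie in $\{2c,2c+2,2c+4,\dots\}$. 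If this combination is $\ge 2c+2$, the deficit of $1$ still leaves $z_i\pm z_j\ge c$; if it equals exactly $2c$, both defining inequalities are tight, which in case (b) forces $(y_i-x_i)+(y_j-x_j)=0$ and in case (c) forces $y_i-x_i=y_j-x_j$, each of which directly contradicts the sign hypotheses that produced the bad subcase. This closes the case analysis and completes the verification.
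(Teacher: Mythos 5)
Your proposal is correct and follows essentially the same route as the paper: $\median$-closedness via \cref{prop:TVPI=>median-closed}, and $\avgd$-closedness by verifying each UTVPI inequality (with integer right-hand side) through a row-by-row parity case analysis. The only difference is in the hardest subcase (both coordinates rounded down with both sums odd), where the paper observes $\avgd(p_j,q_j)\ge p_j$ and $\avgd(p_k,q_k)\ge p_k$ and invokes feasibility of $p$, whereas you derive a contradiction from tightness of both inequalities; both arguments are valid.
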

\begin{proof}
Since a UTVPI polyhedron is a TVPI polyhedron, it follows that $S$ is closed under $\median$ by \cref{prop:TVPI=>median-closed}.
Thus we show that $S$ is closed under $\avgd$ in what follows.

Assume that $S$ is represented by a UTVPI system $Ax \ge b$ for some matrix $A$ and vector $b$, i.e., $S = \{x \in \bb{Z}^n \mid Ax \ge b\}$.
Without loss of generality, we assume that $b$ is an integer vector.
Let $p,q \in S$.
We show that $\avgd(p,q) \in S$ by showing that $A\avgd(p,q) \ge b$ holds.
For each $i = 1, \dots, m$, let $A_i$ be the $i$th row of $A$.
We have to show that $A_i\avgd(p,q) \ge b_i$ for each $i = 1, \dots, m$.
Since $Ax \ge b$ is a UTVPI system, each $A_ix \ge b_i$ is one of the forms of 
$x_j + x_k \ge b_i$, 
$x_j - x_k \ge b_i$, 
$-x_j + x_k \ge b_i$, 
$-x_j - x_k \ge b_i$, 
$x_j \ge b_i$, or
$-x_j \ge b_i$ for some $j,k \in \{1, \dots, n\}$ with $j \neq k$.
We use the fact that $A_ip \ge b_i$ and $A_iq \ge b_i$ (since $p$ and $q$ are in $S$) in what follows.

We only show the cases where $A_ix = x_j + x_k$ and $A_ix = x_j - x_k$.
The other cases can be shown similarly and thus are omitted.
%the proofs are delegated to Appendix.

\paragraph*{Case 1: $A_ix = x_j + x_k$.}
We show that $\avgd(p_j,q_j) + \avgd(p_k,q_k) \ge b_i$.
Note that from $p_j+p_k \ge b_i$ and $q_j+q_k \ge b_i$ we have $p_j + p_k + q_j + q_k \ge 2b_i$, implying that $\frac{p_j+q_j}{2} + \frac{p_k+q_k}{2} \ge b_i$.

\subparagraph*{Case 1.1: $p_j+q_j$ and $p_k+q_k$ are even.}
In this case, we have $\avgd(p_j,q_j) = \frac{p_j+q_j}{2}$ and $\avgd(p_k,q_k) = \frac{p_k+q_k}{2}$.
Thus, we have 
\begin{align*}
\avgd(p_j,q_j) + \avgd(p_k,q_k) = \frac{p_j+q_j}{2} + \frac{p_k+q_k}{2} \ge b_i.
\end{align*}

\subparagraph*{Case 1.2: $p_j+q_j$ is even and $p_k+q_k$ is odd.}
In this case, $p_j + p_k + q_j + q_k \ge 2b_i$ implies that $p_j + p_k + q_j + q_k \ge 2b_i + 1$ since the left-hand side is odd.
Thus, we have $\frac{p_j+q_j}{2} + \frac{p_k+q_k}{2} - \frac{1}{2} \ge b_i$.
Since $\avgd(p_j,q_j) = \frac{p_j+q_j}{2}$ and $\avgd(p_k,q_k) \ge \frac{p_k+q_k}{2} - \frac{1}{2}$, 
we have 
\begin{align*}
\avgd(p_j,q_j) + \avgd(p_k,q_k) \ge \frac{p_j+q_j}{2} + \frac{p_k+q_k}{2} - \frac{1}{2} \ge b_i.
\end{align*}

\subparagraph*{Case 1.3: $p_j+q_j$ is odd and $p_k+q_k$ is even.}
We can show $\avgd(p_j,q_j) + \avgd(p_k,q_k) \ge b_i$ in a similar way as in Case 1.2.

\subparagraph*{Case 1.4: $p_j+q_j$ and $p_k+q_k$ are odd.}
We further divide into the cases of $p_j > q_j$, $p_k > q_k$, or ($p_j \le q_j$ and $p_k \le q_k$).

If $p_j > q_j$, then $\avgd(p_j,q_j) = \frac{p_j+q_j}{2} + \frac{1}{2}$ since $p_j+q_j$ is odd.
Moreover, $\avgd(p_k,q_k) \ge \frac{p_k+q_k}{2} - \frac{1}{2}$.
Hence, we have 
\begin{align*}
\avgd(p_j,q_j) + \avgd(p_k,q_k) \ge \frac{p_j+q_j}{2} + \frac{1}{2} + \frac{p_k+q_k}{2} - \frac{1}{2} \ge b_i.
\end{align*}

If $p_k > q_k$, then $\avgd(p_k,q_k) = \frac{p_k+q_k}{2} + \frac{1}{2}$ since $p_k+q_k$ is odd.
Moreover, $\avgd(p_j,q_j) \ge \frac{p_j+q_j}{2} - \frac{1}{2}$.
Hence, we have 
\begin{align*}
\avgd(p_j,q_j) + \avgd(p_k,q_k) \ge \frac{p_j+q_j}{2} - \frac{1}{2} + \frac{p_k+q_k}{2} + \frac{1}{2} \ge b_i.
\end{align*}

Finally, consider the case of $p_j \le q_j$ and $p_k \le q_k$.
We have $2p_j \le p_j + q_j$, and thus $p_j \le \frac{p_j+q_j}{2}$.
Since $p_j$ is an integer we have $p_j \le \lfloor \frac{p_j+q_j}{2} \rfloor = \avgd(p_j,q_j)$.
Similarly, we have $p_k \le \avgd(p_k,q_k)$
Therefore, we have 
\begin{align*}
\avgd(p_j,q_j) + \avgd(p_k,q_k) \ge p_j+p_k \ge b_i.
\end{align*}

\paragraph*{Case 2: $A_ix = x_j - x_k$.}
We show that $\avgd(p_j,q_j) - \avgd(p_k,q_k) \ge b_i$.
Note that from $p_j-p_k \ge b_i$ and $q_j-q_k \ge b_i$ we have $p_j - p_k + q_j - q_k \ge 2b_i$, implying that $\frac{p_j+q_j}{2} - \frac{p_k+q_k}{2} \ge b_i$.
Then the proof goes similarly to that of Case 1.

Now, we have shown that 
$\avgd(p.q)$ satisfies each inequality $A_ix \ge b_i$ ($i=1,\dots, m$).
Hence, $\avgd(p.q) \in S$ holds.
Therefore, $S$ is closed under $\avgd$.
This completes the proof.
\end{proof}

As a corollary of the 
\cref{prop:UTVPI-median-avg-closed},
we have the following result.

\begin{corollary}
\label{cor:avg-in-UTVPI}
For a set $S\subseteq \mathbb{Z}^n$, 
the directed discrete midpoint closure $\cl_{\avgd}(S)$ of $S$ is included in the UTVPI closure $\cl_{\utvpi}(S)$ of $S$.
\end{corollary}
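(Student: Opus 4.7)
The plan is to use the minimality characterizations of the two closure operators. By the remark following the definition of $\cl_{\utvpi}$, the set $\cl_{\utvpi}(S)$ is itself representable by a UTVPI system and contains $S$. By \cref{prop:UTVPI-median-avg-closed}, any set representable by a UTVPI system is $\avgd$-closed; in particular, $\cl_{\utvpi}(S)$ is $\avgd$-closed.

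Now $\cl_{\avgd}(S)$ is defined as the intersection of all $\avgd$-closed sets containing $S$, and thus it is the inclusion-wise smallest such set. Since $\cl_{\utvpi}(S)$ is one of the sets in this intersection (it contains $S$ and is $\avgd$-closed), the containment $\cl_{\avgd}(S) \subseteq \cl_{\utvpi}(S)$ follows directly.

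The proof therefore amounts to a two-line invocation of \cref{prop:UTVPI-median-avg-closed} together with the minimality of $\cl_{\avgd}$. There is no real obstacle, since the only nontrivial content is \cref{prop:UTVPI-median-avg-closed} itself, which has already been established.
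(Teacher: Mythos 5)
Your proof is correct and follows exactly the paper's own argument: both invoke \cref{prop:UTVPI-median-avg-closed} to conclude that $\cl_{\utvpi}(S)$ is $\avgd$-closed and contains $S$, then appeal to the minimality of $\cl_{\avgd}(S)$ among such sets. No differences worth noting.
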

\begin{proof}
Since $\cl_{\utvpi}(S)$ is representable by a UTVPI system, 
by \cref{prop:UTVPI-median-avg-closed},
$\cl_{\utvpi}(S)$ is closed under $\avgd$. 
Now, $\cl_{\avgd}(S)$ is the (inclusion-wise) smallest set that is closed under $\avgd$ and  includes $S$. 
Therefore, 
$\cl_{\avgd}(S)\subseteq \cl_{\utvpi}(S)$. 
\end{proof}

We note that there exists a TVPI polyhedron where its integer vectors are not $\avgd$-closed.

\begin{example}
Consider the following TVPI system.

\begin{align}
\left\{
\begin{array}{l}
x_1 + 2x_2 \ge  2\\
0 \le x_1,x_2 \le 2.
\end{array}
\right.
\end{align}
The set $S$ of integer vectors satisfying the system is 
\begin{align*}
S = \{(0,1),(0,2),(1,1),(1,2),(2,0),(2,1),(2,2)\}.
\end{align*}
Since $\avgd((2,0),(0,1))=(\avgd(2,0),\avgd(0,1)) = (1,0) \notin S$, 
$S$ is not $\avgd$-closed.
\end{example}

We now show the opposite direction of \cref{prop:UTVPI-median-avg-closed}.
The following is a key lemma to show our result.

\begin{lemma}\label{lem:two-dim}
Let $S$ be a subset of $\mathbf{Z}^2$.
If $S$ is $\avgd$-closed, then $S$ is representable by a UTVPI system.
\end{lemma}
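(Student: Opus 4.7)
The plan is to chain together two results already established in the excerpt. By \cref{lem:avgd=>int-conv}, any $\avgd$-closed subset of $\mathbb{Z}^n$ is integrally convex; specializing to $n=2$, the hypothesis that $S$ is $\avgd$-closed forces $S$ to be integrally convex. By \cref{prop:two-dim-IntConv=UTVPI}, a subset of $\mathbb{Z}^2$ is integrally convex if and only if it is representable by a UTVPI system. Composing these two implications immediately gives the conclusion.

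In other words, the proof reduces to a two-line chain of citations, with all the real content pushed into those two results. For \cref{lem:avgd=>int-conv} the argument appealed to \cref{thm:IntConvSet-characterization}: given $x,y\in S$, the vectors $\avgd(x,y)$ and $\avgd(y,x)$ both lie in $N(\tfrac{x+y}{2})\cap S$ and average to $\tfrac{x+y}{2}$, so the midpoint condition of integral convexity holds automatically for every pair $x,y\in S$. \cref{prop:two-dim-IntConv=UTVPI} is then quoted directly from~\cite{MMTT19}.

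The main ``obstacle'' is really just recognizing this reduction; there is no new combinatorial content to verify in dimension two once the two cited results are in hand. If one wanted instead a self-contained argument that avoided \cref{prop:two-dim-IntConv=UTVPI}, one would have to exhibit a UTVPI representation of $S$ by hand, for example by forming the closed convex hull $\cl_{\overline{\conv}}(S)\subseteq\mathbb{R}^2$ and using $\avgd$-closedness to argue that every supporting line of this (possibly unbounded) region has normal vector parallel to $(\pm 1,0)$, $(0,\pm 1)$, or $(\pm 1,\pm 1)$; a steeper edge would admit a pair of lattice points $x,y\in S$ for which $\avgd(x,y)$ or $\avgd(y,x)$ sits strictly outside $\cl_{\overline{\conv}}(S)$, contradicting $S=\cl_{\overline{\conv}}(S)\cap\mathbb{Z}^2$. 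Verifying that slope restriction carefully is where any direct proof would spend its effort.
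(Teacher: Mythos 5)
Your proof is correct and is essentially identical to the paper's: the paper's own proof of \cref{lem:two-dim} simply defers to \cref{rmk:two-dim-01-characterization}, which establishes exactly the chain $\avgd$-closed $\Rightarrow$ integrally convex (via \cref{lem:avgd=>int-conv}) $\Rightarrow$ representable by a UTVPI system (via \cref{prop:two-dim-IntConv=UTVPI}). The paper additionally supplies a longer self-contained argument in the appendix, but that is only an alternative for readability; the official proof is the same two-step citation you give.
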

\begin{proof}
See \cref{rmk:two-dim-01-characterization}.
\end{proof}

\begin{proposition}
\label{prop:median-avg-closed->UTVPI}
If a set $S \subseteq \mathbf{Z}^n$ is closed under $\avgd$ and $\median$, 
then it is representable by a UTVPI system.
\end{proposition}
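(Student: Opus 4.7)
The plan is to combine 2-decomposability (coming from $\median$-closedness) with the two-dimensional result (coming from $\avgd$-closedness) to build a UTVPI representation of $S$ from UTVPI representations of its coordinate projections.

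First I would use that $\median$ is a majority operation: by \cref{thm:closed-near-unanimity<=>decompo} applied to $S$, closedness under $\median$ already forces $S$ to be 2-decomposable, so $S = \Join_{1\le i<j\le n}\pi_{i,j}(S)$. Next, since $S$ is $\avgd$-closed, \cref{lemma:proj-closed}(i) tells me that every two-dimensional projection $\pi_{i,j}(S)\subseteq \mathbb{Z}^2$ is again $\avgd$-closed. Then \cref{lem:two-dim} yields, for each pair $i<j$, a UTVPI system $A^{(i,j)}(x_i,x_j)^{\T}\ge b^{(i,j)}$ in the two variables $x_i,x_j$ whose integer solution set is exactly $\pi_{i,j}(S)$.

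The final step is to concatenate these systems. Form the UTVPI system $Ax\ge b$ in the $n$ variables $x_1,\dots,x_n$ consisting of all inequalities from all $A^{(i,j)}(x_i,x_j)^{\T}\ge b^{(i,j)}$ (each such inequality involves at most two of the variables $x_1,\dots,x_n$, with coefficients in $\{0,\pm 1\}$, so the combined system is still UTVPI). Let $P=\{x\in\mathbb{R}^n\mid Ax\ge b\}$; I claim $S=P\cap\mathbb{Z}^n$. For the inclusion $S\subseteq P\cap\mathbb{Z}^n$, if $x\in S$ then $\pi_{i,j}(x)\in\pi_{i,j}(S)$ for every $i<j$, so $\pi_{i,j}(x)$ satisfies the block $A^{(i,j)}(x_i,x_j)^{\T}\ge b^{(i,j)}$; hence $x$ satisfies every inequality of $Ax\ge b$. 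For the reverse inclusion, if $x\in P\cap\mathbb{Z}^n$ then $\pi_{i,j}(x)$ satisfies $A^{(i,j)}(x_i,x_j)^{\T}\ge b^{(i,j)}$, so $\pi_{i,j}(x)\in\pi_{i,j}(S)$ for every $i<j$; by the 2-decomposability of $S$ established above, $x\in\Join_{1\le i<j\le n}\pi_{i,j}(S)=S$.

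There is essentially no serious obstacle in this argument, because the two hypotheses decouple cleanly: $\median$-closedness handles the combinatorial step of reassembling $S$ from its two-dimensional shadows, while $\avgd$-closedness handles the geometric step of describing each shadow by a UTVPI system. The only point that needs a moment's care is that pooling the two-variable UTVPI systems really produces a valid UTVPI system in $n$ variables, which is immediate because each individual inequality already has the required unit-two-variable form, and the $n$-variable system simply treats the unused coordinates as having coefficient $0$.
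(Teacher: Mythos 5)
Your proof is correct and follows essentially the same route as the paper's: use $\median$-closedness (a majority operation) to get 2-decomposability, use $\avgd$-closedness of each projection together with \cref{lem:two-dim} to get two-variable UTVPI systems, and concatenate them. You simply spell out the final "clearly" step of the paper (verifying both inclusions for the pooled system) in more detail, which is fine.
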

\begin{proof}
First, since $S$ is $\median$-closed, it is 2-decomposable, i.e., $S = \Join_{1 \le i<j \le n}\pi_{i,j}(S)$.
Since each $\pi_{i,j}(S)$ is two-dimensional and $\avgd$-closed by \cref{lemma:proj-closed}, it is representable by a UTVPI system $A^{(ij)}(x_i,x_j)^T\ge b^{(ij)}$ by \cref{lem:two-dim}.
Then, clearly, $S = \{ x \in \mathbb{Z}^n \mid A^{(ij)}(x_i,x_j)^T\ge b^{(ij)} (\forall i,j) \}$).
Hence, $S$ is representable by a UTVPI system.
\end{proof}

\begin{proof}[Proof of \cref{thm:main}]
It follows from \cref{prop:UTVPI-median-avg-closed,prop:median-avg-closed->UTVPI}.
\end{proof}

\subsection{TVPI representability cannot be characterized by median-closedness}
\label{subsec:median-closed-but-not-TVPI}

One may expect that if 
the set of integer vectors in a polyhedron is $\median$-closed, then 
it is representable by a TVPI system (i.e., the opposite of \cref{prop:TVPI=>median-closed}).
However, this is not the case, as the following example shows.

\begin{example}\label{ex:median=>notTVPI}
Let 
\begin{align*}
A=
\begin{pmatrix}
2 & 0 & -1    \\
0 & 2 & -1   \\
-2 & -2 & 3 \\
0 & 0 & -1  \\
\end{pmatrix}\ and \ 
b=\begin{pmatrix}
0\\0\\0\\2
\end{pmatrix}.
\end{align*}
Consider a set $S\be \mathbb{Z}^3$  defined as
\[S:=\{x \in \mathbb{Z}^3\mid Ax \ge b\}\ (=\{(0,0,0),(1,1,2),(2,1,2), (1,2,2)\}).\]
Then $S$ is the black points in the figure below. 

\begin{center}
\begin{tikzpicture}
%if we want write  (a,b,c) on pdf, then write (a+0.6b,c+0.4b) on Tex.
\draw [->,thick] (-0.2,0) -- (2.5,0) node [right]{$x_1$};
\draw [->,thick] (0,-0.2) -- (0,3) node [above]{$x_3$};
\draw [->,thick] (0,0) -- (1.8,1.2) node [right]{$x_2$};

\fill[black](0,0)circle(0.1);      %(0,0,0)
\fill[black](1.6,2.4)circle(0.1);  %(1,1,2)
\fill[black](2.6,2.4)circle(0.1);  %(2,1,2)
\fill[black](2.2,2.8)circle(0.1);  %(1,2,2)

%x_3=0
\draw [dashed] (1.2,0.8)--(3.2,0.8);   %(0,2,0)-(2,2,0)
\draw [dashed] (0.6,0.4)--(2.6,0.4);   %(0,1,0)-(2,1,0)
\draw [dashed] (2,0)--(3.2,0.8);       %(2,0,0)-(2,2,0)
\draw [dashed] (1,0)--(2.2,0.8);       %(1,0,0)-(1,2,0)

%x_3=1
\draw [dashed] (1.2,1.8)--(3.2,1.8);   %(0,2,1)-(2,2,1)
\draw [dashed] (0.6,1.4)--(2.6,1.4);   %(0,1,1)-(2,1,1)
\draw [dashed] (0,1)--(2,1);           %(0,0,1)-(2,0,1)
\draw [dashed] (2,1)--(3.2,1.8);       %(2,0,1)-(2,2,1)
\draw [dashed] (1,1)--(2.2,1.8);       %(1,0,1)-(1,2,1)
\draw [dashed] (0,1)--(1.2,1.8);       %(0,0,1)-(0,2,1)

%x_3=2
\draw [dashed] (1.2,2.8)--(3.2,2.8);   %(0,2,2)-(2,2,2)
\draw [dashed] (0.6,2.4)--(2.6,2.4);   %(0,1,2)-(2,1,2)
\draw [dashed] (0,2)--(2,2);           %(0,0,2)-(2,0,2)
\draw [dashed] (2,2)--(3.2,2.8);       %(2,0,2)-(2,2,2)
\draw [dashed] (1,2)--(2.2,2.8);       %(1,0,2)-(1,2,2)
\draw [dashed] (0,2)--(1.2,2.8);       %(0,0,2)-(0,2,2)

%x_1=0
\draw [dashed] (0.6,0.4)--(0.6,2.4);    %(0,1,0)-(0,1,2)
\draw [dashed] (1.2,0.8)--(1.2,2.8);    %(0,2,0)-(0,2,2)

%x_1=1
\draw [dashed] (1,0)--(1,2);           %(1,0,0)-(1,0,2)
\draw [dashed] (1.6,0.4)--(1.6,2.4);   %(1,1,0)-(1,1,2)
\draw [dashed] (2.2,0.8)--(2.2,2.8);   %(1,2,0)-(1,2,2)

%x_1=2
\draw [dashed] (2,0)--(2,2);           %(2,0,0)-(2,0,2)
\draw [dashed] (2.6,0.4)--(2.6,2.4);   %(2,1,0)-(2,1,2)
\draw [dashed] (3.2,0.8)--(3.2,2.8);   %(2,2,0)-(2,2,2)

%\fill[red](1.6,1.4)circle(0.1);  %(1,1,1)
\draw (0,0)node[below left]{$O$};
\draw (1,0)node[below]{$1$}; %x_1の1
\draw (2,0)node[below]{$2$}; %x_1の2
\draw (0.5,0.5)node[left]{$1$}; %x_2の1
\draw (1,1)node[left]{$2$}; %x_2の2
\draw (0,2)node[left]{$2$}; %x_3の2
\draw (0,1)node[left]{$1$}; %x_3の2
\end{tikzpicture}
\end{center}

We now show that $S$ is $\median$-closed, 
whereas it is not representable by a TVPI system.

First we see that $S$ is $\median$-closed. Indeed, it is trivial when at least two of three arguments of $\median$ are the same. 
Otherwise, we have
\begin{gather*}
\median((0,0,0),(1,1,2),(2,1,2))=(1,1,2)\in S,\\
\median((0,0,0),(1,1,2),(1,2,2))=(1,1,2)\in S, \\
\median((0,0,0),(2,1,2),(1,2,2))=(1,1,2)\in S, \\
\median((1,1,2),(2,1,2),(1,2,2))=(1,1,2)\in S. 
\end{gather*}
Thus, $S$ is $\median$-closed.

We next show that $S$ is not representable by a TVPI system. 
Assume otherwise that we have a TVPI representation 
$S=\{x\in \mathbb{Z}^3 \mid A'x\ge b'\}$.
Then the following equality holds. 
\begin{equation}\label{eq:ofS}
S=\proj_{1,2}^{-1}(\cl_{\overline{\conv}}({\proj_{1,2}(S)}))\cap \proj_{1,3}^{-1}(\cl_{\overline{\conv}}(\proj_{1,3}(S)))\cap \proj_{2,3}^{-1}(\cl_{\overline{\conv}}(\proj_{2,3}(S)))\cap \mathbb{Z}^3
\end{equation}

Indeed, since $S \be \proj_{i,j}^{-1}(\proj_{i,j}(S))\subseteq \proj_{i,j}^{-1}(\cl_{\overline{\conv}}(\proj_{i,j}(S)))$, the left-hand side of $(2)$ is included in the right-hand side. 
Conversely, letting $p=(p_1,p_2,p_3)$ be a vector in the right-hand side of $(2)$, we show that $p\in S$.  
It suffices to show that any two-variable inequality $a_1x_i+a_2x_j\ge b_1$ of $A'x\ge b'$ is satisfied by $p$, i.e., $a_1p_i+a_2p_j\ge b_1$. 
Since, $p\in \proj_{i,j}^{-1}(\cl_{\overline{\conv}}(\proj_{i,j}(S)))$, we have $\proj_{i,j}(p)\in \cl_{\overline{\conv}}(\proj_{i,j}(S))$. 
Moreover, since $\proj_{i,j}(S)\subseteq \{(x_i,x_j)\in \mathbb{R}^2 \mid a_1x_i+a_2x_j\ge b_1\}$, we have $\cl_{\overline{\conv}}(\proj_{i,j}(S))\subseteq \{(x_i,x_j)\in \mathbb{R}^2 \mid a_1x_i+a_2x_j\ge b_1\}$. 
Hence, we have $a_1p_i+a_2p_j\ge b_1$.

Now, $\cl_{\overline{\conv}}(\proj_{1,2}(S))$, $\cl_{\overline{\conv}}(\proj_{1,3}(S))$, and $\cl_{\overline{\conv}}\proj_{2,3}(S))$ are the following grayed out subsets
where the black circles are the projection of the points in $S$,
and the squares are integer points in the grayed out subset that are not in the projection of the points in $S$.

\begin{tikzpicture}
\coordinate(A)at(0,0);
\coordinate(B)at(2,1);
\coordinate(C)at(1,2);
\fill[lightgray](A)--(B)--(C)--cycle;

\draw [->,thick] (-0.2,0) -- (2.5,0) node [right]{$x_1$};
\draw [->,thick] (0,-0.2) -- (0,2.5) node [above]{$x_2$};
\node [below left] (0,0){$O$};
\draw [dashed] (0,1)--(2,1);
\draw (0,1)node[left]{$1$};
\draw [dashed] (0,2)--(1,2);
\draw (0,2)node[left]{$2$};
\draw [dashed] (1,0)--(1,2);
\draw (1,0)node[below]{$1$};
\draw [dashed] (2,0)--(2,1);
\draw (2,0)node[below]{$2$};
\fill[black](0,0)circle(0.1);
\fill[black](1,2)circle(0.1);
\fill[black](2,1)circle(0.1);
\fill[black](1,1)circle(0.1);
\draw (0,0)--(2,1);
\draw (0,0)--(1,2);
\draw (2,1)--(1,2);
%\fill[red](1,1)circle(0.06);
\end{tikzpicture}
\begin{tikzpicture}
\coordinate(A)at(0,0);
\coordinate(B)at(1,2);
\coordinate(C)at(2,2);
\fill[lightgray](A)--(B)--(C)--cycle;

\draw [->,thick] (-0.2,0) -- (2.5,0) node [right]{$x_1$};
\draw [->,thick] (0,-0.2) -- (0,2.5) node [above]{$x_3$};
\node [below left] (0,0){$O$};
\draw [dashed] (0,1)--(2,1);
\draw (0,1)node[left]{$1$};
\draw [dashed] (0,2)--(2,2);
\draw (0,2)node[left]{$2$};
\draw [dashed] (1,0)--(1,2);
\draw (1,0)node[below]{$1$};
\draw [dashed] (2,0)--(2,2);
\draw (2,0)node[below]{$2$};
\fill[black](0,0)circle(0.1);
\fill[black](1,2)circle(0.1);
\fill[black](2,2)circle(0.1);
%\fill[red](1,1)circle(0.1);
\draw [black] plot [only marks, mark=square*] coordinates {(1,1)};
\draw (0,0)--(1,2);
\draw (0,0)--(2,2);
\draw (1,2)--(2,2);

\end{tikzpicture}
\begin{tikzpicture}
\coordinate(A)at(0,0);
\coordinate(B)at(1,2);
\coordinate(C)at(2,2);
\fill[lightgray](A)--(B)--(C)--cycle;

\draw [->,thick] (-0.2,0) -- (2.5,0) node [right]{$x_2$};
\draw [->,thick] (0,-0.2) -- (0,2.5) node [above]{$x_3$};
\node [below left] (0,0){$O$};
\draw [dashed] (0,1)--(2,1);
\draw (0,1)node[left]{$1$};
\draw [dashed] (0,2)--(2,2);
\draw (0,2)node[left]{$2$};
\draw [dashed] (1,0)--(1,2);
\draw (1,0)node[below]{$1$};
\draw [dashed] (2,0)--(2,2);
\draw (2,0)node[below]{$2$};
\fill[black](0,0)circle(0.1);
\fill[black](1,2)circle(0.1);
\fill[black](2,2)circle(0.1);
%\fill[red](1,1)circle(0.1);
\draw [black] plot [only marks, mark=square*] coordinates {(1,1)};
\draw (0,0)--(1,2);
\draw (0,0)--(2,2);
\draw (1,2)--(2,2);
\end{tikzpicture}

We see that $(1,1,1)$ is included in the right-hand side of $(\ref{eq:ofS})$.
But $(1,1,1)\notin S$ which means $S$ is not representable by a TVPI system.
\end{example}

\section{Characterizing 2-decomposability}
\label{sec:2-decomp-characterization}
We now turn our attention to characterizing 2-decomposability.

\begin{definition}[Partial operation]
An operation $D \to \bb{Z}$, where $D\be \bb{Z}^k$, is called a ($k$-ary) \emph{partial operation} (over $\bb{Z}$).
$D$ is called the domain of $f$ and denoted by $\dom(f)$.
If $x \in D$, then we say $f(x)$ is defined, and otherwise (i.e., if $x \notin D$) undefined.
\end{definition}

Since a partial operation is not defined for some inputs, we can define closedness under such operation in several ways.

\begin{definition}
Let $S$ be a subset of $\mathbb{Z}^n$.
Let $f:D\to \bb{Z}$, where $D\be \bb{Z}^k$, be a partial operation. 
\begin{itemize}
\item[(i)] We say $S$ is \emph{strongly closed under $f$} (or, \emph{strongly $f$-closed}) if for any $x^{(1)},\dots,x^{(k)}\in S$, there exists $x \in S$ that satisfies $f(x^{(1)}_i,\dots,x^{(k)}_i)=x_i$ for each $1\le i\le n$ such that $f(x^{(1)}_i,\dots,x^{(k)}_i)$ is defined.
\item[(ii)]
We say $S$ is \emph{weakly closed under $f$} (or, \emph{weakly $f$-closed}) if for any $x^{(1)},\dots,x^{(k)}\in S$ such that $f(x^{(1)}_i,\dots,x^{(k)}_i)$ is defined for all $1\le i\le n$, 
we have $f(x^{(1)},\dots,x^{(k)})\in S$.
\end{itemize}    
\end{definition}

\begin{definition}
Let $S$ be a subset of $\bb{Z}^n$.
Let $f_i:\bb{Z}^n \to \bb{Z}$ be an operation associated with the $i$th coordinate for each $1\le i\le n$. 
We say $S$ is \emph{closed under $(f_i)_{1\le i\le n}$} (or, \emph{$(f_i)_{1\le i\le n}$-closed}) if for any $x^{(1)},\dots,x^{(k)}\in S$
\begin{align*}
(f_1(x^{(1)}_1,\dots,x^{(k)}_1),\dots,f_n(x^{(1)}_n,\dots,x^{(k)}_n)) \in S.
\end{align*}
\end{definition}

\begin{definition}
The partial majority operation $\maj_{p}:\dom (\maj_{p}) \rightarrow \mathbb{Z}$ is defined as follows.
Firstly, its domain $\dom (\maj_{p}) = \{ (x,x,y) \mid x,y \in \mathbb{Z} \} \cup \{ (x,y,x) \mid x,y \in \mathbb{Z} \} \cup \{ (y,x,x) \mid x,y \in \mathbb{Z} \} \be \mathbb{Z}^3$.
Then, for all $x,y \in \mathbb{Z}$, 
\begin{align*}
\maj_{p}(x,x,y) = \maj_{p}(x,y,x) = \maj_{p}(y,x,x) = x.
\end{align*}
\end{definition}

We first summarize the relationships between 2-decomposability and closedness under (partial) operations.

\begin{proposition}%[2-decomposability and related properties]
\label{prop:2-decompo-related-property}
For a subset $S \subseteq \mathbb{Z}^n$, 
consider the following properties.
\begin{itemize}
\item[(i)] $S$ is closed under some majority operation.
\item[(ii)] We can associate a majority operation $f_i$ for each coordinate $i$ of $S$ and $S$ is $(f_i)_{i}$-closed.
\item[(iii)] $S$ is strongly closed under the partial majority operation.
\item[(iv)] $S = \Join_{1\le i<j\le n} \pi_{i,j}(S)$ (i.e., $S$ is 2-decomposable).
\item[(v)] $S$ is weakly closed under the partial majority operation.
\end{itemize}
For each property, we also consider the hereditary version of it, denoted by $(\cdot)^h$, which requires that every projection of $S$ satisfies the property.
Then we have a chain of implications: 
(i) $\Rightarrow$ (ii) $\Rightarrow$ (iii) $\Rightarrow$ (iv) $\Rightarrow$ (v).
We also have the following equivalences: 
(i) $\Leftrightarrow$ $(i)^h$, 
(ii) $\Leftrightarrow$ $(ii)^h$, and 
(iii) $\Leftrightarrow$ $(iii)^h$ $\Leftrightarrow$ $(iv)^h$ $\Leftrightarrow$ $(v)^h$.
\end{proposition}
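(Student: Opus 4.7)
The plan is to first establish the linear chain (i) $\nr$ (ii) $\nr$ (iii) $\nr$ (iv) $\nr$ (v) and then derive the hereditary equivalences. The easy links are (i) $\nr$ (ii), by taking $f_i = f$ for every $i$; (ii) $\nr$ (iii), since at any coordinate where $\maj_{p}$ is defined, any majority $f_i$ is forced to return the unique repeated value, so the componentwise image witnesses strong closedness; and (iv) $\nr$ (v), by a pigeonhole argument where, letting $y$ be the componentwise partial majority of $x^{(1)},x^{(2)},x^{(3)} \in S$, each set $A_i = \{k \mid x^{(k)}_i = y_i\} \be \{1,2,3\}$ has size $\ge 2$, so $A_i \cap A_j \ne \emptyset$ for every pair $\{i,j\}$; the common index produces a witness for $\pi_{i,j}(y) \in \pi_{i,j}(S)$, and 2-decomposability delivers $y \in S$.

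The technical core is (iii) $\nr$ (iv). I would fix $x \in \bb{Z}^n$ with $\pi_{i,j}(x) \in \pi_{i,j}(S)$ for every pair and induct on $m$, proving that for every $T \be \{1,\dots,n\}$ with $|T| = m$ there is $v^T \in S$ with $\pi_T(v^T) = \pi_T(x)$. The base $m = 2$ is the hypothesis. For $m \ge 3$, I would pick any three distinct elements of $T$ and let $T_1, T_2, T_3$ be the three $(m-1)$-subsets obtained by removing each of them, so that every coordinate of $T$ lies in at least two of the $T_\ell$. The inductive witnesses $v^{(\ell)} \in S$ then agree with $x$ on each coordinate of $T$ except possibly at one position per witness, so on every coordinate of $T$ at least two of the three values equal $x$; hence $\maj_{p}$ is defined and returns $x$ there. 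Strong $\maj_{p}$-closedness produces a vector in $S$ with the required $T$-projection. Taking $m = n$ then yields $x \in S$.

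For the hereditary equivalences, each $(\cdot)^h \nr (\cdot)$ is immediate. The converses for (i), (ii), and (iii) follow from the lifting argument of \cref{lemma:proj-closed}(i), extended from pairs to arbitrary projections; in the partial-operation case, the vector guaranteed by strong closedness on a lifted triple projects down to the witness required on the projected triple. Combining these with the chain (iii) $\nr$ (iv) $\nr$ (v) applied to every projection of $S$ gives (iii)$^h \nr$ (iv)$^h \nr$ (v)$^h$. To close the cycle I would prove (v)$^h \nr$ (iii) directly: given $x^{(1)},x^{(2)},x^{(3)} \in S$, let $I$ be the set of coordinates at which $\maj_{p}$ is defined on this triple; the projection $\pi_I(S)$ is weakly $\maj_{p}$-closed by (v)$^h$, and now $\maj_{p}$ is totally defined on the projected triple, so it places the componentwise partial majority into $\pi_I(S)$; any preimage in $S$ witnesses strong closedness.

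The hardest part, I expect, will be the induction in (iii) $\nr$ (iv): choosing three $(m-1)$-subsets of $T$ so that every coordinate of $T$ is omitted at most once, and then carefully tracking the three partial-majority values at each coordinate of $T$ (outside $T$ the values may disagree arbitrarily, which is why \emph{strong} rather than weak closedness is used). A secondary care point is (v)$^h \nr$ (iii), where one must project to the \emph{exact} set of coordinates on which $\maj_{p}$ is defined, so that weak closedness is legitimately applicable to a totally-defined triple; this is precisely the flexibility provided by the hereditary hypothesis.
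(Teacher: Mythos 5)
Your proposal is correct and follows essentially the same route as the paper's proof: the same induction on subset size for (iii) $\Rightarrow$ (iv) using the three $(m-1)$-element subsets, the same pigeonhole witness argument for (iv) $\Rightarrow$ (v), and the same closing step (v)$^h$ $\Rightarrow$ (iii) via projection onto the coordinates where the partial majority is defined. No gaps.
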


\begin{proof}
For (i) $\nr$ (ii), 
if $S$ is closed under majority operation $f$, 
we can take $f_i:=f$ for all $1\le i\le n$. 
For (ii) $\nr$ (iii), 
consider arbitrary $x^{(1)},x^{(2)},x^{(3)}\in S$.
We know that $(f_1(x^{(1)}_1,x^{(2)}_1,x^{(3)}_1),\dots,f_n(x^{(1)}_n,x^{(2)}_n,x^{(3)}_n)\in S$.
Moreover, for each $1\le i\le n$ such that $(x^{(1)}_i,x^{(2)}_i,x^{(3)}_i) \in \dom(\maj_p)$ we have $\maj_p(x^{(1)}_i,x^{(2)}_i,x^{(3)}_i) = f_i(x^{(1)}_i,x^{(2)}_i,x^{(3)}_i)$ since $f_i$ is a majority operation.
Hence, $S$ is strongly closed under $\maj_p$.

For (iii) $\nr$ (iv), 
it is clear that $S\be \Join_{1\le i<j\le n}\pi_{i,j}(S)$.
Thus, we show the inverse inclusion.
This is trivial when $n=2$.
Assume that $n \ge 3$ in the following.
Take $x\in \Join_{1\le i<j\le n}\pi_{i,j}(S)$.
Now, for $ 3\le \ell\le n$, 
we show for any subset $I\be [n]$ of cardinality $\ell$ there exists $y \in S$ such that $x_i=y_i$ for all $i \in I$.
We show this by induction on $\ell$.
First, consider the case of $\ell=3$.
Without of loss generality, 
we consider $I=\{1,2,3\}$.
Since $x \in \Join_{1\le i<j\le n}\pi_{i,j}(S)$, 
there exist $x^{(1,2)},x^{(1,3)},x^{(2,3)}\in S$ 
such that 
$\pi_{1,2}(x^{(1,2)})=\pi_{1,2}(x)$, 
$\pi_{1,3}(x^{(1,3)})=\pi_{1,3}(x)$, and 
$\pi_{2,3}(x^{(2,3)})=\pi_{2,3}(x)$.
Therefore, for each $i \in I$, $\maj_p(x^{(1,2)}_i,x^{(1,3)}_i,x^{(2,3)}_i)$ is defined and equal to $x_i$.
Since $S$ is strongly closed under $\maj_p$, 
there exists $y\in S$ 
such that $y_i = \maj_p(x^{(1,2)}_i,x^{(1,3)}_i,x^{(2,3)}_i)$ for all $i \in I$. 
This shows the case of $\ell=3$.
For $\ell > 3$, Without of loss generality, 
we consider $I=\{1,\dots,\ell\}$. 
By induction hypothesis for 
$I_{1,2}=\{1,2,4,\dots,\ell\},
I_{1,3}=\{1,3,4,\dots,\ell\}$, 
and $I_{2,3}=\{2,3,4,\dots,\ell\}$, 
we can take $x^{(1,2)},x^{(1,3)},x^{(2,3)} \in S$ such that $x^{(j,k)}_i=x_i$ for all $i \in I_{j,k}$ where $(j,k) =(1,2), (1,3), (2,3)$.
It follows that for all $i \in I$ we have $\maj_p(x^{(1,2)}_i,x^{(1,3)}_i,x^{(2,3)}_i) = x_i$.
Since $S$ is strongly closed under $\maj_p$, 
there exists $y\in S$ 
such that $y_i = \maj_p(x^{(1,2)}_i,x^{(1,3)}_i,x^{(2,3)}_i)$ for all $i \in I$. 
This shows the case of general $\ell$.
Therefore, $S=\Join_{1\le i<j\le n}\pi_{i,j}(S)$.

We then show (iv) $\nr$ (v).
We assume $S=\Join_{1\le i<j\le n}\pi_{i,j}(S)$.
For any $x^{(1)},x^{(2)},x^{(3)} \in S$ such that $(x^{(1)}_i,x^{(2)}_i,x^{(3)}_i) \in \dom(\maj_p)$ for all $1\le i\le n$, let $y:= \maj_p(x^{(1)},x^{(2)},x^{(3)})$.
By the definition of $y$, for each $1\le i<j\le n$, $\pi_{i,j}(y)$ is equal to at least one of $\pi_{i,j}(x^{(1)}), \pi_{i,j}(x^{(2)}),$ and $\pi_{i,j}(x^{(3)})$. 
Thus, $\pi_{i,j}(y) \in \pi_{i,j}(S)$. 
Since $S=\Join_{1\le i<j\le n}\pi_{i,j}(S)$, 
we have $y\in S$.
Hence, $S$ is weakly closed under $\maj_p$.

For the relationships with the hereditary versions, 
it is clear that $(\cdot)^h$ implies $(\cdot)$.
Hence, it suffices to show that (i) $\nr$ (i$)^h$, (ii) $\nr$ (ii$)^h$, (iii) $\nr$ (iii$)^h$, and (v$)^h$ $\nr$ (iii$)^h$.

For (i) $\nr$ (i$)^h$, we can show that closedness under operations is preserved by projection in a similar way as \cref{lemma:proj-closed}.

For (ii) $\nr$ (ii$)^h$, let $I \be \{1,\dots, n \}$ and take $x^{(1)},x^{(2)},x^{(3)} \in \proj_I(S)$.
Then for each $j=1,2,3$ there exists $y^{(j)} \in S$ such that $\proj_I(y^{(j)}) = x^{(j)}$.
By assumption, $y:=(f_1(y^{(1)}_1,y^{(2)}_1,y^{(3)}_1),\dots,f_n(y^{(1)}_n,y^{(2)}_n,y^{(3)}_n)) \in S$.
Therefore, $(f_i(x^{(1)}_i,x^{(2)}_i,x^{(3)}_i))_{i \in I} = \proj_I(y) \in \proj_I(S)$.

For (iii) $\nr$ (iii$)^h$, let $I \be \{\ 1,\dots, n \}$ and take $x^{(1)},x^{(2)},x^{(3)} \in \proj_I(S)$.
Then for each $j=1,2,3$ there exists $y^{(j)} \in S$ such that $\proj_I(y^{(j)}) = x^{(j)}$.
By assumption, there exists $y \in S$ that satisfies $y_i = \maj_p(y^{(1)}_i,y^{(2)}_i,y^{(3)}_i)$ for all $1 \le i \le n$ such that $\maj_p(y^{(1)}_i,y^{(2)}_i,y^{(3)}_i)$ is defined.
Then $\proj_I(y)$ satisfies that $y_i = \maj_p(x^{(1)}_i,x^{(2)}_i,x^{(3)}_i)$ for all $i \in I$ such that $\maj_p(x^{(1)}_i,x^{(2)}_i,x^{(3)}_i)$ is defined.

Finally, for (v$)^h$ $\nr$ (iii$)^h$, it suffices to show that (v$)^h$ $\nr$ (iii) (since we have shown (iii) $\nr$ (iii$)^h$).
Take $x^{(1)},x^{(2)},x^{(3)} \in S$.
Let $I = \{ i \in \{1,\dots,n\} \mid \maj_p(x^{(1)}_i,x^{(2)}_i,x^{(3)}_i)\text{ is defined} \}$.
By assumption, $\proj_I(S)$ is weakly $\maj_p$-closed.
Therefore, there exists $y \in \proj_I(S)$ such that $y_i = \maj_p(x^{(1)}_i,x^{(2)}_i,x^{(3)}_i)$ for all $i \in I$.
Then there exists $x \in S$ such that $x_i=y_i$ for all $i \in I$.
This $x$ satisfies that $x_i = \maj_p(x^{(1)}_i,x^{(2)}_i,x^{(3)}_i)$ for all $i \in I$ such that $\maj_p(x^{(1)}_i,x^{(2)}_i,x^{(3)}_i)$ is defined.
\end{proof}

We note that none of the properties in \cref{prop:2-decompo-related-property} other than (iv) can characterize (iv), i.e., 2-decomposability.
We observe in the following example that it is indeed not possible to characterize 2-decomposability by closedness under operations by showing that 2-decomposability is not preserved by projection.
Note that closedness under operations is preserved by projection, which can be shown in a similar way as \cref{lemma:proj-closed} (i).
\begin{example}
Let $S=\{ (0,0,1,2), (0,1,0,3), (1,0,0,4) \}$.
Then $S$ is 2-decomposable.
To see this, take $x \in \Join_{1\le i<j\le 4}\proj_{i,j}(S)$.
First, assume that $\proj_{1,4}(x)=(0,2)$.
As $\proj_{1,2}(x) \in \proj_{1,2}(S)$ and $\proj_{1,3}(x) \in \proj_{1,3}(S)$, we must have $x = (0,0,1,2)$.
Hence, we have $x \in S$.
We can similarly show $x \in S$ in the cases of $\proj_{1,4}(x)=(0,3)$ and $\proj_{1,4}(x)=(1,4)$.
Hence, $S$ is 2-decomposable.
On the other hand, $S' := \proj_{1,2,3}(S)$, which equals $\{ (0,0,1), (0,1,0), (1,0,0) \}$,
is not 2-decomposable, since $(0,0,0) \in \Join_{1\le i<j\le 3}\proj_{i,j}(S')\setminus S'$.
Hence, 2-decomposability is not preserved by projection in general.
\end{example}

While 2-decomposability cannot be characterized by closedness under operations, we show that it can be indeed characterized by weak closedness under infinitely many partial operations.
%Intuitive explanation
Intuitively, we prepare all the partial operations that correspond to the condition of 2-decomposability.
%Recall that $S$ is 2-decomposable if and only if it contains all vectors $x$ such that $\proj_I(x) \in \proj_I(S)$ for all lists of indices $I$ from the set $\{1,\dots, n\}$ with $|I|\le 2$.
Concretely, assume that $S \subseteq \mathbb{Z}^4$ is 2-decomposable.
Then $S$ contains all vectors $x=(x_1,x_2,x_3,x_4)$ such that $\proj_{i,j}(x) \in \proj_{i,j}(S)$ for all $1 \le i < j \le 4$.
The latter condition can be rewritten as $(x_1,x_2,*,*)$, $(x_1,*,x_3,*)$, $(x_1,*,*,x_4)$, $(*,x_2,x_3,*)$, $(*,x_2,*,x_4)$, $(*,*,x_3,x_4)\in S$, where $*$ can be any (distinct) integer.
Then the condition $x \in S$ is guaranteed by weak closedness under the partial operation $f:\mathbb{Z}^6 \to \mathbb{Z}$ such that $f(y_1,y_1,y_1,*,*,*)=y_1$, $f(y_2,*,*,y_2,y_2,*)=y_2$, $f(*,y_3,*,y_3,*,y_3)=y_3$, and 
$f(*,*,y_4,*,y_4,y_4)=y_4$ for all $y_1,y_2,y_3,y_4 \in \bb{Z}$, where $*$ can be any (distinct) integer, and $f$ is undefined for the other inputs.
In this way, we prepare partial operations 
that correspond to each dimension to characterize 2-decomposability by weak closedness under operations.

Now, we formally state the above intuition.
For each integer $k\ge 2$, we will define a partial operation $f^{(k)}$ over $\bb{Z}$ with $\dom (f^{(k)}) \be \bb{Z}^{\binom{k}{2}}$.
We first define $\dom(f^{(k)})$.
Let $T_k$ be the set of two-element subsets of $[k]:=\{1,\dots,k\}$, that is
\[T_k:=\{\{\ell,m\}\be  [k] \mid \ell \neq m\}.\]
Thus, we have 
$|T_k|=\binom{k}{2}$. 
We identify $\bb{Z}^{T_k}$ with $\bb{Z}^{\binom{k}{2}}$ below.
Then, we define a family $\mathscr{S}_k\subseteq {\rm Pow}(T_k)$ as 
$\mathscr{S}_k:=\{S_1,\dots,S_k\}$ 
where each $S_{\ell}$ is defined as 
\[S_{\ell}:=\{\{p,q\}\in T_k~|~\ell \in \{p,q\}\}.\]
By definition, we have $|S_{\ell}| =k-1$, and 
for different $\ell,m$, we have
$S_{\ell}\cap S_m=\{\{\ell,m\}\}$. 

Then, for each $d\in \bb{Z}$ and $S_{\ell}\in \mathscr{S}_k$, 
we define $Q_d(S_{\ell})\subseteq \bb{Z}^{T_k}$ as 
\[Q_d(S_{\ell}):=\{(x_{\{p,q\}})_{\{p,q\} \in T_k}\in \bb{Z}^{T_k} \mid x_{\{\ell,m\}}=d  ~{\rm for}~ m\in [k]\setminus \{\ell\}\}.\]

\if0 
Then for each $d\in \bb{Z}$ and $S_{\ell}\in \mathscr{S}_k$ and for each $x \in Q_d(S_{\ell})$, 
we define $f^{(k)}(x) = x$.
Note that \[\dom(f^{(k)}):=\bigcup_{d\in \bb{Z}, S_{\ell}\in \mathscr{S}_k}Q_d(S_{\ell}).\]
\fi 

Now, taking union of all of the above set for $d\in \bb{Z}, S_{\ell}\in \mathscr{S}_k$, 
we obtain $\dom(f^{(k)}) \be \bb{Z}^{T_k}$, that is
\[\dom(f^{(k)}):=\bigcup_{d\in \bb{Z}, S_{\ell}\in \mathscr{S}_k}Q_d(S_{\ell}).\]
Note that if 
$(x_{\{p,q\}})_{\{p,q\} \in T_k}\in Q_d(S_{\ell})\cap Q_{d'}(S_m)$, 
we have
$d=x_{\{\ell,m\}}=d'$. 
Therefore, for $(x_{\{p,q\}})_{\{p,q\} \in T_k}\in \dom(f^{(k)})$, 
we define $f^{(k)}((x_{\{p,q\}})_{\{p,q\} \in T_k}):=d$ 
where $d\in \bb{Z}$ is the unique element such that $(x_{\{p,q\}})_{\{p,q\} \in T_k}\in Q_d(S_{\ell})$ for some $\ell$.

Then we define a set $F$ of partial operations as
\[F:=\{f^{(k)} \mid k\ge 2\}. \]

Now we are ready to show our result that characterizes 2-decomposability.
\begin{theorem}
\label{thm:2-decompo<=>F-closed}
Let $S$ be a subset of $\mathbb{Z}^n$.
$S$ is 2-decomposable if and only if $S$ is weakly $f$-closed for each $f \in F$.
\end{theorem}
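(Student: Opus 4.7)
The plan is to unfold the definition of $f^{(k)}$ and use $2$-decomposability (resp.\ weak closedness) directly in each direction. The forward direction will use $f^{(k)}$ for every $k \ge 2$, while the backward direction will use only $f^{(n)}$, with the indexing arranged so that the ``$\ell$-th slot'' of the domain corresponds to coordinate $i = \ell$.

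For the only-if direction, fix $k \ge 2$ and take $x^{(\{p,q\})} \in S$ for each $\{p,q\} \in T_k$ such that $f^{(k)}$ is defined on every coordinate. By definition of $\dom(f^{(k)})$, for each $i \in [n]$ there exist $\ell_i \in [k]$ and $d_i \in \bb{Z}$ with $x^{(\{\ell_i,m\})}_i = d_i$ for all $m \in [k] \setminus \{\ell_i\}$; set $y := (d_1,\dots,d_n)$. By $2$-decomposability it suffices to check $\pi_{i,j}(y) \in \pi_{i,j}(S)$ for every $i < j$. I would split on $\ell_i$ versus $\ell_j$: if $\ell_i = \ell_j$, any $m \in [k] \setminus \{\ell_i\}$ (which exists since $k \ge 2$) gives $\pi_{i,j}(x^{(\{\ell_i,m\})}) = (d_i,d_j)$; if $\ell_i \ne \ell_j$, then $x^{(\{\ell_i,\ell_j\})}$ itself satisfies $\pi_{i,j}(x^{(\{\ell_i,\ell_j\})}) = (d_i,d_j)$ by definition of $\ell_i$ and $\ell_j$.

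For the if direction, take $y \in \Join_{1 \le i<j \le n} \pi_{i,j}(S)$ and, for each pair $\{i,j\} \in T_n$, choose $x^{(\{i,j\})} \in S$ with $\pi_{i,j}(x^{(\{i,j\})}) = (y_i,y_j)$. Apply $f^{(n)}$ to this family. For each coordinate $i \in [n]$, taking $\ell = i$ in the definition of $\dom(f^{(n)})$, we have $x^{(\{i,m\})}_i = y_i$ for all $m \in [n] \setminus \{i\}$, so $(x^{(\{p,q\})}_i)_{\{p,q\} \in T_n} \in Q_{y_i}(S_i)$. Hence $f^{(n)}$ is defined coordinatewise with output exactly $y$, and weak $f^{(n)}$-closedness yields $y \in S$.

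The only part requiring any real care is the case split in the only-if direction: the pair $\{\ell_i,\ell_j\}$ only lies in $T_k$ when $\ell_i \ne \ell_j$, which is why the case $\ell_i = \ell_j$ must be treated separately by picking an auxiliary $m$ distinct from $\ell_i$. Everything else is essentially unpacking definitions.
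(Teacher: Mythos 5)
Your proof is correct and follows essentially the same route as the paper's: the if direction applies $f^{(n)}$ to the witnesses of the join condition, and the only-if direction extracts $(\ell_i,d_i)$ from the domain condition for each coordinate and verifies $\pi_{i,j}(y)\in\pi_{i,j}(S)$ via the same case split on $\ell_i=\ell_j$ versus $\ell_i\neq\ell_j$. Your write-up is in fact slightly more explicit than the paper's in the if direction (and correctly picks $m\in[k]\setminus\{\ell_i\}$ where the paper has a small typo writing $[n]$).
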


%(Theorem \ref{thm:2-decompo<=>F-closed})
\begin{proof}
For the if part, 
%We show if $R$ is weakly closed under $f$ for each $f \in F$, then $R$ is 2-decomposable. 
take any $x \in \Join_{i,j}\pi_{i,j}(S)$.
It suffices to show that $x \in S$.
For each $1 \le i < j \le n$, 
there exists $s^{(i,j)}\in S$ such that $\pi_{i,j}(x)=\pi_{i,j}(s^{(i,j)})$.
Now, applying $f^{(n)}$ to vectors $s^{(i,j)}$ $(1 \le i < j \le n)$, we obtain $x$.
\if0 
Now, for $1\le u\le n$, 
we define 
$y_u:=(s^{(i,j)}_u)_{\{i,j\}}\in \dom(f^{(n)})^{T_n}$, 
then, $y_u\in Q_{x_u}(S_u)$. 
In fact, for $v\in [n] \setminus \{u\}$, 
$y_u[\{u,v\}]=s_{uv}[u]=s[u]$.
In addition, this means $f^{(n)}((s^{(i,j)})_{\{i,j\}})=x$. 
\fi 
Since $S$ is weakly $f^{(n)}$-closed, we have $x \in S$. 
Therefore, $S$ is 2-decomposable.

We then show the only-if part.
Assume that $S$ is 2-decomposable.
We show that $S$ is weakly $f^{(k)}$-closed for each $f^{(k)} \in F$.
Fix $k \ge 2$ and take any (multi-)set $\{s^{(\ell,m)} \mid \{\ell,m\}\in T_k \}$ of elements of $S$ on which $f^{(k)}$ is defined.
For $1\le i \le n$, define $r^{(i)} \in \bb{Z}^{T_k}$ as 
\[r^{(i)}_{\{\ell,m\}}:=s^{(\ell,m)}_i.\]
Then we have $r^{(i)} \in \dom(f^{(k)})$ for each $i$.
Moreover, by definition, 
we have $f^{(k)}((s^{(\ell,m)})_{\{\ell,m\} \in T_k})=(f^{(k)}(r^{(i)}))_{i \in [n]}$. 
Hence, it suffices that $(f^{(k)}(r^{(i)}))_{i \in [n]} \in S$.
For each $i\in [n]$, arbitrarily take $d_i\in \bb{Z}$ and $\ell_i \in [k]$ such that $r^{(i)} \in Q_{d_i}(S_{\ell_i})$.
Then we have $\pi_{i,j}((f^{(k)}(r^{(i)}))_{i \in [n]})
=(f^{(k)}(r^{(i)})),f^{(k)}(r^{(j)}))
=(d_i,d_j)$.
On the other hand, when $\ell_i \neq \ell_j$, we have $s^{(\ell_i,\ell_j)}_i=r^{(i)}_{\{\ell_i,\ell_j\}} = 
d_i$ and $s^{(\ell_i,\ell_j)}_j=r^{(j)}_{\{\ell_i,\ell_j\}} = d_j$, implying that $\pi_{i,j}(s^{(\ell_i,\ell_j)})=(d_i,d_j) \in \pi_{i,j} (S)$.
Moreover, when $\ell_i = \ell_j$, by taking arbitrary $\ell \in [n] \setminus \{ \ell_i \}$, 
we have $s^{(\ell_i,\ell)}_i=r^{(i)}_{\{\ell_i,\ell\}} = 
d_i$ and $s^{(\ell_i,\ell)}_j=r^{(j)}_{\{\ell_i,\ell\}} = d_j$, implying that $\pi_{i,j}(s^{(\ell_i,\ell)})=(d_i,d_j) \in \pi_{i,j} (S)$.
Therefore, $\pi_{i,j}((f^{(k)}(r^{(i)}))_{i \in [n]}) \in \pi_{i,j} (S)$ for each $i,j$, and 
since $S$ is 2-decomposable, we have $(f^{(k)}(r^{(i)}))_{i \in [n]} \in S$.
This completes the proof.
\end{proof}

\begin{remark}
\cref{thm:2-decompo<=>F-closed} is true even if $\bb{Z}$ is replaced with any set, since we do not use the property of $\bb{Z}$ in the proof.
\end{remark}

\section{Conclusion and future work}
\label{sec:conclusion}
We have analyzed the relationship between polyhedral representations of a set of integer vectors and closedness under operations in terms of 2-decomposability.
We especially show that the set of integer vectors is representable by a UTVPI system if and only if it is closed under the median operation and the directed discrete midpoint operation.
We also characterize 2-decomposability by weak closedness under partial operations.

Investigating other relationship between polyhedral representations and closedness under operations seems an interesting future direction. 
One candidate would be the relationship between Horn polyhedra (i.e., each inequality has at most one positive coefficient in the defining system) and the minimum operation.
It is known that the set of integer vectors in a Horn polyhedron is closed under the minimum operation~(see, e.g., \cite{MaD02}), but it is not known whether the converse is true.

\section*{Acknowledgments}
The authors thank Akihisa Tamura for helpful comments and suggestions to improve presentation of the paper.
The first author was partially supported by JST, ACT-X Grant Number JPMJAX200C, Japan, and JSPS KAKENHI Grant Number JP21K17700.
The third author was supported by JST CREST Grant Number JPMJCR22M1.
The fourth author was supported by WISE program (MEXT) at Kyushu University.

%\clearpage
\bibliography{UTVPI}
\bibliographystyle{plain}
%\clearpage
\section*{Appendix}

\subsection*{A More Direct Proof of \cref{lem:two-dim} in \cref{subsec:remaining-proof}}
Our proof of \cref{lem:two-dim} in \cref{subsec:remaining-proof} heavily relies on previous work, but we provide a more direct proof here for readability of the paper.

\newtheorem*{Lem-two-dim}{\cref{lem:two-dim}}
\begin{Lem-two-dim}
Let $S$ be a subset of $\mathbf{Z}^2$.
If $S$ is $\avgd$-closed, then $S$ is representable by a UTVPI system.
\end{Lem-two-dim}

\subsubsection*{Basic observations}
\label{subsubsec:basic-observations}

\begin{lemma}
\label{lemma:refl-UTVPI}
Let $S$ be a subset of $\mathbb{Z}^2$.
If $S$ is representable by a UTVPI system, then so are the following sets:
\begin{itemize}
 \item[(i)] the reflection of $S$ over the $x_1$-axis, i.e., $\{ (x_1,-x_2) \mid (x_1,x_2) \in S \}$, 
 \item[(ii)] the reflection of $S$ over the $x_2$-axis, i.e., $\{ (-x_1,x_2) \mid (x_1,x_2) \in S \}$, 
 \item[(iii)] the reflection of $S$ over the line $x_1=x_2$, i.e., $\{ (x_2,x_1) \mid (x_1,x_2) \in S \}$, 
 \item[(iv)] the reflection of $S$ over the line $x_1=-x_2$, i.e., $\{ (-x_2,-x_1) \mid (x_1,x_2) \in S \}$, and 
 \item[(v)] the translation of $S$ by an integer vector $(a_1,a_2)$, i.e., $\{ (x_1+a_1,x_2+a_2) \mid (x_1,x_2) \in S \}$.
\end{itemize}
\end{lemma}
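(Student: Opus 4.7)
The plan is to reduce all five claims to a single observation: each of the listed operations is an affine map $T \colon \mathbb{R}^2 \to \mathbb{R}^2$ that bijectively sends $\mathbb{Z}^2$ onto $\mathbb{Z}^2$ and whose inverse substitution, when applied to a UTVPI inequality, again yields a UTVPI inequality. Fix a UTVPI representation $S = P \cap \mathbb{Z}^2$ with $P = \{x \in \mathbb{R}^2 \mid Ax \ge b\}$, where $A \in \{0,-1,+1\}^{m \times 2}$ and $b \in \mathbb{R}^m$. For each transformation $T$, I will exhibit the explicit new system $A'y \ge b'$ with $A' \in \{0,-1,+1\}^{m \times 2}$ and $b' \in \mathbb{R}^m$ such that $T(S) = \{y \in \mathbb{R}^2 \mid A'y \ge b'\} \cap \mathbb{Z}^2$.

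For (i), set $y = (x_1, -x_2)$ and rewrite a generic row $a_{k1}x_1 + a_{k2}x_2 \ge b_k$ as $a_{k1}y_1 + (-a_{k2})y_2 \ge b_k$; negating the second column of $A$ keeps entries in $\{0,\pm 1\}$. The cases (ii) and (iii) are symmetric: for (ii) negate the first column; for (iii) swap the two columns. For (iv), compose (i)--(iii): writing $y = (-x_2, -x_1)$, the row becomes $(-a_{k2})y_1 + (-a_{k1})y_2 \ge b_k$, and again all coefficients stay in $\{0,\pm1\}$. For (v), substitute $x = y - (a_1, a_2)$; the inequality becomes $a_{k1}y_1 + a_{k2}y_2 \ge b_k + a_{k1}a_1 + a_{k2}a_2$, which leaves $A$ untouched and only modifies the right-hand side (no integrality of $b'$ is required by the definition). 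In every case the transformation $T$ is an integer-affine bijection of $\mathbb{Z}^2$, so it maps $P \cap \mathbb{Z}^2$ exactly onto $\{y \in \mathbb{R}^2 \mid A'y \ge b'\} \cap \mathbb{Z}^2$, as needed.

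There is no genuine obstacle here; the whole argument is a routine bookkeeping of how linear substitutions act on the coefficient matrix. The only point worth double-checking is that the number-of-nonzero-entries-per-row constraint (at most two) is preserved, which is immediate because every transformation is a permutation-with-signs of the variables (plus, in case (v), a shift that does not touch $A$). Thus the substitutions stay within the UTVPI class, and the integer translation assumption in (v) guarantees that $T(\mathbb{Z}^2) = \mathbb{Z}^2$ so that the integer-point equality $T(S) = T(P) \cap \mathbb{Z}^2$ holds.
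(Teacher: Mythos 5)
Your proof is correct and follows essentially the same route as the paper: both arguments observe that each transformation is a signed permutation of coordinates (plus an integer shift in case (v)), apply the corresponding substitution to each row of the UTVPI system, and note that the resulting matrix still has entries in $\{0,\pm 1\}$ with at most two nonzeros per row while the bijectivity of the map on $\mathbb{Z}^2$ gives the integer-point equality. No gaps.
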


\begin{proof}
Since $S$ is represented by a UTVPI system, there exist a unit quadratic matrix $A$ and a vector $b$ such that $S = \{ x \in \mathbb{Z}^2 \mid Ax\ge b\}$.
We use the fact that for a surjection $f:\mathbb{Z}^2\to \mathbb{Z}^2$, 
if there exist a matrix $A'$ and a vector $b'$ such that $Ax\ge b$ if and only if $A'f(x)\ge b'$ for $x\in \mathbb{Z}^2$, 
then the image $f(S)$ is equal to 
$\{x \in \mathbb{Z}^2 \mid A'x\ge b'\}$.
In fact, for any $y \in f(S)$ there exists an element $x \in S$ such that $f(x)=y$, and since $Ax \ge b$ if and only if $A'y \ge b'$, we have $y \in \{x\in \mathbb{Z}^2 \mid A'x\ge b'\}$. 
Conversely, for any $y \in \{x\in \mathbb{Z}^2 \mid A'x\ge b'\}$, since $f$ is surjective there exists an $x \in \mathbb{Z}^2$ such that $f(x)=y$.
Now, since $Ax \ge b$ if and only if $A'y \ge b'$, this $x$ is in $S$ which means $y \in f(S)$. 
Using this fact, we show the lemma as follows.

\begin{itemize}
    \item [(i)]
    Let $A_1$ be a unit quadratic matrix whose the first column is the same as that of $A$ and the second column is $(-1)$ times that of $A$. 
    Then clearly $A \binom{x_1}{x_2}\ge b$ if and only if $A_1 \binom{x_1}{-x_2}\ge b$, and thus, $\{x\in \mathbb{Z}^2 \mid A_1x\ge b\}$ is a UTVPI representation of the reflection of $S$ over the $x_1$-axis.
    \item[(ii)]
    Let $A_2$ be a unit quadratic matrix whose the first column is $(-1)$ times that of $A$ and the second column is the same as that of $A$. 
    Then clearly $A \binom{x_1}{x_2}\ge b$ if and only if $A_2 \binom{-x_1}{x_2}\ge b$,
    and thus, $\{x \in \mathbb{Z}^2 \mid A_2x \ge b\}$ is a UTVPI representation of the reflection of $S$ over the $x_2$-axis.
    \item[(iii)]
    Let $A_3$ be a unit quadratic matrix 
 obtained by permuting the first and second column of $A$.
    Then clearly $A \binom{x_1}{x_2} \ge b$ if and only if $A_3 \binom{x_2}{x_1}\ge b$, and thus, $\{x\in \mathbb{Z}^2 \mid A_3x\ge b\}$ is a UTVPI representation of the reflection of $S$ over the line $x_1=x_2$. 
    \item[(iv)]
    Since the reflection of $S$ over the line $x_1=-x_2$ can be obtained by a sequence of the reflection over the line $x_1=x_2$, the reflection over the $x_1$-axis, and the reflection over the $x_2$-axis, the reflection of $S$ over the line $x_1=-x_2$ is representable by a UTVPI system by (i)-(iii).
    \item[(v)]
    Since $A\binom{x_1}{x_2}\ge b$ if and only if $A\binom{x_1+a_1}{x_2+a_2}\ge b+A\binom{a_1}{a_2}$, 
    $\{x\in \mathbb{Z}^2 \mid Ax\ge b+A\binom{a_1}{a_2}\}$ is 
    a UTVPI representation of the translation of $S$ by $(a_1,a_2)$. 
\end{itemize}
\end{proof}

\begin{lemma}
\label{lemma:closed-avgd}
Let $S$ be a subset of $\mathbb{Z}^2$.
If $S$ is $\avgd$-closed, then so are the following sets:
\begin{itemize}
 \item[(i)] the reflection of $S$ over the $x_1$-axis, i.e., $\{ (x_1,-x_2) \mid (x_1,x_2) \in S \}$, 
 \item[(ii)] the reflection of $S$ over the $x_2$-axis, i.e., $\{ (-x_1,x_2) \mid (x_1,x_2) \in S \}$, 
 \item[(iii)] the reflection of $S$ over the line $x_1=x_2$, i.e., $\{ (x_2,x_1) \mid (x_1,x_2) \in S \}$, 
 \item[(iv)] the reflection of $S$ over the line $x_1=-x_2$, i.e., $\{ (-x_2,-x_1) \mid (x_1,x_2) \in S \}$, and 
 \item[(v)] the translation of $S$ by an integer vector $(a_1,a_2)$, i.e., $\{ (x_1+a_1,x_2+a_2) \mid (x_1,x_2) \in S \}$.
\end{itemize}
\end{lemma}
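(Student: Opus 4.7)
The plan is to reduce every part to two simple algebraic identities satisfied by $\avgd$ itself, and then verify that each transformation commutes (or anti-commutes) with $\avgd$ coordinatewise.

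The two key identities are
\begin{itemize}
\item[(a)] \emph{Negation equivariance}: $\avgd(-x,-y) = -\avgd(x,y)$ for all $x,y \in \bb{Z}$.
\item[(b)] \emph{Translation equivariance}: $\avgd(x+c,y+c) = \avgd(x,y) + c$ for all $x,y,c \in \bb{Z}$.
\end{itemize}
Identity (a) is shown by a two-case split on whether $x \ge y$ or $x < y$, using the facts $-\lceil t\rceil = \lfloor -t\rfloor$ and $-\lfloor t\rfloor = \lceil -t\rceil$, together with the observation that the sign of $x - y$ flips under negation so the two defining cases of $\avgd$ swap. Identity (b) is a direct computation: if $x \ge y$ then $x+c \ge y+c$ and $\lceil (x+y+2c)/2\rceil = \lceil (x+y)/2\rceil + c$; the other case is symmetric.

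With (a) and (b) in hand, each of (i)--(v) is an immediate check. For (i), take $(x_1,-x_2),(y_1,-y_2)$ in the reflection $T(S)$; then
\begin{align*}
\avgd\bigl((x_1,-x_2),(y_1,-y_2)\bigr) &= \bigl(\avgd(x_1,y_1),\,\avgd(-x_2,-y_2)\bigr)\\
&= \bigl(\avgd(x_1,y_1),\,-\avgd(x_2,y_2)\bigr),
\end{align*}
which lies in $T(S)$ because $(\avgd(x_1,y_1),\avgd(x_2,y_2)) \in S$ by $\avgd$-closedness of $S$. Case (ii) is identical with the coordinates swapped. Case (iii) is trivial since the reflection over $x_1=x_2$ just permutes coordinates and $\avgd$ acts componentwise. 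Case (iv) follows from applying the argument of (i) to both coordinates: using (a) twice,
\[
\avgd\bigl((-x_2,-x_1),(-y_2,-y_1)\bigr) = \bigl(-\avgd(x_2,y_2),\,-\avgd(x_1,y_1)\bigr),
\]
which is the image under the reflection of $(\avgd(x_1,y_1),\avgd(x_2,y_2)) \in S$. Alternatively one can obtain (iv) as the composition of (i), (ii), and (iii). Finally, case (v) uses (b) twice: for $(x_1+a_1,x_2+a_2),(y_1+a_1,y_2+a_2)$ in the translate,
\[
\avgd\bigl((x_1+a_1,x_2+a_2),(y_1+a_1,y_2+a_2)\bigr) = \bigl(\avgd(x_1,y_1)+a_1,\,\avgd(x_2,y_2)+a_2\bigr),
\]
which is the image under translation of $(\avgd(x_1,y_1),\avgd(x_2,y_2)) \in S$.

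There is no substantive obstacle here; the only care required is in the case analysis for identity (a), where one must track the fact that $\avgd$ is defined asymmetrically (by $\lceil\cdot\rceil$ when $x \ge y$ and by $\lfloor\cdot\rfloor$ when $x<y$) and verify that the swap of cases under negation lines up exactly with the swap of $\lceil\cdot\rceil$ and $\lfloor\cdot\rfloor$ under negation. Once this is checked, the lemma is a one-line verification for each transformation.
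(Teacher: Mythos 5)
Your proof is correct and follows essentially the same route as the paper's: both reduce the lemma to the equivariance identities $\avgd(-x,-y)=-\avgd(x,y)$ and $\avgd(x+c,y+c)=\avgd(x,y)+c$ and then check that each transformation commutes with the componentwise application of $\avgd$ (the paper packages this as the observation that any $f$ with $\avgd(f(x),f(y))=f(\avgd(x,y))$ maps $\avgd$-closed sets to $\avgd$-closed sets, and likewise obtains (iv) by composing (i)--(iii)). No gaps.
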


\begin{proof}
We use the easy-to-prove fact that  if  a map $f:\mathbb{Z}^2\to \mathbb{Z}^2$ satisfies $\avgd(f(x),f(y))=f(\avgd(x,y))$ for $x,y\in \mathbb{Z}^2$, 
then the image $f(S)$ is also $\avgd$-closed.
In fact, for all $x',y'\in f(S)$ there exist $x,y\in S$ such that $f(x)=x', f(y)=y'$. 
Since $S$ in $\avgd$-closed, $\avgd(x,y)\in S$, and thus, $\avgd(x',y')=\avgd(f(x),f(y))=f(\avgd(x,y))\in f(S)$. 

\begin{itemize}
 \item[(i)]
 Since $\lfloor -a\rfloor =-\lceil a\rceil$ for $a\in \mathbb{R}$, 
 $\avgd(-x,-y)=-\avgd(x,y)$ for $x,y\in \mathbb{Z}$. 
 Thus, we have $\avgd(\binom{x_1}{-x_2}, \binom{y_1}{-y_2})=\binom{\avgd(x_1,y_1)}{-\avgd(x_2,y_2)}$.
 Hence the reflection of $S$ over the $x_1$-axis is also $\avgd$-closed.
 \item[(ii)]
 This can be similarly proven as (i).
 \item[(iii)]
 This is  followed by $\avgd(\binom{x_2}{x_1}, \binom{y_2}{y_1})=\binom{\avgd(x_2,y_2)}{\avgd(x_1,y_1)}$.
 \item[(iv)]
 The reflection of $S$ over the $x_1=-x_2$ is a composition of the reflection of $S$ over the $x_1$-axis, the $x_2$-axis, and the line $x_1=x_2$. Thus, it follows from (i)-(iii).
 \item[(v)]
 It follows from $\avgd(\binom{x_1+a_1}{x_2+a_2}, \binom{y_1+a_1}{y_2+a_2})=\avgd(\binom{x_1}{x_2}, \binom{y_1}{y_2})+\binom{a_1}{a_2}$. 
\end{itemize}
\end{proof}

\subsubsection*{Another proof of \cref{lem:two-dim}}

Given the basic observations in \cref{subsubsec:basic-observations}, 
we are ready to prove \cref{lem:two-dim} in a more self-contained manner.

\begin{proof}[Proof of \cref{lem:two-dim}]
Assume that $S$ is $\avgd$-closed.
By \cref{prop:two-dim-IntConv=UTVPI}, it suffices to show that $S$ is integrally convex.
Let $x\in \cl_{\conv}(S)$.
We show $x\in \cl_{\conv}(S\cap N(x))$ in the following.
By Carathéodory's theorem there exist three elements $t_1,t_2,t_3\in S$ such that $x\in \cl_{\conv}(\{t_1,t_2,t_3\})$. 

Now we will show the $\avgd$-closure $\cl_{\avgd}(\{t_1,t_2,t_3\})$ of $\{t_1,t_2,t_3\}$ is equal to the UTVPI hull $\cl_{\utvpi}(\{t_1,t_2,t_3\})$ of it.
First we show a similar equality for  two-point case, i.e., $\cl_{\avgd}(\{t_1,t_2\})=\cl_{\utvpi}(\{t_1,t_2\})$. 
By \cref{cor:avg-in-UTVPI}, 
$\cl_{\avgd}(\{t_1,t_2\})$ is included in $\cl_{\utvpi}(\{t_1,t_2\})$. 
If a line  $\ell$ through  $t_1, t_2$ is parallel to $x_1$-axis, $x_2$-axis, the line $x_1=x_2$, or the line $x_1=-x_2$, it is clear that $\cl_{\avgd}(\{t_1,t_2\})$ consists of  all integer points between $t_1$ and $t_2$ on $\ell$ and $\cl_{\utvpi}(\{t_1,t_2\})$ is also the same set, and thus, $\cl_{\avgd}(\{t_1,t_2\})=\cl_{\utvpi}(\{t_1,t_2\})$. 
If not, from
\cref{lemma:refl-UTVPI} and  \cref{lemma:closed-avgd},
by reflecting and translating $\cl_{\avgd}(\{t_1,t_2\})$ and $\cl_{\utvpi}(\{t_1,t_2\})$ appropriately, 
it suffices to show the case of $t_1=(0,0)$ and $t_2=(p,q)$ where $q>p>0$. 
In this case, $\cl_{\utvpi}(\{t_1,t_2\})$ consists integer points $(x_1,x_2) $ that satisfy 
\[0 \le x_1 \le p, x_1 \le x_2 \le x_1+q-p. \]
By $\cl_{\avgd}(\{t_1,t_2\})\subseteq \cl_{\utvpi}(\{t_1,t_2\})$, 
all points in $\cl_{\avgd}(\{t_1,t_2\})$  also satisfy these inequalities.

First  we will show $(0,q-p)\in \cl_{\avgd}(\{t_1,t_2\})$. 
If we show some point $(0,r)$ is in $\cl_{\avgd}(\{t_1,t_2\})$ where $0<r\le q-p$, 
we retake $t_1=(0,r), t_2=(p,q)$ and by induction on $q$, 
we have $(0,q-p)\in \cl_{\avgd}(\{t_1,t_2\})$. 
And if we show there exists $(\tilde{p},\tilde{q}) \in \cl_{\avgd}(\{t_1,t_2\})$ that satisfies $\tilde{q}>\frac{q}{p} \tilde{p}$, then, 
since $\tilde{p}<p$, by induction on $p$, we have $(0,r)\in \cl_{\avgd}(\{t_1,t_2\})$. 
Thus, we will show that such $(\tilde{p},\tilde{q})$ is in $\cl_{\avgd}(\{t_1,t_2\})$.
When $p,q$ are both even, $\avgd(p,q)=(\frac{p}{2}, \frac{q}{2})\in \cl_{\avgd}(\{t_1,t_2\})$ and 
by repeating halving the point, we have a $(p', q')\in \cl_{\avgd}(\{t_1,t_2\})$ such that at least  one of $p'$ or $q'$ is odd number. 
First we consider the case when $p'$ is odd. Then
$\avgd((0,0),(p',q'))=(\frac{p'}{2}-\frac{1}{2}, \lfloor \frac{q'}{2}\rfloor)$ 
and 
\[ \lfloor \frac{q'}{2}\rfloor\ge \frac{q'}{2}-\frac{1}{2}=\frac{q}{p}(\frac{p'}{2}-\frac{1}{2})+\frac{1}{2}(\frac{q}{p}-1)>\frac{q}{p}(\frac{p'}{2}-\frac{1}{2})\]
which means we obtain desirable  $(\tilde{p},\tilde{q})=(\frac{p'}{2}-\frac{1}{2}, \lfloor \frac{q'}{2}\rfloor)$.
Otherwise; $p'$ is even and $q'$ is odd, 
and 
$\avgd((p',q'),(0,0))=(\frac{p'}{2}, \frac{q'}{2}+\frac{1}{2})$. Here
\[\frac{q'}{2}+\frac{1}{2}>\frac{q}{p}\cdot \frac{p'}{2}\]
and we again obtain desirable  $(\tilde{p},\tilde{q})=
(\frac{p'}{2}, \frac{q'}{2}+\frac{1}{2})$.
Hence $(0,q-p)$ is in $\cl_{\avgd}(\{t_1,t_2\})$. 
Now, $(p,p)$ is also in $\cl_{\avgd}(\{t_1,t_2\})$, since
it is obtained by translating $(0,q-p)$ by $(-p,-q)$ and reflect it over $x_1$-axis and $x_2$-axis.
Thus, all vertices of $\cl_{\utvpi}(\{t_1,t_2\})$ is in $\cl_{\avgd}(\{t_1,t_2\})$.
Now, since the boundary of $\cl_{\utvpi}$ is parallel to 
$x_1$-axis, $x_2$-axis, the line $x_1=x_2$, or the line $x_1=-x_2$, 
all points on the boundaries are in $\cl_{\avgd}(\{t_1,t_2\})$.  
Finally, any inner point $(x_1,x_2)\in \cl_{\utvpi}(\{t_1,t_2\})$ lies  
on the line segment that connects 
$(x_1,x_1)$ and $(x_1,x_1+q-p)$
which are on the boundaries, 
implying that 
$(x_1,x_2)$ is in $\cl_{\avgd}(\{t_1,t_2\})$.  
Therefore, we have $\cl_{\avgd}(\{t_1,t_2\})=\cl_{\utvpi}(\{t_1,t_2\})$. 

Next, we show $\cl_{\avgd}(\{t_1,t_2,t_3\})=\cl_{\utvpi}(\{t_1,t_2,t_3\})$. 
By \cref{cor:avg-in-UTVPI}, 
$\cl_{\avgd}(\{t_1,t_2,t_3\})\subseteq \cl_{\utvpi}(\{t_1,t_2,t_3\})$, 
and by the same argument of the two points case, 
it is enough to show any vertex $v$ of $\cl_{\utvpi}(\{t_1,t_2,t_3\})$ is in $\cl_{\avgd}(\{t_1,t_2,t_3\})$. 
When $v$ coincides with one of $t_1,t_2,t_3$, $v$ is trivially in $\cl_{\avgd}(\{t_1,t_2,t_3\})$.
Thus, we assume $v$ equals none of $t_1,t_2,t_3$. 
If necessary, by retaking the indexes of $t_1,t_2,t_3$, $v$ is the intersection of boundary through $t_1$ and that through $t_2$. 
Let $\theta$ be the angle between the two boundaries. 
Then, $\theta$ is one of $\frac{1}{4}\pi, \frac{1}{2}\pi,$ or $\frac{3}{4}\pi$,  
but  we show that  $\theta$ must be $\frac{3}{4}\pi$. 
In fact, assume $\theta=\frac{1}{4}\pi$. 
By 
\cref{lemma:refl-UTVPI} and \cref{lemma:closed-avgd}, 
we may assume $v=(0,0)$, $t_1$ lies on the positive part of $x_1$-axis, and $t_2$ lies on the line $x_1=x_2$ and is in the first quadrant. 
Then, $t_1,t_2,t_3$ satisfy $x_1\ge 1$, and thus, $v=(0,0)$ is not in $\cl_{\utvpi}(\{t_1,t_2,t_3\})$, a contradiction. 
Moreover, let $\theta=\frac{1}{2}\pi$. 
Again we may assume (i) $v=(0,0)$, and $t_1$ and $t_2$ lie on the positive parts of $x_1$-axis and $x_2$-axis,  respectively or (ii) $v=(0,0)$, 
$t_1$  lies on the line $x_1=x_2$ and is in the first quadrant, and 
$t_2$  lies on the line $x_1=-x_2$ and is in the forth quadrant.   
In case (i)
 $t_1,t_2,t_3$ satisfy $x_1+x_2\ge 1$, 
in case (ii)
$t_1,t_2,t_3$ satisfy $x_1\ge 1$
, and thus, in both cases $v=(0,0)$ is not in $\cl_{\utvpi}(\{t_1,t_2,t_3\})$, a contradiction.
Hence $\theta=\frac{3}{4}\pi$. 
We assume $v=(0,0)$, $t_1=(a,0)$, and $t_2=(-b,b)$ where $a,b$ are positive integers. 
Then by the case of two points, $v=(0,0)\in \cl_{\avgd}(\{t_1,t_2\})\subseteq \cl_{\avgd}(\{t_1,t_2,t_3\})$. 
Therefore, all vertices of $\cl_{\utvpi}(\{t_1,t_2,t_3\})$ are in $\cl_{\avgd}(\{t_1,t_2,t_3\})$. 
Therefore, we have $\cl_{\avgd}(\{t_1,t_2,t_3\})=\cl_{\utvpi}(\{t_1,t_2,t_3\})$. 

Now $x\in \cl_{\conv}(\{t_1,t_2,t_3\})\subseteq \cl_{\conv}(\cl_{\avgd}(\{t_1,t_2,t_3\}))$. 
By $\cl_{\avgd}(\{t_1,t_2,t_3\})=\cl_{\utvpi}(\{t_1,t_2,t_3\})$ and \cref{prop:two-dim-IntConv=UTVPI}, $\cl_{\avgd}(\{t_1,t_2,t_3\})$ is integrally convex and thus
$x\in \cl_{\conv}(\cl_{\avgd}(\{t_1,t_2,t_3\})\cap N(x))$. 
Since $S$ is $\avgd$-closed, 
$\cl_{\avgd}(\{t_1,t_2,t_3\})\subseteq S$ and thus
$x\in \cl_{\conv}(S\cap N(x))$.
Therefore, $S$ is integrally convex.
\end{proof}

\end{document}